\documentclass[a4paper,12pt]{article}

\usepackage[greek.ancient, main=english]{babel}
\usepackage[utf8]{inputenc}

\usepackage{gensymb}

\usepackage[lighttt]{lmodern}
\usepackage{listings}

\usepackage{lipsum}
\usepackage{courier}
\lstset{basicstyle=\footnotesize\ttfamily,breaklines=true,upquote=true,morestring=[b]',showstringspaces=false}
\lstset{framextopmargin=50pt,frame=bottomline}

\usepackage{amsfonts}
\usepackage{amsmath}
\usepackage{amssymb}
\usepackage[ruled,lined,boxed,commentsnumbered]{algorithm2e}
\usepackage{csquotes}
\usepackage{amsthm}
\usepackage{tikz}
\usepackage{stackengine}
\usepackage{bm}
\usepackage{mathdots}
\usepackage{comment}
\usepackage{xcolor}
\usepackage{upgreek}
\usepackage{tikz-qtree}
\usepackage{outlines}
\usepackage{enumitem}
\usepackage{graphicx}
\usepackage[export]{adjustbox}
\usepackage{verbatim}
\usepackage{hhline}
\usepackage{booktabs}
\usepackage[cmtip,all]{xy}
\usepackage{latexsym}
\usepackage{mathtools}
\usepackage{varioref}
\usepackage{float}
\usepackage{textcomp}
\usepackage{xparse}
\usepackage{xspace}
\usepackage{url}

\SetKwProg{Fn}{Function}{}{end}\SetKwFunction{FMPPPF}{FnMPPPF}%
\SetAlgoLongEnd

\SetKwInput{KwInput}{Input}
\SetKwInput{KwOutput}{Output}

\newcommand{\longsquiggly}{\xymatrix{{}\ar@{~>}[r]&{}}}

\newcommand{\downleadsto}{%
\mathrel{\reflectbox{\rotatebox[origin=c]{270}{$\leadsto$}}}}

\newcommand*{\myfrown}{\raisebox{.1em}
{\smash{\scalebox{.8}{${\;\smallfrown\;}$}}}}

\DeclareMathAlphabet{\mathbbmsl}{U}{bbm}{m}{sl}

\usetikzlibrary{decorations.text, calc, automata,positioning,arrows, arrows.meta}

\lstset{language=Python}

\theoremstyle{definition}
\newtheorem{definition}{Definition}[section]
\newtheorem{theorem}{Theorem}[section]
\newtheorem{lemma}[theorem]{Lemma}

\newtheorem{prop}{Proposition}
\theoremstyle{remark}
\newtheorem*{remark}{Remark}

\newtheorem{example}{Example}[section]

\NewDocumentCommand{\gNr}{O{} o}{
  \ensuremath{%
    \upgamma_{\mathbb{N}_r}%
    \IfValueT{#2}{^{#2}}%
    \IfBlankTF{#1}{}{%
      \!\left(#1\right)%
    }%
  }%
}

\NewDocumentCommand{\aNr}{O{} o}{
  \ensuremath{%
    \upalpha_{\mathbb{N}_r}%
    \IfValueT{#2}{^{#2}}%
    \IfBlankTF{#1}{}{%
      \!\left(#1\right)%
    }%
  }%
}

\NewDocumentCommand{\aNrMin}{O{} o}{
  \ensuremath{%
    \upalpha_{\mathbb{N}_{r_{\text{min}}}}%
    \IfValueT{#2}{^{#2}}%
    \IfBlankTF{#1}{}{%
      \!\left(#1\right)%
    }%
  }%
}

\NewDocumentCommand{\gSr}{O{} o}{
  \ensuremath{%
    \upgamma_{\mathbb{S}_r}%
    \IfValueT{#2}{^{#2}}%
    \IfBlankTF{#1}{}{%
      \!\left(#1\right)%
    }%
  }%
}

\NewDocumentCommand{\aSr}{O{} o}{
  \ensuremath{%
    \upalpha_{\mathbb{S}_r}%
    \IfValueT{#2}{^{#2}}%
    \IfBlankTF{#1}{}{%
      \!\left(#1\right)%
    }%
  }%
}

\NewDocumentCommand{\aSrMin}{O{} o}{
  \ensuremath{%
    \upalpha_{\mathbb{S}_{r_{\text{min}}}}%
    \IfValueT{#2}{^{#2}}%
    \IfBlankTF{#1}{}{%
      \!\left(#1\right)%
    }%
  }%
}

\newcommand{\DrMin}{\ensuremath{\mathcal{D}_{r_{\text{min}}}}\xspace}
\newcommand{\DrQmin}{\ensuremath{\mathcal{D}_{r_{\text{qmin}}}}\xspace}

\makeatletter
\def\Ddots{\mathinner{\mkern1mu\raise\p@
\vbox{\kern7\p@\hbox{.}}\mkern2mu
\raise4\p@\hbox{.}\mkern2mu\raise7\p@\hbox{.}\mkern1mu}}
\makeatother

\hyphenation{ab-bre-vi-a-ted ex-per-i-men-ta-tion eval-u-a-tion per-mit-ting ra-tio-nal ra-tio-nals re-cur-siv-i-ty}

\begin{document}

\title{Recursive Prime Factorizations: Dyck Words as Representations of Numbers}
\author{Ralph [``Tim''] Leroy Childress, Jr.}
\date{\today}
\maketitle

\begin{abstract}
I propose a class of non-positional numeral systems where numbers are represented by Dyck words, with the systems arising from a recursive extension of prime factorization.  After describing two proper subsets of the Dyck language capable of uniquely representing all natural numbers and a superset of the rational numbers respectively, I consider ``Dyck-complete'' languages, in which every member of the Dyck language represents a number.  I conclude by suggesting possible research directions. 
\end{abstract}

\section{Introduction}

My fascination with patterns exhibited in the set of natural numbers $\mathbb{N} = \{0,1,2,3, \ldots\}$ led me to much experimentation and indeed quite a bit of frustration trying to discover and characterize such patterns.
One of the most perplexing problems I encountered was the inherent arbitrariness of positional numeral systems.  Consider the number 520:
\begin{equation}\label{def:520}
520 = 5 \times 10^{2}  \;+\; 2 \times 10^{1} \;+\; 0\times10^{0}.
\end{equation}

The implicit selection of 10 as base, though a convention tracing back to antiquity, reflects an arbitrary choice with consequences for patterns manifested in the representations.
 For instance, a well-known pattern is that if the sum of the digits in a decimal representation of a number is equal to a multiple of 3, then the number itself is divisible by 3; yet that is not the case with base-2 or base-5.  This and many other such patterns may be generalized to apply to numeral systems of any base $\ge 2$, but for most of us the generalization detracts from the immediacy of the realization.  It would be useful if the system of representation did not require any one number to assume undue importance above the others, so that patterns would directly reflect characteristics of those numbers under examination rather than being obscured by the selection of some other number to serve as ``the base.''

There is another drawback inherent in positional numeral systems with regard to their use to identify and characterize patterns among numbers. 
As is evident in Equation~\ref{def:520}, evaluation of a number's positional representation requires three distinct operations, namely exponentiation, multiplication and addition.
   However, the number represented by decimal 520 can be more simply represented by a unique product of prime numbers
\begin{equation} \nonumber
 2^{3} \times 5^{1} \times 13^{1},
\end{equation}
called its \emph{prime factorization},  the evaluation of which does not involve addition.
\begin{remark}
More precisely, the prime factorization of a number is unique up to the order of the factors.  Also, when I say ``the evaluation of which does not involve addition,'' I am referring to addition as a distinct operation in the evaluation;  obviously the multiplication of natural numbers may be viewed as iterated addition. 
\end{remark}

Thus I found myself on a quest to discover systems for representing numbers where the systems, being based upon prime factorization, neither involve the concept of a base nor require addition for evaluation.    As my quest narrowed, I sought such systems with alphabets of the smallest size.

I succeeded in my quest, discovering a class of systems I call ``Natural Recursive Prime Factorizations'' (``Natural RPFs''), where each of these systems can uniquely and exactly represent all members of $\mathbb{N}$ using a language with an alphabet of only two symbols.  I subsequently realized natural RPFs can be extended to yield another class of systems, ``Superrational Recursive Prime Factorizations,'' each of which is capable of uniquely and exactly representing not merely all members  of $\mathbb{N}$ but all members of a superset of the rational numbers.  A remarkable fact about superrational RPF systems is that, unlike decimal and other positional numeral systems, no enlargement of the alphabet is needed beyond that used for representing natural numbers; the same two-symbol alphabet is employed as for  natural RPF, without need for additional symbols such as negative signs, radix points, ellipses, overbars, or underbars.

I must warn the reader at the outset that these systems are impractical for application to the mundane tasks of everyday life, such as balancing checkbooks or enumerating street addresses. But they were never intended for such purposes; rather they were conceived to facilitate the study of patterns among numbers, from a perspective in which the structure of arithmetic is approached through the lens of formal languages and grammars rather than algebraic operations.
Natural RPF systems, for example, invite the analysis of their words using techniques from computer science such as context-free grammars, parsers and finite state machines, providing direct connections between numbers and subsets of the well-studied Dyck language $\mathcal{D}$, including $\mathcal{D}$ itself.   Words produced in these systems moreover do not involve an arbitrarily selected base, eliminate the necessity for addition in their evaluation, and are closely related to the (extended) prime factorizations of almost all the numbers they represent.

\begin{remark}
I say \emph{almost all} because 0 and 1 have no prime factorizations.  Also, note that I regard prime numbers as themselves having prime factorizations, the factorization of a prime number being the number itself, such that the  exponential form of the factorization of the $k$th prime $p_k$ is $p_k^1$.
\end{remark}
Many interesting patterns arise in number sequences defined according to properties shared by their members' representations as Dyck words (see Section~\vref{loc:InvestigateStripes}, for example), and related prime-factor-based representations have recently been used as structured arithmetic texts in studies pertaining to transformer learnability \cite{contucci2025statistical, breccia2025testing}.

\section{The Standard Minimal RPF Natural Interpretation $\text{RPF}_{\mathbb{N}_{r_{\text{min}}}}$}
I begin by presenting a system capable of representing natural numbers by unique finite sequences of left and right  parentheses.      For now I will refer to this system as ``minimal natural RPF,'' abbreviated $\text{RPF}_{\text{min}}$, in order to introduce the concept without first launching into a lengthy digression concerning languages and their interpretations.  In Section~\ref{loc:RPFNMinAndGamma}, I will identify the system more precisely as ``the standard minimal RPF natural interpretation $\text{RPF}_{\mathbb{N}_{r_{\text{min}}}}$.''

\subsection{Informal Treatment by Example}
If challenged to describe minimal natural RPF in one sentence, I might say: ``It is a numeral system in which  0 and 1 are represented by the empty string $\upepsilon$ and $()$ respectively, with every other natural number $n$ being written as a product of powers of consecutive primes from 2 up to and including  the  greatest prime factor of $n$, each exponential term being surrounded by a single pair of parentheses and nonzero exponents being recursively treated in the same fashion as described for $n$, with the final resulting expression being stripped of all symbols except the parentheses, which are then rewritten on one line while preserving their order from left to right."

I myself have difficulty digesting that long-winded sentence; let us instead consider three examples, these being collectively sufficient to suggest how an arbitrary natural number may be represented in minimal natural RPF. To start with, the representations of zero and one are given by explicit definition:
\begin{itemize}
\item  Zero is represented by the empty word $\upepsilon$.
\item One is represented by the word ${()}$.
\end{itemize}
The $\text{RPF}_{\text{min}}$ representation of every other natural number may be obtained by application of a recursive algorithm, as I will illustrate by finding the $\text{RPF}_{\text{min}}$ equivalent of decimal $520$.  But first I must introduce a function that will be used extensively in the algorithm.

\subsubsection{The Minimal Parenthesized Padded Prime Factorization}\label{loc:MPPPF}
We can express 520 as the exponential form of its prime factorization

\begin{equation}\label{exp:PrimeFact520}
p_{1}^{3}p_{3}^{1}p_{6}^{1},
\end{equation}
where $p_{k}$ is the $k$th prime number.  The expression includes powers of $p_{1}$, $p_{3}$ and $p_{6}$, but not of $p_{2}$, $p_{4}$ or $p_{5}$, since these last three do not contribute to the prime factorization of 520.  But let us rewrite Expression~\ref{exp:PrimeFact520} as
\begin{equation}\label{exp:PrimeFact520WithEmbeddedZeroes}
p_{1}^{3}p_{2}^{0}p_{3}^{1}p_{4}^{0}p_{5}^{0} p_{6}^{1},  \nonumber
\end{equation}
so that powers of all consecutive primes $p_{1}, \ldots, p_{m}$ are included, where $p_{m}$ is the greatest prime factor of 520.  Now let us use single pairs of parentheses as grouping symbols around each exponential term, giving
\begin{equation}\label{exp:MinPaddedPrimeFact520}
(p_{1}^{3})(p_{2}^{0})(p_{3}^{1})(p_{4}^{0})(p_{5}^{0})(p_{6}^{1}).
\end{equation}

Expression~\ref{exp:MinPaddedPrimeFact520} is the \emph{minimal parenthesized padded prime factorization} ($\text{MPPPF}$, pronounced ``MIP-fuh'') of 520.  It is \emph{minimal} because only powers of prime numbers up to and including the greatest prime factor of the number being represented are present, it is \emph{parenthesized} for the obvious reason that all exponential terms are enclosed in parentheses, and it is \emph{padded} because it includes exponential terms not appearing in the prime factorization.

While the number 1 has no prime factorization, and while we have already explicitly defined the representation of 1 to be $()$, it will be convenient to speak of MPPPF(1) in our algorithm.  We choose to define MPPPF(1) $ = (p_{1}^{0})$, with the result that the domain of MPPPF is the set of positive integers.

\begin{remark}
Although any prime number raised to the zeroth power equals 1, the selection of MPPPF(1) $ = (p_{1}^{0})$ reflects the fact that  $p_{1} = 2$ is the smallest such prime.
\end{remark}

Observe that there cannot be more than one MPPPF corresponding to a given number, since MPPPF(1) is unique and MPPPFs for all other numbers $n$ in the domain of MPPPF are the result of padding the unique prime factorization of $n$ with the 0th powers of those noncontributing primes less than $n$'s greatest prime factor. 

\subsubsection{Finding the RPF$_{\text{min}}$ Equivalent of Decimal 520} \label{loc:FindingRpfSpellingOf520}

We begin by expressing $520$ as its MPPPF
\begin{equation}
(p_{1}^{3})(p_{2}^{0})(p_{3}^{1})(p_{4}^{0})(p_{5}^{0})(p_{6}^{1}). \nonumber
\end{equation}

Our next step is to replace all nonzero exponents in the expression by their MPPPFs as well.  We repeat this step until there no longer exist opportunities to replace exponents by MPPPFs:

\begin{equation} \label{eqn:ReplacementsOfNonzeroExponentsByMpppfs}
\begin{aligned}
&(p_{1}^{3})(p_{2}^{0})(p_{3}^{1})(p_{4}^{0})(p_{5}^{0})(p_{6}^{1}) =\\ 
&(p_{1}^{     (p_{1}^{0})(p_{2}^{1})        })(p_{2}^{0})(p_{3}^{(p_{1}^{0})})(p_{4}^{0})(p_{5}^{0})(p_{6}^{(p_{1}^{0})}) =\\
&(p_{1}^{(p_{1}^{0})(p_{2}^{(p_{1}^{0})})})(p_{2}^{0})(p_{3}^{(p_{1}^{0})})(p_{4}^{0})(p_{5}^{0})(p_{6}^{(p_{1}^{0})}).\\
\end{aligned}
\end{equation}
Now we proceed to the next and final step, which is to treat the expression as a string and delete all symbols except parentheses from it, writing the parentheses all on one line while preserving their order from left to right to yield the RPF\textsubscript{min} word
\begin{equation} \nonumber \label{exp:RPFMinWordFor520}
(()(()))()(())()()(()).
\end{equation}

We may be certain (()(()))()(())()()(()) is the only minimal natural RPF word corresponding to decimal 520.  This is because the MPPPF of 520 is unique, and each nonzero exponent in Equation Set~\ref{eqn:ReplacementsOfNonzeroExponentsByMpppfs} has exactly one corresponding MPPPF.

\begin{remark}
If we had defined the MPPPF function so that the zeroth powers of primes $p_k$ were expressed as $(1)$ rather than $(p_k^{0})$, our recursive algorithm would still result in 520 being encoded as $(()(()))()(())()()(())$.
\end{remark}

\subsubsection{Finding the Decimal Equivalent of (()(()))()(())()()(())}

Having found the equivalent of decimal 520 in minimal natural RPF, let us go in the reverse direction, finding the  decimal equivalent of the $\text{RPF}_{\text{min}}$ word $(()(()))()(())()()(())$.

We begin by inserting 0 inside each empty matched pair of parentheses, yielding the expression
\begin{equation} \nonumber
((0)((0)))(0)((0))(0)(0)((0)).
\end{equation}

For the next step, we will treat expressions as containing zero or more ``clusters,'' by which I mean substrings beginning and ending with outermost matching parentheses; for example, the clusters from left to right in the expression above are ((0)((0))), (0), ((0)), (0), (0) and ((0)).   For each cluster $w_{k}$, we replace $w_{k}$ by the string
\begin{equation} \nonumber
p_{k}^{contents_{k}},
\end{equation}
where $contents_{k}$ is the string obtained by deleting the outermost parentheses of $w_{k}$.  We do this repeatedly to the successive expressions until all the parentheses are gone:

{
\begin{samepage}
\begin{equation} \nonumber
((0)((0)))(0)((0))(0)(0)((0))
\end{equation}
\begin{equation} \nonumber
\downleadsto
\end{equation}
\begin{equation} \nonumber
p_{1}^{(0)((0))}p_{2}^{0}p_{3}^{(0)}p_{4}^{0}p_{5}^{0}p_{6}^{(0)}
\end{equation}
\begin{equation} \nonumber
\downleadsto
\end{equation}
\begin{equation} \nonumber
p_{1}^{p_{1}^{0}p_{2}^{(0)}}p_{2}^{0}p_{3}^{p_{1}^0}p_{4}^{0}p_{5}^{0}p_{6}^{p_{1}^{0}}
\end{equation}
\begin{equation} \nonumber
\downleadsto
\end{equation}
\begin{equation} \nonumber
p_{1}^{p_{1}^{0}p_{2}^{p_{1}^{0}}}p_{2}^{0}p_{3}^{p_{1}^0}p_{4}^{0}p_{5}^{0}p_{6}^{p_{1}^{0}}.
\end{equation}
\end{samepage}
}

All that remains to be done is to evaluate the expression:
\begin{equation} \nonumber
p_{1}^{p_{1}^{0}p_{2}^{p_{1}^{0}}}p_{2}^{0}p_{3}^{p_{1}^0}p_{4}^{0}p_{5}^{0}p_{6}^{p_{1}^{0}} =  {2}^{3}\cdot{5}\cdot{13} = 520.
\end{equation}

\begin{remark}
In 2012, Devlin and Gnang \cite{devlin2012primes} defined a recursive structure called a \emph{tower factorization}, equal to a product of power towers such that the base and all exponents in the power towers are prime (except for the tower factorization of 1, which is defined to be 1 itself).
Tower factorizations bear a marked resemblance to products of power towers such as the one shown above, and indeed the latter structure may be regarded as an extension of the former one.

\end{remark}

We thus have a system capable of representing every natural number with an alphabet of only two symbols and not involving addition for evaluation.  This fact may not seem particularly significant, given that
the unary system of representing $n$  by $n$ contiguous marks uses an alphabet of only \emph{one} symbol.
But unlike unary, minimal natural RPF directly and succinctly reflects the prime factorizations of the numbers it represents (for those numbers having prime factorizations, which includes all members of $\mathbb{N}$ except 0 and 1).  Indeed, for all natural numbers $n$ greater than  one, the minimal natural RPF word representing $n$ contains not merely the prime factorization of $n$, but also the prime factorizations of all factorizable numbers involved in the prime factorization of $n$, the exponents in the prime factorization themselves being represented by their factorizations in recursive fashion.

\begin{remark}
See Table~\vref{tbl:SpellingsOfNaturals} for the minimal natural RPF representations of the first 20 natural numbers.  An ASCII file containing a longer table may be found among the ancillary files accompanying this paper.
\end{remark}

\subsubsection{A Preview of the Superrational Extension}  \label{loc:preview_of_superrational_extension}

Section~\ref{loc:StdMinRpfRationalInterpretation} explains in detail how minimal natural RPF can be extended to yield a system capable of exactly and uniquely representing all rational numbers, some algebraic irrationals, and even some transcendental numbers, again using the same alphabet of only two symbols.  I limit my discussion here to suggest the key concept upon which the extension is based.

Recall that MPPPF stands for ``Minimal Parenthesized Padded Prime Factorization''; the designation \emph{minimal} refers to the fact that only powers of prime numbers up to and including the greatest prime factor of the number being factorized are present.  Without that restriction, zeroth powers of primes greater than the greatest prime factor could appear.  Suppose we relax the restriction so we can opt to include the zeroth power of $p_{k+1}$, where $p_k$ is the greatest prime factor of the number being factorized.  We could thus express $520$  either minimally as
\enlargethispage{1\baselineskip}
\begin{equation}
(p_{1}^{3})(p_{2}^{0})(p_{3}^{1})(p_{4}^{0})(p_{5}^{0})(p_{6}^{1}) \nonumber
\end{equation}
or nonminimally as
\begin{equation}
(p_{1}^{3})(p_{2}^{0})(p_{3}^{1})(p_{4}^{0})(p_{5}^{0})(p_{6}^{1})(p_{7}^{0}). \nonumber
\end{equation}
Under this scheme, both $(()(()))()(())()()(())$ and $(()(()))()(())()()(())()$ evaluate to the natural number $520$, as empty parenthesis pairs correspond to $1$s.  Moreover, we know that a Dyck word ending with ${()}$---other than the word $()$ itself---does not require a terminal ${()}$ for its evaluation as a natural number.  This allows us to regard the final ${()}$ as a negative sign:
\begin{equation}
(()(()))()(())()()(())  \equiv 520, \nonumber
\end{equation}
while
\begin{equation}
(()(()))()(())()()(())()  \equiv -520. \nonumber
\end{equation}

Thus an optional empty pair of matched parentheses may be regarded as encoding sign.  But with recursive prime factorizations, sign propagates to exponents; as a result, negative powers are permitted.  For example, we shall see on page \pageref{ex:SpellingOfGelfondSchneider} that the transcendental number $2^{\sqrt{2}} = 2^{2^{2^{-1}}} $ may be represented as $(((()())))$.

\begin{remark}
Ancillary Jupyter notebooks are provided for both systems.
\end{remark}

\subsection{Interpretations and Representations}\label{loc:LanguagesAndInterpretations}
Recall that I introduced Equation~\ref{def:520} by writing ``Consider the number 520.'' My wording was intended as a device to illustrate an important point in the present section.  So conflated in our minds are numbers with their representations, and with their decimal representations in particular, that I suspect few readers were bothered by the phrase ``the number 520'' as being meaningless, or at best an incomplete abbreviation of  ``the number represented by 520 in the decimal numeral system.''  That is to say, most of us seldom stop to distinguish between \emph{numbers} and \emph{number words}.  But there is in fact a distinction, and to ignore it can yield untoward consequences.  For example, the set of natural numbers contains a unique multiplicative identity element $1$ such that $1\cdot n = n = n \cdot 1$  for all $n \in \mathbb{N}$.
But all of the members of the following set are in the decimal system, and all evaluate to $1$: $\{1, 01, 001, 0001, \ldots\}$.  There are then infinitely many ``decimal numbers'' (decimal number words) that can be considered the identity element for multiplication.  Thus we might be tempted to conclude that the set of natural numbers contains infinitely many multiplicative identity elements, despite the existence of simple proofs to the contrary.

It is especially important that we maintain the distinction in this paper, which is intimately concerned with numbers and different ways of representing them.  A sequence of symbols is one thing; what that sequence means is quite another.  For example, 11 can be understood to mean $11_{10}$ in decimal, $3_{10}$ in binary, or $17_{10}$ in hexadecimal.

I find ``meaning'' a difficult concept to state with precision, so instead I offer a definition of the word ``interpretation.''

\begin{definition} \label{def:Interpretation}
Let $L$ be a formal language, and let $S$ be a set.  If there exists some surjective function $f: L \rightarrow S$, 
then the triplet $(L,S,f)$ is \emph{an interpretation of L as S}, specifically \emph{the interpretation of L as S according to f}, and we may say any of the following:
\begin{itemize}
\item $L$ interpreted according to $f$ is $S$ (equivalently: $L$ is the \emph{underlying language} in the interpretation $(L, S, f)$).
\item $S$ is $L$ interpreted according to $f$ (equivalently: $S$ is the \emph{target set} in the interpretation $(L,S,f)$).
\item $f$ interprets $L$ as $S$ (equivalently: $f$ is the \emph{evaluation function} in the interpretation $(L,S,f)$).
\end{itemize}
\end{definition}

As an example, let $B$ denote the set of nonempty strings over the alphabet $\{0,1\}$, and let ${f : B \rightarrow \mathbb{N}}$, where $f({b})$ is the nonnegative integer corresponding to $b$ such that the latter is regarded as a word in unsigned binary.  Then ${(B,\mathbb{N}, f)}$ is an interpretation of $B$ as the set of natural numbers, specifically the interpretation of $B$ as the set of natural numbers according to  $f$.  Now consider ${g : B \rightarrow \mathbb{Z}}$, where $g({b})$ is the integer corresponding to $b$ such that the latter is regarded as a word in 2s-complement binary, with the qualification that nonnegative integers always correspond to words containing the prefix 0.  Then ${(B,\mathbb{Z}, g)}$ is an interpretation of $B$ as the set of integers, specifically the interpretation of $B$ as the set of integers according to $g$.  Thus we see that the same language may underlie multiple interpretations.

The following definition allows us to speak of interpretations in terms of set members as well.
\begin{definition} \label{def:InterpretationAsAppliedToMembers}
Let $(L,S,f)$ be an interpretation.  For any $l \in L$, let $s \in S$ such that $s = f(l)$. Then we may say any of the following:
\begin{itemize}
\item $s$ is $l$ interpreted according to $f$.
\item $l$ interpreted according to $f$ is $s$.
\item $f$ interprets $l$ as $s$.
\end{itemize}

\end{definition}

Thus the encoding of unsigned binary interprets 11111111 as decimal 255, whereas the same number word interpreted according to the encoding of signed binary is decimal -1.

Sometimes the terminology of interpretations becomes awkward, resulting in a surfeit of passive participles  (``$l$ interpreted as...''; ``$l$ interpreted according to...'').  We may remedy this to some extent by making use of the following definitions.

\begin{definition} \label{def:LanguageRepresent}
If $(L,S,f)$ is an interpretation, then we say \emph{L represents $S$ in $(L,S,f)$}, or, equivalently, \emph{$L$ is a representation of $S$ in $(L,S,f)$}.  In cases where we do not wish to mention a specific interpretation, we may simply say \emph{$L$ represents $S$}, or, equivalently, \emph{$L$ is a representation of $S$}.
\end{definition}

\begin{definition} \label{def:Represent}
Let $(L,S,f)$ be an interpretation, let $l \in L$, and let $s$ be a member of $S$ satisfying the equation $s = f(l)$.  Then we say \emph{l represents s in (L,S,f)}.  If we do not wish to mention a particular interpretation (as for example when the interpretation would be clear from the context), we may simply say \emph{l represents s}, implying some interpretation exists such that $l$ represents $s$ in that interpretation.
\end{definition}

Definition~\ref{def:Interpretation} only requires that the function $f$ be surjective.  I introduce special terminology for the case where $f$ is injective as well.

\begin{definition} \label{def:MinimalInterpretation}
Let $(L,S,f)$ be an interpretation such that $f$ is a bijection.  Then we may say any of the following:
\begin{itemize}
\item $(L,S,f)$ is \emph{minimal}.
\item \emph{L is minimal with respect to (L,S,f)}.
\item \emph{The representation of $S$ in $(L,S,f)$ is minimal}.
\item \emph{L is minimal}.
\end{itemize}
If $L$ is also the language underlying another interpretation $(L,T,g)$ such that $g$ is not a bijection, we may simply say \emph{L is quasiminimal}, implying two intepretations exist such that  $L$ is minimal with respect to one but not with respect to the other.
\end{definition}

If an interpretation $(L,S,f)$ is minimal, there is exactly one member of $L$ representing any given  $s \in S$;  otherwise we cannot exclude the possibility that $s$ may have multiple representations.  Regardless of whether the interpretation is minimal, the surjectivity of $f$ ensures that every member of $S$ has at least one representation in $L$.

The following definition clarifies what I understand by the word \emph{system} when I speak of RPF systems.
\begin{definition} \label{def:NumeralSystem}
Let $I$ be an interpretation such that the target set in $I$ is numerical.  Then we say $I$ is a \emph{numeral system}. We may simply say ``$I$ is a system,'' if doing so does not incur ambiguity.
\end{definition}

\subsubsection{Standard and Other Prime-permuted RPF Interpretations}

RPF systems are less arbitrary than positional numeral systems, not requiring the selection of a special number to serve as the base.  Yet some particular permutation of the sequence of prime numbers must be chosen for the interpretation of RPF words as numbers and  the representation of numbers by RPF words.
Returning momentarily to the concept of minimal parenthesized padded prime factorizations (Section~\ref{loc:MPPPF}), MPPPFs involve powers of primes appearing in the same order as those primes occur in the sequence
$(2,3,5, \ldots, p_{k})$, where $p_{k}$ is the greatest prime factor of the number of which the MPPPF is taken.
But the descending sequence $(p_{k}, \ldots, 5,3,2)$ could be used to yield a ``reverse MPPPF,''  so that RPF words produced using reverse MPPPFs would be mirror images of their equivalents as produced using MPPPFs---I mean ``mirror images'' literally, in the sense that if we wrote down a reverse-MPPPF-derived word on a piece of paper and viewed it in a mirror, we would see in the reflection an image identical to its unreflected MPPPF-derived counterpart.  We can therefore use either left-ascending or right-ascending sequences of primes upon which to base representations.  But we need not stop there; we could use the shortest-length sequence $S_{\text{swap}}$  in $(p_{2}, p_{1}, \ldots, p_{2j}, p_{2j-1}, \ldots)$ such that $p_{2}$ was the first term in $S_{\text{swap}}$ and all the primes from the least to the greatest prime factor of the argument of MPPPF were in $S_{\text{swap}}$.  In fact, any permutation of the sequence of prime numbers would suffice to determine the ordering of the prime powers in the definition of a proposed MPPPF.  Nevertheless, in order to avoid a profusion of symbols designating the choice of the underlying prime permutation, and to have common ground for discussing RPF systems,
it would be well to consider one sequence as the ``standard,'' with other permutations only being talked about when their existence was relevant to the discussion.

I select the \emph{identity permutation} $\bm{P} = (2,3,5,7,11 \ldots)$ of prime numbers, where the terms appear in the same order as they occur in the sequence of natural numbers, to be the standard permutation.  Indeed, we can regard $\bm{P}$ as not being a permutation at all, but the original sequence from which other prime sequences are derived by scrambling the terms in $\bm{P}$.  Because values of successive terms in  $\bm{P}$  increase as the terms are written in customary order from left to right, a standard RPF system can also be called a \emph{right-ascending} or \emph{rightwise} system.  This is why I will often refer to a standard RPF system using the subscript $r$, as in $\text{RPF}_{\mathbb{N}_{r_{\text{min}}}}$ and $\text{RPF}_{\mathbb{S}_{r_{\text{min}}}}$.

\begin{definition} \label{def:StandardPermutation}
The \emph{standard permutation}, also called the \emph{rightwise} or \emph{right-ascending permutation}, is the sequence $\bm{P}$ of prime numbers $(2,3,5,7,11, \ldots)$.
\end{definition}

\begin{remark}
Here we have an illustration of how the intimate relationship between prime factorizations and recursive prime factorizations results in parallels between the two.  The fundamental theorem of arithmetic states that a number's prime factorization is unique \emph{up to the order of the factors}, so that the prime factorization of 520 could be written variously as $2^3 \times 5^1 \times 13^1$, $13^1 \times 5^1 \times 2^3$,  etc. But in practice we usually write prime factorizations with the primes appearing in ascending order.
\end{remark}

In this paper I will sometimes refer to a rightwise interpretation without displaying the permutation subscript $r$, with the understanding that the underlying permutation is a rightwise one.  This practice is conceptually similar to that in which the decimal number word $123$ is implicitly understood to designate $123_{10}$. However, of course, the notation I introduce is only applicable to the scope of this paper.

On occasion I will designate an RPF interpretation based upon a particular but arbitrary permutation of the prime number sequence (with the identity permutation being one of the possibilities); in such cases I will use lowercase Greek letters in the notation, as with $\text{RPF}_{\mathbb{N}_{\upsigma_{\text{min}}}}$.

In all of this, we must take care to remember that when we speak of ``permutations,'' we are not referring to the RPF systems themselves but rather to the sequences of prime numbers underlying them.  I will therefore not refer to an RPF interpretation or its language as being a permutation, but rather as being \emph{prime-permuted}, or, if I am referring to a specific prime permutation $\upsigma$, as being \emph{$\upsigma$-permuted}.

\subsection{$\boldsymbol{\upgamma_{\mathbb{N}_{r}}}$ and $\text{RPF}_{\mathbb{N}_{r_{\text{min}}}}$} \label{loc:RPFNMinAndGamma}
\begin{remark}
This section is confined to a discussion of mathematical objects relevant to those RPF systems arising from the standard permutation.  See Section~\ref{loc:GeneralizationOfStdMinToAllPermutations} for generalizations of the same objects for all prime permutations.
\end{remark}

We can think of an RPF representation of a number as being a spelling of that number, with a spelling function mapping from the set of numbers onto the set of words in the language underlying the corresponding RPF interpretation.  If the interpretation is minimal, we can then use the resulting possible spellings to define the language as the image of the spelling function.

We will employ a modification of this approach to define the standard minimal RPF natural interpretation $\text{RPF}_{\mathbb{N}_{r_{\text{min}}}}$, as follows.  We will define a function $\upgamma'_{\mathbb{N}_{r}}$ (from \begin{greektext}γ\end{greektext} in \begin{greektext}ὀρθογραφία\end{greektext}, \emph{orthographia}, Greek for ``word'') mapping a natural number $n$ to a unique string of parentheses, and then we will define the language underlying $\text{RPF}_{\mathbb{N}_{r_{\text{min}}}}$ as the set of all possible strings produced by the function.  Since  $\upgamma'_{\mathbb{N}_{r}}$ takes a natural number input and outputs an $\text{RPF}_{\mathbb{N}_{r_{\text{min}}}}$ spelling of the number, we could choose to regard $\upgamma'_{\mathbb{N}_{r}}$ as our spelling function; however, because we wish the spelling function to have an inverse, we will instead define a spelling function $\upgamma_{\mathbb{N}_{r}}$ identical to $\upgamma'_{\mathbb{N}_{r}}$ except with its codomain restricted to $\text{RPF}_{\mathbb{N}_{r_{\text{min}}}}$.   Thus $\upgamma_{\mathbb{N}_{r}}$ will be a bijection, enabling us to speak of its inverse.

First I introduce two notations for convenience in concatenating strings; these will find extensive use throughout the rest of this paper.

\begin{definition} \label{def:Frown}
The symbol $\myfrown$ is the string concatenation operator; $a \myfrown b$ denotes the concatenation of strings $a$ and $b$.  The concatenation of $a$ and $b$ may also be written in customary fashion as $ab$, provided that doing so incurs no ambiguity.
\end{definition}

\begin{definition}\label{def:BigOPlus}
Let $j, k \in \mathbb{N}_{+}$.  Then we shall understand $\underset{i = j}{\stackrel{k}{\bigoplus}} {s_{i}}$ to mean the string concatenation $s_{j} \ldots{s_{k}}$ if $j \le k$; otherwise the concatenation is the null string $\upepsilon$.
\end{definition}

Without further ado, let us define the nonsurjective precursor to our spelling function.

\begin{definition}\label{def:NonsurjectiveTranscriptionFunction}  Let $\Sigma^{*}$ be the Kleene closure of the set $\{(,)\}$.  Then the \emph{standard nonsurjective RPF natural transcription function}, denoted by   $\upgamma'_{\mathbb{N}_{r}}$, is given by  $\upgamma'_{\mathbb{N}_{r}} : \mathbb{N} \rightarrow \Sigma^{*}$, where
   \begin{itemize}
      \item For $ n = 0$, $\upgamma'_{\mathbb{N}_{r}}(n)$ is the empty string $\upepsilon$.
      \item For $n = 1$, $\upgamma'_{\mathbb{N}_{r}}(n)$ is the string ${()}$.
      \item For $n > 1$, let $p_{m}$ be the greatest prime factor of $n$, and let $a = (a_{1}, \ldots, a_{m})$ be the integer sequence satisfying the equation
      \begin{equation} \nonumber
      n = \prod_{i = 1}^{m} p_{i}^{a_{i}}.
      \end{equation}
      Then
         \begin{equation}
            \upgamma'_{\mathbb{N}_{r}}(n)   ={ } \underset{i = 1}{\stackrel{m}{\bigoplus}} {(\;{'('}\myfrown       \upgamma'_{\mathbb{N}_{r}} (a_{i}})\myfrown{')'\;)}. \nonumber
        \end{equation}
   \end{itemize}
\end{definition}
\begin{equation} \nonumber  
{}
\end{equation}

Now we can define our bijective spelling function by specifying its graph.
\begin{definition} \label{def:NaturalSpellingFunction}
\emph{The standard RPF natural spelling function},
denoted by $\upgamma_{\mathbb{N}_{r}}$, is given by
\begin{equation}\nonumber
\upgamma_{\mathbb{N}_{r}} = \{(n,w) \in \mathbb{N} \times
\upgamma'_{\mathbb{N}_{r}}(\mathbb{N})
 \mid w = \upgamma'_{\mathbb{N}_{r}}(n)\},
\end{equation}
where $\upgamma'_{\mathbb{N}_{r}}(\mathbb{N})$ is the image of $\upgamma'_{\mathbb{N}_{r}}$.
\end{definition}

\begin{remark}
Definition~\ref{def:NaturalSpellingFunction} gives us our spelling function, but relies upon Definition~\ref{def:NonsurjectiveTranscriptionFunction} for computation of its values.
\end{remark}

At times it will be convenient to avoid function application notation to express RPF spellings.
\begin{definition} \label{def:SpellFn_no_parens}
Let $(L,S,f)$ be an interpretation such that the representation of $S$ is minimal, let $g$ be the inverse of $f$, and let $y \in S$.  Then the expression $g_{S_y}$ denotes $g(y)$.
\end{definition}

\begin{remark}
A reference implementation of Definition~\ref{def:NaturalSpellingFunction}  may be found among the ancillary files accompanying this paper.
\end{remark}

We can therefore write \gNr[k] as $\upgamma_{\mathbb{N}_{r_k}}$.

\begin{example}
Let us find the spelling corresponding to decimal 2646.

First we note that
\begin{equation} \nonumber
\begin{aligned}
2646 &= 2^{1} \cdot 3^{3} \cdot 5^{0} \cdot 7^{2 }  \\
        &= p_{1}^{1} \cdot p_{2}^{3} \cdot p_{3}^{0} \cdot p_{4}^{2}.
\end{aligned}
\end{equation}

Thus the standard RPF natural spelling of $2646_{10}$ is

\begin{equation} \label{eqn:Spell2646}
\begin{aligned}
\upgamma_{\mathbb{N}_{r_{2646}}}  & = (\upgamma_{\mathbb{N}_{r_{1}}})
(\upgamma_{\mathbb{N}_{r_{3}}})
(\upgamma_{\mathbb{N}_{r_{0}}})(\upgamma_{\mathbb{N}_{r_{2}}})\\
 & = (())((\upgamma_{\mathbb{N}_{r_{0}}})(\upgamma_{\mathbb{N}_{r_{1}}}))()((\upgamma_{\mathbb{N}_{r_{1}}})) \\
 & = (())(()(()))()((())).
\end{aligned}
\end{equation}
\end{example}

Before we define the standard minimal RPF natural interpretation, let us give a name to its underlying language.

\begin{definition} \label{def:LDMin}
The \emph{standard minimal RPF language}, denoted by $\mathcal{D}_{r_{\text{min}}}$, is the codomain of $\upgamma_{\mathbb{N}_{r}}$.
\end{definition}
\begin{remark}
The presence of the symbol $\mathcal{D}$ in the notation is not intended to signify that $\mathcal{D}_{r_{\text{min}}}$ is a Dyck language; it is rather intended to suggest a relationship between the standard minimal RPF language and the Dyck language. This relationship is the subject of Section~\ref{loc:DyckLangAndDMin}.  Also, notice that I neither included $\mathbb{N}$ in $\mathcal{D}_{r_{\text{min}}}$ nor used the word \emph{natural} in the appellation \emph{standard minimal RPF language}.  That is because the language has a nonnumerical alternative definition, as we will see in Section~\ref{loc:StdMinRPFLangReconsidered}.
\end{remark}

At long last we arrive at the definition of $\text{RPF}_{\mathbb{N}_{r_{\text{min}}}}$.

\begin{definition}\label{def:StdMinimalRPFNaturalInterpretation}
   The \emph{standard minimal RPF natural interpretation}, denoted by  $\text{RPF}_{\mathbb{N}_{r_{\text{min}}}}$, is the interpretation $(\mathcal{D}_{r_{\text{min}}}, \mathbb{N}, \upgamma^{-1}_{\mathbb{N}_{r}})$.
\end{definition}

\begin{table}[htb]
\centering
\begin{tabular}{@{}llll@{}}
\toprule
Decimal & $\text{RPF}_{\mathbb{N}_{r_{\text{min}}}}${\,\,\,\,\,\,\,\,\,\,\,\,\,\,\,\,\,\,\,\,\,\,\,\,\,\,\,\,}        & Decimal & $\text{RPF}_{\mathbb{N}_{r_{\text{min}}}}$                \\ \midrule
0       & $\upepsilon$          & 10      & (())()(())         \\
1       & ()         & 11      & ()()()()(())       \\
2       & (())       & 12      & ((()))(())         \\
3       & ()(())     & 13      & ()()()()()(())     \\
4       & ((()))     & 14      & (())()()(())       \\
5       & ()()(())   & 15      & ()(())(())         \\
6       & (())(())   & 16      & (((())))           \\
7       & ()()()(()) & 17      & ()()()()()()(())   \\
8       & (()(()))   & 18      & (())((()))         \\
9       & ()((()))   & 19      & ()()()()()()()(()) \\ \bottomrule
\end{tabular}
\caption{Standard RPF spellings of the first twenty natural numbers.}
\label{tbl:SpellingsOfNaturals}
\end{table}

\subsection{The Language $\mathcal{D}$ and the Dyck Natural Numbers} \label{loc:DyckLangAndDMin}
 
Table~\vref{tbl:SpellingsOfNaturals} shows the standard minimal RPF spellings of the first twenty natural numbers, suggesting a resemblance between words in $\mathcal{D}_{r_{\text{min}}}$ and those in the Dyck language $\mathcal{D}$; indeed, $\mathcal{D}_{r_{\text{min}}}$ is a proper subset of $\mathcal{D}$.  In light of this fact, and in light of the relevance of the Dyck language to all RPF systems, I provide here a brief description of $\mathcal{D}$.
\begin{remark}
We should avoid drawing too many conclusions from merely looking at the RPF spellings in Table~\ref{tbl:SpellingsOfNaturals}.  For instance, we might conjecture that the RPF spelling of every natural number greater than 0 is longer than its decimal counterpart; however, we may disprove the conjecture by citing the counterexample
\begin{equation} \nonumber
                    \upgamma_{\mathbb{N}_r}(443426488243037769948249630619149892803)  =   {()(()(((()))))}.
\end{equation}
\end{remark}

Recall Section~\ref{loc:FindingRpfSpellingOf520}, where we found an RPF representation of the natural number represented by decimal $520$.  In doing so, we wrote 520 as the expression
\begin{equation} \nonumber
(p_{1}^{(p_{1}^{0})(p_{2}^{(p_{1}^{0})})})(p_{2}^{0})(p_{3}^{(p_{1}^{0})})(p_{4}^{0})(p_{5}^{0})(p_{6}^{(p_{1}^{0})})
\end{equation}
and then deleted everything except the parentheses to yield
\begin{equation} \nonumber
(()(()))()(())()()(()).
\end{equation}
The result was a word in the Dyck language $\mathcal{D}$, the set of all strings consisting of zero or more well-balanced parenthesis pairs.  Informally, we can think of the property of being well-balanced as what distinguishes a syntactically correct use of grouping parentheses in an algebraic expression from a syntactically incorrect one.  For example, the expression 
$(p_{1}^{(p_{1}})$ is nonsensical, the number of closing parentheses not equaling the number of opening parentheses; thus the string $(()$ is not well-balanced and is  not a word in the Dyck language.  The expression 
${p_{4}^{0})(p_{5}^{0}}$ is also nonsensical, even though the number of opening and closing parentheses is equal, because the closing parenthesis is not preceded by a matching opening parenthesis.  Thus the string $)($, not being well-balanced, is not a word in $\mathcal{D}$.
\begin{remark}
The symbols in the language do not have to be parentheses; various authors use parentheses, square brackets, 1s and 0s, etc.  Any binary symbol set will suffice.
\end{remark}

\begin{definition} \label{def:DyckLanguage}
Let $\Sigma^{*}$ be the Kleene closure of the alphabet $\{(,)\}$.  
The \emph{Dyck language}, denoted by $\mathcal{D}$, is the set of all $w \in  \Sigma^{*}$ such that the number of right parentheses in any prefix $w'$ of $w$ does not exceed the number of left parentheses in $w'$, and the number of left parentheses in $w$ is equal to the number of right parentheses in $w$.
\end{definition}

The Dyck language is context-free and is generated by the following grammar, where $\upepsilon$ is the trivial Dyck word containing no symbols:
\begin{equation} \label{eqn:DyckCFG} \nonumber
S \;\rightarrow\;  {(}S{)}S\; \mid \; \upepsilon.
\end{equation}

Now that I have provided a bare-bones introduction to $\mathcal{D}$, I return to my discussion concerning the relationship between that language and $\mathcal{D}_{r_{\text{min}}}$.

The function $\upgamma_{\mathbb{N}_{r}}$ is recursive.  In order that I might discuss sequences of recursive function evaluations, I now develop special notation and vocabulary.  Recall Equation Set~\ref{eqn:Spell2646}, which showed the steps in the evaluation of $\upgamma_{\mathbb{N}_{r}}(2646)$:
\begin{equation} \nonumber
\begin{aligned}
\upgamma_{\mathbb{N}_{r_{2646}}}  & = (\upgamma_{\mathbb{N}_{r_{1}}})
(\upgamma_{\mathbb{N}_{r_{3}}})
(\upgamma_{\mathbb{N}_{r_{0}}})(\upgamma_{\mathbb{N}_{r_{2}}})\\
 & = (())((\upgamma_{\mathbb{N}_{r_{0}}})(\upgamma_{\mathbb{N}_{r_{1}}}))()((\upgamma_{\mathbb{N}_{r_{1}}})) \\
 & = (())(()(()))()((())).
\end{aligned}
\end{equation}

We can use these equations to construct a tree where every node represents an invocation of the spelling function to evaluate a number, with children of the node appearing in the same order as their invocations occur in the node's evaluation:
\medskip

\begin{center}
{
\Tree [.$\upgamma_{\mathbb{N}_{2646}}$ [.$\upgamma_{\mathbb{N}_{1}}$ ] [.$\upgamma_{\mathbb{N}_{3}}$ $\upgamma_{\mathbb{N}_{0}}$ $\upgamma_{\mathbb{N}_{1}}$ ] $\upgamma_{\mathbb{N}_{0}}$ [.$\upgamma_{\mathbb{N}_{2}}$ $\upgamma_{\mathbb{N}_{1}}$ ] ] 
} \label{loc:Recursion_tree_for_gNr_2646}
\end{center}
\medskip

And so we see that the invocation of  $\upgamma_{\mathbb{N}_{r}}$ to spell 2646 leads to $\upgamma_{\mathbb{N}_{r}}$ being invoked to spell 1, followed by $\upgamma_{\mathbb{N}_{r}}$ being invoked to spell 3, followed by $\upgamma_{\mathbb{N}_{r}}$ being invoked to spell 0, followed by $\upgamma_{\mathbb{N}_{r}}$ being invoked to spell 2, with the spellings of 3 and 2 leading recursively to further invocations.

\begin{definition} \label{def:RecursionTree}
Let $S$ be a set, let $f$ be a unary function such that the domain of $f$ is $S$, and let $s \in S$.  Then the \emph{recursion tree for the evaluation of} $f(s)$, also called the  \emph{recursion tree of} $f(s)$, is a tree $T$ in which every node represents an invocation of $f$ to evaluate a member of $S$ such that the invocation arises from the evaluation of $f(s)$, with the root of $T$ representing the invocation $f(s)$, the children of a node having the same order as they occur in the invocation, and all invocations of $f$ arising from the evaluation of $f(s)$ appearing in $T$.
\end{definition}

\begin{definition} \label{def:EntailsAndDirectlyEntails}
Let $f$ be a unary function such that the domain of $f$ is some set $S$, and let $T$ be the recursion tree for $f(s)$, where $s \in S$. 
Suppose there exist distinct nodes $A$ and $B$ in $T$ such that $A$ represents the invocation of $f$ to evaluate some $a \in S$ and $B$ represents the invocation of $f$ to evaluate some $b \in S$.
 If $B$ is a descendant of $A$, then we say $f(a)$ \emph{entails} $f(b)$.  If $B$ is a child of $A$, we may also say $f(a)$ \emph{directly entails} $f(b)$.
\end{definition}
\begin{example}
Referring to the recursion tree for the natural RPF spelling of 2646, we see that the spelling of 2646 entails the spelling of 0, since the spelling of 0 is a descendant of the spelling of 2646.  Moreover, the spelling of 2646 \emph{directly} entails the spelling of 0, since one of the children of the spelling of 2646 is the spelling of 0.  However, the spelling of 3 does not entail the spelling of 2, or vice versa; although both are nodes in the tree, neither is a descendant of the other.
\end{example}

\begin{definition} \label{def:RecursionChain}
Let $f$ be a unary function such that the domain of $f$ is some set $S$, and let $T$ be a recursion tree such that $(f(s_{1}), \ldots, f(s_{k}))$ is the sequence of nodes in a path from the root of $T$ to one of its leaves.
Then we call $(f(s_{1}), \ldots, f(s_{k}))$ a \emph{recursion chain} in the evaluation of $f(s_{1})$.
\end{definition}

\begin{example}
Referring to the recursion tree shown earlier for the natural RPF spelling of 2646, we see that $\upgamma_{\mathbb{N}_{r_{2646}}}$ directly entails $\upgamma_{\mathbb{N}_{r_{3}}}$, which in turn directly entails $\upgamma_{\mathbb{N}_{r_{0}}}$.  Furthermore, the tree is rooted at $\upgamma_{\mathbb{N}_{r_{2646}}}$, and $\upgamma_{\mathbb{N}_{r_{0}}}$ is a leaf.  Therefore $(\upgamma_{\mathbb{N}_{r_{2646}}},   \upgamma_{\mathbb{N}_{r_{3}}}, \upgamma_{\mathbb{N}_{r_{0}}})$ is a recursion chain in the evaluation of $\upgamma_{\mathbb{N}_{r_{2646}}}$.  Since $\upgamma_{\mathbb{N}_{r_{2646}}} = (())(()(()))()((()))$,  $\upgamma_{\mathbb{N}_{r_{3}}} = ()(())$, and $\upgamma_{\mathbb{N}_{r_{0}}} = \upepsilon$, we can also write the recursion chain as $({'}(())(()(()))()((())){'}, {'}()(()){'}, \upepsilon)$.
\end{example}

Now we have sufficient notation and vocabulary to prove that the standard minimal RPF language is a proper subset of the Dyck language.

\begin{theorem} \label{thm:StdMinNaturalRPFIsProperSubsetOfD}
The language $\mathcal{D}_{r_{\text{min}}}$ is a proper subset of the Dyck language $\mathcal{D}$.
\end{theorem}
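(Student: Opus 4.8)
first, that every word in $\mathcal{D}_{r_{\text{min}}}$ is a Dyck word (the inclusion $\mathcal{D}_{r_{\text{min}}} \subseteq \mathcal{D}$), and second, that there is at least one Dyck word not in $\mathcal{D}_{r_{\text{min}}}$ (properness). For the inclusion I would argue by strong induction on $n$, showing that $\upgamma'_{\mathbb{N}_r}(n) \in \mathcal{D}$ for every $n \in \mathbb{N}$; since $\mathcal{D}_{r_{\text{min}}}$ is by Definition~\ref{def:LDMin} exactly the image of $\upgamma_{\mathbb{N}_r}$, which has the same values as $\upgamma'_{\mathbb{N}_r}$, this suffices. The base cases $n = 0$ ($\upepsilon \in \mathcal{D}$) and $n = 1$ ($() \in \mathcal{D}$) are immediate. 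For $n > 1$, write $n = \prod_{i=1}^m p_i^{a_i}$ with each $a_i < n$, so the induction hypothesis gives $\upgamma'_{\mathbb{N}_r}(a_i) \in \mathcal{D}$ for each $i$. The key closure facts about $\mathcal{D}$ are that it is closed under concatenation and that if $w \in \mathcal{D}$ then $(w) \in \mathcal{D}$ — both follow directly from Definition~\ref{def:DyckLanguage} by tracking prefix parenthesis counts. Since $\upgamma'_{\mathbb{N}_r}(n)$ is, by Definition~\ref{def:NonsurjectiveTranscriptionFunction}, the concatenation $\bigoplus_{i=1}^m \bigl({'('} \myfrown \upgamma'_{\mathbb{N}_r}(a_i) \myfrown {')'}\bigr)$, it is a concatenation of words of the form $(w)$ with $w \in \mathcal{D}$, hence lies in $\mathcal{D}$.

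**For properness** I would simply exhibit a witness: the word $)($ is in $\mathcal{D}$? No — rather, I want a Dyck word that no natural number spells. The cleanest choice is $((()))()$, or even simpler, I would check small Dyck words against Table~\ref{tbl:SpellingsOfNaturals}. A convenient witness is $()()$: its "cluster" decomposition (in the sense of Section~\ref{loc:FindingRpfSpellingOf520}) is two copies of $()$, which would decode as $p_1^0 p_2^0 = 1 \cdot 1 = 1$; but $1$ is spelled $()$, not $()()$, and the spelling function is a bijection onto $\mathcal{D}_{r_{\text{min}}}$, so $()()$ cannot be in $\mathcal{D}_{r_{\text{min}}}$. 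One must be slightly careful here: I should verify from Definition~\ref{def:NonsurjectiveTranscriptionFunction} that $\upgamma'_{\mathbb{N}_r}(n)$ never equals $()()$ — indeed, for $n > 1$ the value ends in $\bigl(\upgamma'_{\mathbb{N}_r}(a_m)\bigr)$ with $a_m \geq 1$ (since $p_m$ is the \emph{greatest} prime factor, its exponent is nonzero), so the last cluster is nonempty, i.e., is not $()$; and the cases $n = 0, 1$ give $\upepsilon$ and $()$. Hence $()() \in \mathcal{D} \setminus \mathcal{D}_{r_{\text{min}}}$.

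**The main obstacle** I anticipate is not the induction itself but pinning down the two closure lemmas for $\mathcal{D}$ cleanly from the prefix-counting definition, and making the properness witness airtight. The subtle point for properness is that a naive "$()()$ decodes to $1$" argument implicitly uses a decoding procedure that hasn't been formally defined at this stage of the paper; so the rigorous route is instead to show directly, by inspecting the recursive structure of $\upgamma'_{\mathbb{N}_r}$, that the trailing cluster of any nonempty, non-$()$ word in the image is itself nonempty — equivalently, that $()()$ simply never appears as an output. This structural observation about the form of outputs (every output is either $\upepsilon$, or $()$, or a concatenation of clusters whose final cluster has nonempty interior) is the one genuinely new ingredient beyond routine Dyck-language bookkeeping, and I would isolate it as the crux of the properness half.
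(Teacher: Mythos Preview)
Your proposal is correct and follows essentially the same approach as the paper: both prove the inclusion by exploiting closure of $\mathcal{D}$ under parenthesization and concatenation applied to the recursive definition of $\upgamma'_{\mathbb{N}_r}$ (you frame this as strong induction on $n$, the paper as a recursion-tree argument, but the content is identical), and both prove properness via a Dyck word whose final outermost chunk is the empty pair $()$, together with the observation that for $n\ge 2$ the exponent $a_m$ of the greatest prime factor is nonzero so the last chunk of $\upgamma'_{\mathbb{N}_r}(n)$ is never empty. The only cosmetic difference is the choice of witness---you use $()()$ while the paper uses $(())()$---and your explicit remark about avoiding reliance on a not-yet-defined evaluation function is a nice bit of care that the paper handles implicitly.
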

\begin{proof}
In order for  $\mathcal{D}_{r_{\text{min}}}$ to be a proper subset of $\mathcal{D}$, every member of the former must also be a member of the latter, and at least one member of the latter must not be a member of the former.  Let us address these criteria separately:
\begin{itemize}
\item 
Definition~\ref{def:LDMin} identifies  $\mathcal{D}_{r_{\text{min}}}$ as the codomain of $\upgamma_{\mathbb{N}_{r}}$, which in turn is the image of the standard nonsurjective RPF natural transcription function  $\upgamma'_{\mathbb{N}_{r}}$ according to Definition~\ref{def:NonsurjectiveTranscriptionFunction}.  That latter definition contains three cases for the spelling of a natural number $n$. The first case gives the spelling of 0 as $\upepsilon$, while the second case gives the spelling of 1 as $()$; both of these are members of $\mathcal{D}$.
For every other natural number $n$, the recursion tree of the spelling of $n$ must have as its leaves members of the set ${\{\upgamma_{\mathbb{N}_{r_{0}}}, \upgamma_{\mathbb{N}_{r_{1}}}\}}$, because only the spellings of 0 and 1 do not entail further spellings.  The parent of a leaf in the recursion tree of the spelling of $n$ is a spelling where the leaf and its siblings are each surrounded by single pairs of matched parentheses, the parenthesized expressions then being concatenated in the same order as the siblings appear in the tree.  Enclosing a Dyck word in matched single parentheses results in a Dyck word, and the concatenation of Dyck words also results in a Dyck word; thus the parents of leaves are spellings of Dyck words.  The same process of surrounding children by parentheses and concatenating the resulting expressions to spell the parent applies at every level of the tree, so the spelling of $n$ is a Dyck word.  Thus the spellings of all members of $\mathbb{N}$ are Dyck words, allowing us to conclude that $\mathcal{D}_{r_{\text{min}}} \subseteq \mathcal{D}$.
\item The string $s = {'(())()'}$ will suffice as a counterexample to demonstrate that at least one member of $\mathcal{D}$ is not a member of  $\mathcal{D}_{r_{\text{min}}}$.  The string satisfies the definition of a Dyck word: it contains no symbols other than left and right parentheses, with the number of left parentheses equal to the number of right parentheses and with every prefix of $s$ containing at least as many left parentheses as right parentheses.  However, $s$ is not a member of $\mathcal{D}_{r_{\text{min}}}$, as it is not in the codomain of $\upgamma_{\mathbb{N}_{r}}$.  To see why the spelling function is incapable of producing $s$, we observe that $s$ contains the suffix ${'()'}$ and then try to find a number $n$ such that its spelling contains that suffix.  Let us treat the possibilities individually as follows.  Certainly the spelling of natural number 0 does not contain $'()'$ as a suffix, since $\upgamma_{\mathbb{N}_{r}}(0) = \upepsilon$. While it is true that  $\upgamma_{\mathbb{N}_{r}}(1) = {'()'}$ is a word containing suffix  $'()'$, the word does not equal $s$.  The final possibility is  $\upgamma_{\mathbb{N}_{r}}(n)$ for some $n \ge 2$.  We note that regardless of which such $n$ we choose, its spelling only includes exponents $a_{i}$ from $a_{1}$ up to and including $a_{m}$, where $p_{m}$ is the greatest prime factor of $n$.   However, spelling function $\upgamma_{\mathbb{N}_{r}}$ only outputs an empty parenthesis pair if it encounters the number 1, either as the original number being spelled or in the course of spelling a prime number raised to the $0$th power.  But $m$ cannot be 0, since any prime number $p$ raised to the $0$th power is equal to one and is therefore not present in the prime factorization of $n$, implying $p$ cannot be the greatest prime factor of $n$.  This in turn implies that the spelling of $n$ cannot contain the suffix $'()'$, and we can thus be certain that $s \in \mathcal{D}$ but $s \notin \mathcal{D}_{r_{\text{min}}}$.
\end{itemize}
We therefore conclude that $\mathcal{D}_{r_{\text{min}}} \subsetneq \mathcal{D}$.
\end{proof}

We will see (Theorem~\ref{thm:DomainEvalDyck}) that every Dyck word represents a natural number. 
But the fact that $\mathcal{D}_{r_{\text{min}}}$ contains the spelling of every $n \in \mathbb{N}$, together with the fact that $\mathcal{D}_{r_{\text{min}}} \subset \mathcal{D},$ implies there must exist natural numbers possessing multiple representations in $\mathcal{D}$ (indeed, 0 is the \emph{only} natural number having a unique representation in $\mathcal{D}$, every other natural number having infinitely many).  Therefore we cannot view the number-words in $\mathcal{D}$ as being equivalent to the numbers they represent; to do so would result in absurdities such as the existence of more than one value of the product of two integers.

On the other hand, $\mathcal{D}_{r_{\text{min}}}$, enjoying a 1:1 correspondence with $\mathbb{N}$,  may be treated as if its members are the natural numbers themselves.  For instance, the equation
\begin{equation} \nonumber
6 \cdot 2^{3} + 2 = 50
\end{equation}
is equivalent to
\begin{equation} \nonumber
{(())(())} \cdot {(())}^{()(())} + {(())} = {(())()((()))},
\end{equation}
and the existence of a unique identity element for natural multiplication can be stated by asserting that for all $w \in \mathcal{D}_{r_{\text{min}}}$,
\begin{equation} \nonumber
{()} \cdot w = w = w \cdot {()}.
\end{equation}
In short, we may regard our $\mathcal{D}_{r_{\text{min}}}$ number-words as numbers.
And so I give $\mathcal{D}_{r_{\text{min}}}$ a simpler name:
\begin{definition} \label{def:DyckNaturals}
The \emph{set of Dyck natural numbers} is the language $\mathcal{D}_{r_{\text{min}}}$.
\end{definition}
\begin{remark}
Although we may indeed treat Dyck natural numbers as if they are the actual numbers themselves, relying upon the minimality of the natural interpretation, I recognize that a distinction exists between numbers and their representations in a numeral system.  Similar terminology is already widely used when discussing positional numeral systems, as when we speak of ``binary numbers'' or ``decimal numbers,'' despite the fact that a 1:1 correspondence does not exist between numbers and number words in such interpretations.
\end{remark}
\begin{remark}
There existing infinitely many $\upsigma$-permuted minimal natural RPF languages, I could choose any one of them to be the Dyck naturals.  But doing so would be no more advantageous than, say,  reversing the order of writing digits in decimal numbers so that $102_{10}$ would instead be written as $201_{01}$.  Thus I select the language underlying the standard interpretation.
\end{remark}

\subsection{The Standard RPF Natural Evaluation Functions $\boldsymbol{\upalpha_{\mathbb{N}_{r}}}$
and $\boldsymbol{\upalpha_{\mathbb{N}_{r_\text{min}}}}$} \label{loc:StdNatEvalFuncs}

Definition~\ref{def:NaturalSpellingFunction}  gave us the standard RPF natural spelling function $\upgamma_{\mathbb{N}_{r}}$, which is a bijection and therefore for which there exists an inverse function $\upgamma^{-1}_{\mathbb{N}_{r}}$.  But even though we are aware that the inverse exists, we do not yet know how to compute its values; we would like to correct this deficiency, especially since $\upgamma^{-1}_{\mathbb{N}_{r}}$ is the evaluator in $\text{RPF}_{\mathbb{N}_{r_{\text{min}}}}$.  An evaluation function mapping $\mathcal{D}_{r_{\text{min}}}$ onto $\mathbb{N}$ will thus be useful.
First, though, I will define a function having a domain equal to $\mathcal{D}$ rather than  $\mathcal{D}_{r_{\text{min}}}$, because I intend the function to eventually serve as the evaluator in the standard general RPF natural interpretation.  I will then restrict the function to   $\mathcal{D}_{r_{\text{min}}}$ to yield the evaluation
 function we previously referred to as $\upgamma^{-1}_{\mathbb{N}_{r}}$.
 
The definition of the evaluation function involves regarding the input word as a concatenation of chunks and recursively evaluating these.  Therefore before we go any further we must define exactly what we mean by a ``chunk,'' which in turn requires a definition of the ``dimensionality'' of a Dyck word.

\begin{definition} \label{def:DimensionalityOfADyckWord}
The \emph{dimensionality} of a Dyck word $w$ is the number of outermost matching parenthesis pairs in $w$.  More formally, the dimensionality of $w$ is given by the function $\text{dim}: \mathcal{D} \rightarrow \mathbb{N}$ such that
\begin{itemize}
\item For $w = \upepsilon$,  $\text{dim}(w) = 0$.
\item For $w \neq \upepsilon$,  $\text{dim}(w) = k$, where $k$ satisfies
\begin{equation}
 w{}={}\underset{i = 1}{\stackrel{k}{\bigoplus}} {(\;'('\myfrown{d_{i}}\myfrown')'\;)} \nonumber
\end{equation}
for some sequence of Dyck words $(d_{1},{}\ldots, d_{k})$.
\end{itemize}
\end{definition}
\begin{remark}
Dyck words $d_{1},{}\ldots, d_{k}$ are certain to exist, because a criterion for the well-formedness of a Dyck word is that if the word can be expressed as a string $s$ enclosed within a matching pair of parentheses, then $s$ is also a Dyck word.  Furthermore, the sequence $d$ is unique, since there is only one set of outermost matching parenthesis pairs in $w$, and $d_{i}$ is the Dyck word contained within the $i$th outermost matching parenthesis pair.
\end{remark}

\begin{definition} \label{def:Chunk}
A \emph{chunk} is a Dyck word of dimensionality 1.
\end{definition}

\begin{lemma} \label{lem:ChunkContainsDyckWordLenTwoLess}
If $w$ is a chunk, $w = $ $'('\myfrown d \myfrown ')'$ for some Dyck word $d$, and $|d| = |w|-2$.
\end{lemma}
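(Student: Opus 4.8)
The plan is to unwind the definitions of \emph{chunk} and \emph{dimensionality} and then read off the claim. First I would invoke Definition~\ref{def:Chunk}: since $w$ is a chunk, $\text{dim}(w) = 1$. Because $w$ has dimensionality $1$, in particular $w \neq \upepsilon$, so the second clause of Definition~\ref{def:DimensionalityOfADyckWord} applies with $k = 1$; this gives a sequence of Dyck words $(d_1)$ of length one such that
\begin{equation} \nonumber
w = \underset{i=1}{\stackrel{1}{\bigoplus}}\, (\,'('\myfrown d_i \myfrown ')'\,) = {'('}\myfrown d_1 \myfrown {')'}.
\end{equation}
Setting $d = d_1$, which is a Dyck word, we immediately obtain $w = {'('}\myfrown d \myfrown {')'}$, establishing the first half of the statement. (The remark following Definition~\ref{def:DimensionalityOfADyckWord} already guarantees such a $d$ exists and is unique, so there is no ambiguity in the choice.)

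For the length claim, I would argue that concatenation adds lengths: $|w| = |{'('}| + |d| + |{')'}| = 1 + |d| + 1 = |d| + 2$, and rearranging gives $|d| = |w| - 2$. This uses only the elementary fact that $|a \myfrown b| = |a| + |b|$ for strings $a, b$, together with the observation that the single-symbol strings $'('$ and $')'$ each have length $1$.

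I do not anticipate a genuine obstacle here; the lemma is essentially a restatement of the $k=1$ case of the dimensionality definition combined with additivity of string length. The only point requiring a sentence of care is justifying that the sum $\underset{i=1}{\stackrel{1}{\bigoplus}}$ collapses to the single term indexed by $i = 1$ (immediate from Definition~\ref{def:BigOPlus}, since $1 \le 1$), and noting that $d$ being a Dyck word is part of what Definition~\ref{def:DimensionalityOfADyckWord} supplies rather than something needing separate verification.
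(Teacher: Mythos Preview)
Your proposal is correct and follows essentially the same route as the paper's own proof: both invoke the definition of a chunk to get $\text{dim}(w)=1$, apply the $k=1$ case of Definition~\ref{def:DimensionalityOfADyckWord} to write $w = {'('}\myfrown d_1 \myfrown{')'}$, set $d=d_1$, and then observe that stripping the two outer parentheses reduces the length by~$2$. If anything, you are slightly more explicit than the paper (noting $w\neq\upepsilon$ and spelling out additivity of length), but there is no substantive difference.
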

\begin{proof}
Being a chunk, $w$ is a Dyck word of dimensionality 1.  Therefore
\begin{equation}
\begin{aligned}
 w{}&={}\underset{i = 1}{\stackrel{1}{\bigoplus}} {(\;'('\myfrown{d_{i}}\myfrown')'\;)}\\ \nonumber
       & =  {'('\myfrown{d_{1}}\myfrown')'} \\ \nonumber
       & =  {'('\myfrown{d}\myfrown')'}, \text{where } d = d_{1}. \\ \nonumber
\end{aligned}
\end{equation}
Since $w = {'('\myfrown{d}\myfrown')'}$, $d$ obviously has two fewer parentheses than $w$; thus $|d| = |w|-2$. 
\end{proof}

\begin{definition} \label{def:Content}
The \emph{content} of a chunk $w$ is the Dyck word $d$ satisfying the equation ${'('\myfrown{d}\myfrown')'} = w$.
\end{definition}

\begin{definition} \label{def:StdEvalDyck}
The \emph{standard RPF natural evaluation} of a Dyck word $w$ is the function $\upalpha_{\mathbb{N}_{r}} : \mathcal{D} \rightarrow \mathbb{N}$ such that
\begin{itemize}
\item For $w = \upepsilon$,  $\upalpha_{\mathbb{N}_{r}}(w) = 0$.
\item For $w \neq \upepsilon$,
\begin{equation}
\upalpha_{\mathbb{N}_{r}}(w) = \prod_{i = 1}^{\text{dim}(w)} {p_{i}^{\upalpha_{\mathbb{N}_{r}}(d_{i})}}  \nonumber
\end{equation}
where $d_{i}$ is the content of the $i$th chunk in $w$.
\end{itemize}
\end{definition}
\begin{remark}
The function is named $\upalpha$ for \begin{greektext}ἀριθμός\end{greektext} (\emph{arithmos}), Greek for ``number.''
\end{remark}
\begin{example} \label{exm:EvaluateSpellingOf27}
Let us evaluate {$()(()(()))$}:
\begin{equation} \nonumber
\begin{aligned}
&\upalpha_{\mathbb{N}_{r}}('()(()(()))') = \prod_{i = 1}^{2} p_{i}^{         \upalpha_{\mathbb{N}_{r}}(d_{i})} \\
& = p_{1}^{          \upalpha_{\mathbb{N}_{r}}(\upepsilon)                    }
p_{2}^{          \upalpha_{\mathbb{N}_{r}}('()(())')                    } 
  = p_{1}^{0}      p_{2}^{       p_{1}^{             \upalpha_{\mathbb{N}_{r}}(\upepsilon)              }                                
p_{2}^{          \upalpha_{\mathbb{N}_{r}}('()')                    }}  \\
& = p_{1}^{0}    p_{2}^{p_{1}^{0}      p_{2}^{        p_{1}^{            \upalpha_{\mathbb{N}_{r}}(\upepsilon)                  }                                  }                    }   
  = p_{1}^{0}  p_{2}^{p_{1}^{0}   p_{2}^{  p_{1}^{0}   }   } \\
 & = p_{2}^{p_{2}} = 3^{3} = 27.
\end{aligned}
\end{equation}
\end{example}
Because it may not be obvious to the reader that the standard natural evaluation function is defined for all $d \in \mathcal{D}$, I offer the following:
\begin{theorem} \label{thm:DomainEvalDyck}
The domain of the standard RPF natural evaluation function $\upalpha_{\mathbb{N}_{r}}$ is $\mathcal{D}$.
\end{theorem}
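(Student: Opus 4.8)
The plan is to show that the recursive clauses of Definition~\ref{def:StdEvalDyck} in fact assign a well-defined value in $\mathbb{N}$ to \emph{every} $w \in \mathcal{D}$, so that the recursion is total; since $\upalpha_{\mathbb{N}_{r}}$ was declared with domain $\mathcal{D}$ and codomain $\mathbb{N}$, this totality is exactly what the theorem asserts. The natural vehicle is strong induction on $|w|$, the number of symbols occurring in $w$.

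For the base case I would take $|w| = 0$, which forces $w = \upepsilon$; the first clause of Definition~\ref{def:StdEvalDyck} then gives $\upalpha_{\mathbb{N}_{r}}(\upepsilon) = 0 \in \mathbb{N}$ with no recursive dependency, so there is nothing further to check. For the inductive step, let $w \in \mathcal{D}$ with $w \neq \upepsilon$, and assume $\upalpha_{\mathbb{N}_{r}}(v)$ is a well-defined natural number for every $v \in \mathcal{D}$ with $|v| < |w|$. By Definition~\ref{def:DimensionalityOfADyckWord} together with the remark following it, $\dim(w) = k$ for a unique $k \ge 1$, and there is a \emph{unique} sequence of Dyck words $(d_{1}, \ldots, d_{k})$ with $w = \underset{i=1}{\stackrel{k}{\bigoplus}} ({'('}\myfrown d_{i} \myfrown {')'})$; so the decomposition of $w$ into its chunks $c_{i} = {'('}\myfrown d_{i}\myfrown{')'}$ involves no arbitrary choices, and the content of $c_{i}$ is $d_{i}$ by Definition~\ref{def:Content}. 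Each $c_{i}$ is a chunk, so Lemma~\ref{lem:ChunkContainsDyckWordLenTwoLess} gives $d_{i} \in \mathcal{D}$ and $|d_{i}| = |c_{i}| - 2$; since $k \ge 1$ the chunks concatenate to $w$, whence $|c_{i}| \le |w|$ and therefore $|d_{i}| \le |w| - 2 < |w|$. The induction hypothesis then applies to each $d_{i}$, yielding $\upalpha_{\mathbb{N}_{r}}(d_{i}) \in \mathbb{N}$, so the finite product $\prod_{i=1}^{k} p_{i}^{\upalpha_{\mathbb{N}_{r}}(d_{i})}$ is a well-defined natural number — and this is precisely the value the second clause assigns to $\upalpha_{\mathbb{N}_{r}}(w)$.

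Finally I would observe that the two clauses of Definition~\ref{def:StdEvalDyck} are exhaustive and mutually exclusive on $\mathcal{D}$ (a word either is or is not $\upepsilon$), so every $w \in \mathcal{D}$ receives exactly one well-defined value in $\mathbb{N}$; hence the domain of $\upalpha_{\mathbb{N}_{r}}$ is $\mathcal{D}$, as claimed.

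The only genuine content — and thus the ``main obstacle,'' though a mild one — is supplying a strictly decreasing measure so that the recursion bottoms out; this is handled entirely by Lemma~\ref{lem:ChunkContainsDyckWordLenTwoLess}, which guarantees that passing from a chunk to its content drops the word length by exactly two. The one additional point worth stating explicitly is that the chunk decomposition used in the second clause is canonical, which is exactly the uniqueness assertion in the remark after Definition~\ref{def:DimensionalityOfADyckWord}; without it the value $\upalpha_{\mathbb{N}_{r}}(w)$ would not be unambiguously defined. Everything else is routine.
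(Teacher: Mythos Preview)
Your proof is correct and follows essentially the same approach as the paper: both arguments hinge on Lemma~\ref{lem:ChunkContainsDyckWordLenTwoLess} to show that passing from a chunk to its content strictly decreases word length, guaranteeing termination of the recursion. The paper phrases this via recursion trees and finite recursion chains rather than an explicit strong induction on $|w|$, but the content is the same; if anything, your inductive formulation is tidier, and your explicit invocation of the uniqueness of the chunk decomposition (which the paper leaves implicit) is a nice touch.
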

\begin{proof}
Let $S$ be the domain of $\upalpha_{\mathbb{N}_{r}}$.  We must show that $S \subseteq \mathcal{D}$ and $\mathcal{D} \subseteq{S}$.  The first of the two requirements is satisfied by the language of Definition~\ref{def:StdEvalDyck} itself, which explicitly restricts the domain of $\upalpha_{\mathbb{N}_{r}}$ to $\mathcal{D}$.   It therefore remains to be shown that  $\mathcal{D} \subseteq{S}$.  Toward that end, let $d$ be a member of $\mathcal{D}.$  Then either $d = \upepsilon$ or $d \neq \upepsilon$.
\begin{itemize}
\item For $d = \upepsilon$, $\upalpha_{\mathbb{N}_{r}}(d) $ is explicitly and uniquely defined to be $0$.
\item For $d \neq \upepsilon$, $\upalpha_{\mathbb{N}_{r}}(d) $ is recursively defined as $\prod_{i = 1}^{\text{dim}(d)} {p_{i}^{\upalpha_{\mathbb{N}_{r}}(e_{i})}}$,
where $e_{i}$ is the content of the $i$th chunk in d.  If $\upalpha_{\mathbb{N}_{r}}(e_{i})$ exists for all $i \in \{ {1,} {{ }\ldots{ }}, {\text{dim}(d)}\}$, then     $ \prod_{i = 1}^{\text{dim}(d)} {p_{i}^{\upalpha_{\mathbb{N}_{r}}(e_{i})}}$ exists as well, so that $\upalpha_{\mathbb{N}_{r}}(d)$ evaluates to a number.  Suppose that one of the terms $p_{i}^{\upalpha_{\mathbb{N}_{r}}(e_{i})}$ does not exist.  There are only two possible reasons for its nonexistence: either $e_{i}$ is not a Dyck word, or $\upalpha_{\mathbb{N}_{r}}(e_{i})$ involves an endless recursion of invocations and thus does not yield a value.  Let us show that neither of these supposed possibilities can be the case:
\begin{itemize}
\item
Lemma~\ref{lem:ChunkContainsDyckWordLenTwoLess} assures us that $e_{i}$, being the content of a chunk, must itself be a Dyck word.
\item
To see why the evaluation function cannot involve an endless recursion of invocations of itself, consider the evaluation of $()(()(()))$ from Example~\vref{exm:EvaluateSpellingOf27}.  
The recursion tree for $\upalpha_{\mathbb{N}_{r}}{({'}()(()(()))}{'})$ is:
\medskip
\begin{center}
\Tree [.$\upalpha_{\mathbb{N}_{r}}({'}()(()(())){'})$ $\upalpha_{\mathbb{N}_{r}}(\upepsilon)$ 
 [.$\upalpha_{\mathbb{N}_{r}}({'}()(()){'})$ $\upalpha_{\mathbb{N}_{r}}(\upepsilon)$ [.$\upalpha_{\mathbb{N}_{r}}({'}(){'})$ $\upalpha_{\mathbb{N}_{r}}(\upepsilon)$ ] ] ]
 \end{center}
 
 \medskip
 
The longest recursion chain arising from the evaluation of $()(()(()))$ is $(\upalpha_{\mathbb{N}_{r}}({'()(()(()))'}),
\upalpha_{\mathbb{N}_{r}}({'()(())'}),\upalpha_{\mathbb{N}_{r}}({'(())'}),\upalpha_{\mathbb{N}_{r}}({'()'}), \upalpha_{\mathbb{N}_{r}}(\upepsilon))$.  Now let $k$ be the greatest integer such that there is a recursion chain of length $k$ arising from the evaluation of an arbitrary Dyck word $d$.  Then $k$ must be finite, because the first term in the chain is the evaluation of a finite-length word and each subsequent term is the evaluation of the \emph{content} of a chunk from the evaluation of its predecessor, with the length of the content being 2 less than the length of the chunk from which it came, according to Lemma~\ref{lem:ChunkContainsDyckWordLenTwoLess}.   As the longest chain (or chains, if more than one chain is of length $k$) must be finite, the evaluation of $d$ cannot involve an infinite sequence of recursive invocations of $\upalpha_{\mathbb{N}_{r}}$ upon itself.
\end{itemize}
Since we have demonstrated that $\upalpha_{\mathbb{N}_{r}}(d)$ exists for every $d \in \mathcal{D}$, we have demonstrated that $\mathcal{D} \subseteq{S}$.
\end{itemize}

Having shown that $S \subseteq \mathcal{D}$ and $\mathcal{D} \subseteq{S}$, we conclude that $S = \mathcal{D}$.  Therefore the domain of $\upalpha_{\mathbb{N}_{r}}$ is $\mathcal{D}$.
\end{proof}

Note that $\upalpha_{\mathbb{N}_{r}}(w)$ exists for every $w \in \mathcal{D}_{r_{\text{min}}}$, since $\mathcal{D}_{r_{\text{min}}} \in \mathcal{D}$.   However, the inverse of bijective function $\upgamma_{\mathbb{N}_r}$ is \emph{not} $\upalpha_{\mathbb{N}_{r}}$, since the domain of the latter is a superset of $\mathcal{D}_{r_{\text{min}}}$.  We could opt to simply continue using $\upgamma^{-1}_{\mathbb{N}_r}$ to refer to the evaluation function in the standard minimal natural interpretation, using $\upalpha_{\mathbb{N}_{r}}$ to compute its values; nevertheless I prefer a more direct approach:
\begin{definition} \label{def:StandardMinimalNaturalEvaluationFunction}
The \emph{standard minimal RPF natural evaluation function}, denoted by $\upalpha_{\mathbb{N}_{r_{\text{min}}}}$, is the restriction of $\upalpha_{\mathbb{N}_{r}}$ to $\mathcal{D}_{r_{\text{min}}}$.
\end{definition}

I will make use of the following two lemmas to prove a theorem establishing that $\upalpha_{\mathbb{N}_{r_{\text{min}}}} = \upgamma^{-1}_{\mathbb{N}_r}$.
\begin{lemma} \label{lem:DimensionalityOfGammaIsGpf}
Let $n$ be a natural number greater than 1.  Then $\text{dim}(\upgamma_{\mathbb{N}_{r}}(n)) = m$, where $p_{m}$ is the greatest prime factor of $n$.
\end{lemma}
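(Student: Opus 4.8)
The plan is to unfold the recursive clause of $\upgamma_{\mathbb{N}_{r}}$ on an input $n>1$ and read the number of outermost parenthesis pairs directly off the result. First I would note that, since $n>1$, its greatest prime factor $p_{m}$ exists with $m\ge 1$, and that $\upgamma_{\mathbb{N}_{r}}(n)=\upgamma'_{\mathbb{N}_{r}}(n)$ by Definition~\ref{def:NaturalSpellingFunction}; hence by the third case of Definition~\ref{def:NonsurjectiveTranscriptionFunction},
\begin{equation} \nonumber
\upgamma_{\mathbb{N}_{r}}(n) \;=\; \underset{i = 1}{\stackrel{m}{\bigoplus}}\, {\bigl(\, {'('} \myfrown \upgamma'_{\mathbb{N}_{r}}(a_{i}) \myfrown {')'} \,\bigr)},
\end{equation}
where $(a_{1},\ldots,a_{m})$ is the exponent sequence satisfying $n=\prod_{i=1}^{m}p_{i}^{a_{i}}$. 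In particular $\upgamma_{\mathbb{N}_{r}}(n)\neq\upepsilon$, so it is the second clause of Definition~\ref{def:DimensionalityOfADyckWord} that governs the dimensionality.

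Next I would set $d_{i}:=\upgamma'_{\mathbb{N}_{r}}(a_{i})$ for $1\le i\le m$ and observe that each $d_{i}$ is a Dyck word. This is precisely what the first bullet of the proof of Theorem~\ref{thm:StdMinNaturalRPFIsProperSubsetOfD} establishes: every string produced by $\upgamma'_{\mathbb{N}_{r}}$ lies in $\mathcal{D}$, by the induction over the recursion tree given there. Consequently the displayed equation exhibits $\upgamma_{\mathbb{N}_{r}}(n)$ in exactly the shape $\underset{i=1}{\stackrel{k}{\bigoplus}}\,{\bigl(\,{'('}\myfrown d_{i}\myfrown{')'}\,\bigr)}$ demanded by Definition~\ref{def:DimensionalityOfADyckWord}, with $k=m$ and every $d_{i}\in\mathcal{D}$.

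Finally I would invoke the uniqueness of this decomposition, noted in the remark following Definition~\ref{def:DimensionalityOfADyckWord} (a Dyck word has exactly one set of outermost matching parenthesis pairs), to conclude that the witness $k$ for $\text{dim}(\upgamma_{\mathbb{N}_{r}}(n))$ must equal $m$; that is, $\text{dim}(\upgamma_{\mathbb{N}_{r}}(n))=m$ with $p_{m}$ the greatest prime factor of $n$, as claimed. I do not anticipate a genuine obstacle here; the only points needing care are the appeal to Theorem~\ref{thm:StdMinNaturalRPFIsProperSubsetOfD} so that the $d_{i}$ are known to be Dyck words (which is what makes Definition~\ref{def:DimensionalityOfADyckWord} applicable in the first place), and the recognition that the $\bigoplus$ concatenation appearing in the transcription function is syntactically the very pattern used to define dimensionality, so that matching the two requires nothing beyond inspection together with the uniqueness remark.
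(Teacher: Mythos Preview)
Your proposal is correct and follows essentially the same approach as the paper: both read the number of outermost parenthesis pairs directly off the third clause of Definition~\ref{def:NonsurjectiveTranscriptionFunction}. Your version is considerably more careful than the paper's one-line argument, explicitly justifying that each $d_i$ is a Dyck word and invoking the uniqueness remark so that Definition~\ref{def:DimensionalityOfADyckWord} applies cleanly.
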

\begin{proof}
Referring to Definition~\ref{def:NonsurjectiveTranscriptionFunction}, which we use to calculate the value of $\upgamma_{\mathbb{N}_{r}}(n)$, we see that the spelling of $n$ results in a word containing $m$ outermost matching parenthesis pairs, where $p_{m}$ is the greatest prime factor of $n$.
\end{proof}

\begin{lemma} \label{lem:AlphaOfGammaOfNaturalEqualsNumberItself}
Let $p_{k}$ be the $k$th prime number.  Then $\upalpha_{\mathbb{N}_{r_{\text{min}}}}(\upgamma_{\mathbb{N}_r}(p_{k})) = p_{k}$.
\end{lemma}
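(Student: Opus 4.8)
The plan is to compute $\upgamma_{\mathbb{N}_r}(p_k)$ explicitly from Definition~\ref{def:NonsurjectiveTranscriptionFunction} and then feed the resulting word to the standard RPF natural evaluation function. First I would observe that for every $k \ge 1$ the prime factorization of $p_k$ is $\prod_{i=1}^{k} p_i^{a_i}$ with $a = (0,\dots,0,1)$, i.e.\ $a_i = 0$ for $i < k$ and $a_k = 1$; in particular the greatest prime factor of $p_k$ is $p_k$ itself, so the index $m$ in Definition~\ref{def:NonsurjectiveTranscriptionFunction} equals $k$. Applying that definition and using $\upgamma'_{\mathbb{N}_r}(0) = \upepsilon$ and $\upgamma'_{\mathbb{N}_r}(1) = ()$, the $i$th factor of the concatenation is $()$ for $i < k$ and $(())$ for $i = k$, so
\[
w \;:=\; \upgamma_{\mathbb{N}_r}(p_k) \;=\; \underbrace{()\,()\cdots()}_{k-1}\,(()).
\]
For $k = 1$ this reads simply $(())$, in agreement with the table entry for $2$.

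Next I would apply $\upalpha_{\mathbb{N}_r}$ to $w$. By Lemma~\ref{lem:DimensionalityOfGammaIsGpf} (or directly from the displayed form of $w$), $\text{dim}(w) = k$, the $k$ outermost matching pairs being precisely the $k$ chunks shown above. The content of each of the first $k-1$ chunks is $\upepsilon$, and the content of the final chunk $(())$ is $()$. Since $\upalpha_{\mathbb{N}_r}(\upepsilon) = 0$ and $\upalpha_{\mathbb{N}_r}\bigl(()\bigr) = p_1^{\upalpha_{\mathbb{N}_r}(\upepsilon)} = p_1^{0} = 1$, Definition~\ref{def:StdEvalDyck} gives
\[
\upalpha_{\mathbb{N}_r}(w) \;=\; \prod_{i=1}^{k} p_i^{\upalpha_{\mathbb{N}_r}(d_i)} \;=\; \Bigl(\prod_{i=1}^{k-1} p_i^{0}\Bigr)\cdot p_k^{1} \;=\; p_k.
\]
Finally, $w$ lies in $\mathcal{D}_{r_{\text{min}}}$, being by construction in the codomain of $\upgamma_{\mathbb{N}_r}$, so by Definition~\ref{def:StandardMinimalNaturalEvaluationFunction} the restricted function $\upalpha_{\mathbb{N}_{r_{\text{min}}}}$ agrees with $\upalpha_{\mathbb{N}_r}$ at $w$, whence $\upalpha_{\mathbb{N}_{r_{\text{min}}}}(\upgamma_{\mathbb{N}_r}(p_k)) = p_k$.

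The argument is a straightforward unwinding of the definitions, so I do not anticipate a real obstacle. The two places needing care are: treating the base case $k = 1$ (no leading $()$ chunks) uniformly with $k \ge 2$; and correctly reading off that the content of the chunk $(())$ is the word $()$ rather than $\upepsilon$, so that the last exponent evaluates to $1$ and not $0$ — this is exactly where the ``$+1$'' distinguishing $p_k^{1}$ from $p_k^{0}$ enters. I would deliberately avoid shortcutting through the identity $\upalpha_{\mathbb{N}_{r_{\text{min}}}} = \upgamma^{-1}_{\mathbb{N}_r}$, since the present lemma is a stepping stone toward proving exactly that; the self-contained computation above is the appropriate route.
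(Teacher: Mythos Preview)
Your proof is correct and follows essentially the same approach as the paper: both observe that the exponent sequence for $p_k$ is $(0,\ldots,0,1)$, write out the resulting spelling, and then evaluate chunk by chunk to obtain $\prod_{i=1}^{k} p_i^{a_i} = p_k$. Your version is slightly more explicit (spelling out the word concretely and computing $\upalpha_{\mathbb{N}_r}\bigl(()\bigr)=1$ separately), and your closing remark about not invoking $\upalpha_{\mathbb{N}_{r_{\text{min}}}} = \upgamma^{-1}_{\mathbb{N}_r}$ to avoid circularity is well taken, since the paper indeed uses this lemma in the proof of that very identity.
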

\begin{proof}
Observe that
\begin{equation} \nonumber
\upgamma_{\mathbb{N}_r}(p_{k}) = \bigoplus_{i=1}^{k}( \;{'('} \myfrown \upgamma_{\mathbb{N}_r}(a_{i}) \myfrown {')'} \;),
\end{equation}
where every $a_{i}$ is zero except for $a_{k} = 1$.  We may thus rewrite $\upalpha_{\mathbb{N}_{r_{\text{min}}}}(\upgamma_{\mathbb{N}_r}(p_{k}))$ as
\begin{equation} \nonumber
\upalpha_{\mathbb{N}_{r_{\text{min}}}}(\bigoplus_{i=1}^{k}
( \;{'('} \myfrown \upgamma_{\mathbb{N}_r}(a_{i}) \myfrown {')'} \;)),
\end{equation}
 which in turn is equal to
 \begin{equation} \nonumber
 \prod^{k}_{i=1}p_{i}^{a_{i}}. 
 \end{equation}
Because the only nonzero $a_{i}$ is $a_{k} = 1$, we conclude that for every prime number $p_{k}$,
\begin{equation} \nonumber
\upalpha_{\mathbb{N}_{r_{\text{min}}}}(\upgamma_{\mathbb{N}_r}(p_{k})) = p_{k}.
\end{equation}
\end{proof}

With these lemmas in hand, I am ready to present the theorem and its proof.

\begin{theorem} \label{thm:StdNatEvalAndSpellingAreInverses}
The bijections  $\upalpha_{\mathbb{N}_{r_{\text{min}}}}$ and $\upgamma_{\mathbb{N}_r}$ are mutual inverses.
\end{theorem}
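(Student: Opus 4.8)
The plan is to reduce the whole statement to a single composition identity, namely $\upalpha_{\mathbb{N}_{r_{\text{min}}}}(\upgamma_{\mathbb{N}_r}(n)) = n$ for every $n \in \mathbb{N}$, and then to close the argument abstractly. First I would note that $\upalpha_{\mathbb{N}_{r_{\text{min}}}}$ is genuinely defined on all of $\mathcal{D}_{r_{\text{min}}}$: since $\mathcal{D}_{r_{\text{min}}} \subseteq \mathcal{D}$ and the domain of $\upalpha_{\mathbb{N}_r}$ is $\mathcal{D}$ by Theorem~\ref{thm:DomainEvalDyck}, its restriction $\upalpha_{\mathbb{N}_{r_{\text{min}}}}$ is total on $\mathcal{D}_{r_{\text{min}}}$. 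Granting the composition identity, the conclusion follows from the elementary fact that a left inverse of a bijection is the inverse: $\upgamma_{\mathbb{N}_r}\colon \mathbb{N} \to \mathcal{D}_{r_{\text{min}}}$ is a bijection (it was constructed as one, with $\mathcal{D}_{r_{\text{min}}}$ defined to be its codomain/image), so from $\upalpha_{\mathbb{N}_{r_{\text{min}}}} \circ \upgamma_{\mathbb{N}_r} = \mathrm{id}_{\mathbb{N}}$ one gets $\upalpha_{\mathbb{N}_{r_{\text{min}}}} = \upalpha_{\mathbb{N}_{r_{\text{min}}}} \circ (\upgamma_{\mathbb{N}_r} \circ \upgamma_{\mathbb{N}_r}^{-1}) = (\upalpha_{\mathbb{N}_{r_{\text{min}}}} \circ \upgamma_{\mathbb{N}_r}) \circ \upgamma_{\mathbb{N}_r}^{-1} = \upgamma_{\mathbb{N}_r}^{-1}$, whence $\upalpha_{\mathbb{N}_{r_{\text{min}}}}$ is itself a bijection and the two maps are mutual inverses.

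The core identity I would prove by strong induction on $n$. The base cases are direct computations: $\upgamma_{\mathbb{N}_r}(0) = \upepsilon$ and $\upalpha_{\mathbb{N}_{r_{\text{min}}}}(\upepsilon) = 0$; and $\upgamma_{\mathbb{N}_r}(1) = {()}$, whose dimensionality is $1$ with content $\upepsilon$, so $\upalpha_{\mathbb{N}_{r_{\text{min}}}}({()}) = p_1^{\upalpha_{\mathbb{N}_{r_{\text{min}}}}(\upepsilon)} = p_1^0 = 1$ (and Lemma~\ref{lem:AlphaOfGammaOfNaturalEqualsNumberItself} already records the identity for every prime $n$). For the inductive step with $n > 1$, write $n = \prod_{i=1}^{m} p_i^{a_i}$ where $p_m$ is the greatest prime factor of $n$. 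By Definition~\ref{def:NonsurjectiveTranscriptionFunction}, $\upgamma_{\mathbb{N}_r}(n)$ is the concatenation of $m$ chunks, the $i$th chunk being $'('\myfrown\upgamma_{\mathbb{N}_r}(a_i)\myfrown ')'$, hence with content $\upgamma_{\mathbb{N}_r}(a_i)$; by Lemma~\ref{lem:DimensionalityOfGammaIsGpf} its dimensionality is exactly $m$. Thus Definition~\ref{def:StdEvalDyck} gives $\upalpha_{\mathbb{N}_{r_{\text{min}}}}(\upgamma_{\mathbb{N}_r}(n)) = \prod_{i=1}^{m} p_i^{\upalpha_{\mathbb{N}_{r_{\text{min}}}}(\upgamma_{\mathbb{N}_r}(a_i))}$. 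Each $a_i$ satisfies $a_i < n$ (from $p_i^{a_i} \le n$ with $p_i \ge 2$, so $a_i \le \log_2 n < n$), and each $\upgamma_{\mathbb{N}_r}(a_i) \in \mathcal{D}_{r_{\text{min}}}$, so the induction hypothesis applies and yields $\upalpha_{\mathbb{N}_{r_{\text{min}}}}(\upgamma_{\mathbb{N}_r}(a_i)) = a_i$; substituting gives $\prod_{i=1}^{m} p_i^{a_i} = n$, which closes the induction.

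Throughout I would keep track of the various restrictions: $\upgamma'_{\mathbb{N}_r}$ and $\upgamma_{\mathbb{N}_r}$ take identical values, so the recursive defining equation for $\upgamma'_{\mathbb{N}_r}$ may be read as one for $\upgamma_{\mathbb{N}_r}$; likewise $\upalpha_{\mathbb{N}_{r_{\text{min}}}}$ agrees with $\upalpha_{\mathbb{N}_r}$ everywhere on $\mathcal{D}_{r_{\text{min}}}$, which is exactly where all the subwords $\upgamma_{\mathbb{N}_r}(a_i)$ lie, so the two names may be used interchangeably inside the computation. The main obstacle is not any hard analysis but precisely this bookkeeping together with the termination point of the induction — verifying that the inductive parameter strictly decreases, i.e. $a_i < n$; once that is in place, the computation is forced by the definitions and by Lemmas~\ref{lem:DimensionalityOfGammaIsGpf} and~\ref{lem:AlphaOfGammaOfNaturalEqualsNumberItself}, and the abstract left-inverse argument finishes the proof.
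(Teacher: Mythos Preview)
Your proof is correct and follows essentially the same route as the paper's: strong induction to establish $\upalpha_{\mathbb{N}_{r_{\text{min}}}}(\upgamma_{\mathbb{N}_r}(n)) = n$, then the abstract inverse-of-a-bijection argument. Your version is in fact slightly tighter in one respect: you give the explicit bound $a_i \le \log_2 n < n$ to justify applying the induction hypothesis to each exponent, whereas the paper handles this point more informally (splitting into the cases ``$n$ prime'' and ``$n$ composite'' and invoking Lemma~\ref{lem:AlphaOfGammaOfNaturalEqualsNumberItself} for the prime case); with your bound that lemma is not actually needed in the inductive step, since for prime $n$ the exponents are all in $\{0,1\}$ and the hypothesis applies directly.
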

\begin{proof}
Our proof will be by mathematical induction. Let $P_{k}$ be the proposition
\begin{equation} \nonumber
\upalpha_{\mathbb{N}_{r_{\text{min}}}}(\upgamma_{\mathbb{N}_r}(k)) = k
\end{equation}
for natural number $k$.

Propositions $P_{0}$  and $P_{1}$ are easily verified by considering Definition~\ref{def:NonsurjectiveTranscriptionFunction} together with  Definition~\ref{def:StdEvalDyck}:
\begin{itemize}
\item $\upalpha_{\mathbb{N}_{r_{\text{min}}}}(\upgamma_{\mathbb{N}_r}(0)) =   \upalpha_{\mathbb{N}_{r_{\text{min}}}}(\upepsilon) = 0$.
\item $\upalpha_{\mathbb{N}_{r_{\text{min}}}}(\upgamma_{\mathbb{N}_r}(1)) = \upalpha_{\mathbb{N}_{r_{\text{min}}}}({'}(){'}) =  p_{1}^{0} = 1$.
\end{itemize}

Now let $n$ be a natural number such that $n \ge 2$ and $P_{l}$ is true for all $l$ less than $n$.

Choose nonnegative integers $a_{1}, \ldots, a_{m}$ satisfying
\begin{equation} \label{eqn:ChooseNonnegativeIntegersInProof}
\prod_{i=1}^{m}p_{i}^{a_{i}} = n,
\end{equation}
where $p_{m}$ is the greatest prime factor of $n$.  Then
\begin{equation} \nonumber
\upgamma_{\mathbb{N}_{r}}(n) = \bigoplus_{i=1}^{m}( \;'(' \myfrown\upgamma_{\mathbb{N}_{r}}(a_{i})\myfrown ')'\;).
\end{equation}

Applying $\upalpha_{\mathbb{N}_{r_{\text{min}}}}$ to both sides gives
\begin{equation} \nonumber
\upalpha_{\mathbb{N}_{r_{\text{min}}}}(\upgamma_{\mathbb{N}_{r}}(n))
= 
\upalpha_{\mathbb{N}_{r_{\text{min}}}}(\bigoplus_{i=1}^{m}( \;'(' \myfrown \upgamma_{\mathbb{N}_{r}}(a_{i})\myfrown ')'\;)).
\end{equation}

Observe that the dimensionality of $\upgamma_{\mathbb{N}_{r}}(n)$ is $m$, according to Lemma~\ref{lem:DimensionalityOfGammaIsGpf}; thus
\begin{equation} \label{eqn:DimensionalityInProof}
\upalpha_{\mathbb{N}_{r_{\text{min}}}}(\upgamma_{\mathbb{N}_{r}}(n)) = \prod_{i=1}^{m} p_{i}^{
\upalpha_{\mathbb{N}_{r_{\text{min}}}}(
\upgamma_{\mathbb{N}_{r}}(a_{i})
)}
.
\end{equation}

Note that either $n$ is a product of powers of prime numbers all of which are less than or equal to $n-1$, or $n$ is itself  prime.  But since  we are given that every natural number $l$ less than $n$ satisfies $P_{l}$, and since we know from Lemma~\ref{lem:AlphaOfGammaOfNaturalEqualsNumberItself} that  $\upalpha_{\mathbb{N}_{r_{\text{min}}}}(\upgamma_{\mathbb{N}_r}(p_{k})) = p_{k}$ for any prime number $p_{k}$, we can rewrite Equation~\ref{eqn:DimensionalityInProof} as
\begin{equation} \label{eqn:AlphaAsProdInProof}
\upalpha_{\mathbb{N}_{r_{\text{min}}}}(\upgamma_{\mathbb{N}_{r}(n)}) = \prod_{i=1}^{m}p_{i}^{a_{i}}.
\end{equation}

Considering Equation~\ref{eqn:ChooseNonnegativeIntegersInProof} together with  Equation~\ref{eqn:AlphaAsProdInProof}, we see that
\begin{equation} \nonumber
\upalpha_{\mathbb{N}_{r_{\text{min}}}}(\upgamma_{\mathbb{N}_r}(n)) = n.
\end{equation}

Thus $P_{k}$ is true for all $k \in \mathbb{N}$, implying that
$\upalpha_{\mathbb{N}_{r_{\text{min}}}}$
is the inverse of $\upgamma_{\mathbb{N}_{r}}$.
Furthermore, the two functions are mutual inverses, as the inverse of a bijection's  inverse is the bijection itself.
\end{proof}

Theorem~\ref{thm:StdNatEvalAndSpellingAreInverses} gives us certainty that $\upalpha_{\mathbb{N}_{r_{\text{min}}}}$ is the evaluation function in the standard minimal RPF natural interpretation.

\subsection{Prime Factorization of Dyck Naturals} \label{loc:PrimeFactorizationOfDyckNaturals}
Since words in the Dyck natural numeral system already recursively encode the prime factorizations of almost all  the numbers they represent, it is hardly surprising that their conventional factorization is straightforward, as I illustrate with Algorithm~\ref{alg:PrimeFactOfDyckNumber}.
{
\begin{algorithm}[htb]
 \KwInput{An arbitrary Dyck natural $w$ other than $\epsilon$ or $()$.}
 \KwOutput{The prime factorization of $w$ as an expression of Dyck naturals, with $\wedge$ as the exponentiation operator, square brackets as grouping symbols, and multiplication of grouped expressions implied by their juxtaposition.}
 \BlankLine
 \emph{base} $\leftarrow {{'}(()){'}}$\;
 \ForEach{chunk $c_{k}$ in $w$}{
  \If{$c_{k} \neq {'()'}$}{
   \textbf{print} $'['$\tcc*[r]{print without newline}
   \textbf{print} \emph{base}\;
   \textbf{print} ${'}\wedge{'}$\;
   \textbf{print} content of $c_{k}$\;
   \textbf{print} $']'$\;
   }
   \emph{base} $\leftarrow {'()'} \myfrown$\emph{base}\;
 }
 \caption{Finding the prime factorization of a Dyck natural.} \label{alg:PrimeFactOfDyckNumber}
\end{algorithm}
}
\begin{example}
 Given  ${(())()()(()(()))((()))}$ as input, the algorithm outputs the expression $[(())\wedge()][()()()(())\wedge()(())][()()()()(())\wedge(())]$.  Thus the algorithm tells us that
 \begin{equation} \nonumber
(())()()(()(()))((())) = (())^{()} \cdot ()()()(())^{()(())} \cdot ()()()()(())^{(())}.
 \end{equation}
 We can check the correctness of the equation by replacing Dyck naturals with their decimal equivalents, yielding
  \begin{equation} \nonumber
83006 = 2^{1} \cdot 7^{3} \cdot 11^{2}.
 \end{equation}
 \end{example}
 
 \renewcommand{\floatpagefraction}{0.7}
 \begin{remark}
 A reference implementation of the algorithm in Python may be found among the ancillary files accompanying this paper.
\end{remark}

Looking at the algorithm, we can see that there is only one loop, with the number of iterations being equal to the length of the input string.  But although the computations are straightforward, and while there is no recursion or backtracking, that does \emph{not} imply the algorithm is particularly efficient, as the time complexity of the algorithm is dominated by the output size.  For example, using 1 to encode a left parenthesis and 0 to encode a right parenthesis, the Dyck natural representation corresponding to the prime number  $p_{k}$ is matched by the regular expression
\begin{equation}
\text{\texttt{(10)\{k-1\}1100}}, 
\label{eq:my_regex_k1k2}
\end{equation}
implying the output for the factorization of the Dyck natural representing $p_{100} p_{200}$ would be
\begin{equation} \nonumber
[\underbrace{()\cdots ()}_{99 \text{ pairs}}(())\wedge()] [\underbrace{()\cdots ()}_{199 \text{ pairs}}(())\wedge()],
\end{equation}
for an output string length of 602, whereas using the decimal system we would have
\begin{equation} \nonumber
[541 \wedge 1] [1223 \wedge 1],
\end{equation}
giving an output string length of 15.

We might speed things up for our algorithm, however, by abbreviating words, e.g., by replacing sufficiently long strings of empty parenthesis pairs with hexadecimal words giving the lengths of unbroken sequences of such pairs; $(())()()()()()()()()()(())$ thus might be shortened to  $(())9(())$.  Under such a scheme the following two factorizations would be equivalent:
\begin{equation} \nonumber
\begin{aligned}
&(())()()()()()()()()()(()(())) = (())^{()} \cdot ()()()()()()()()()()(())^{()(())}\\
&(())9(()(())) = (())^{()} \cdot \text{A}(())^{()(())}.
\end{aligned}
\end{equation}

A prime factorization algorithm might use the compressed form for both input and output without ever having to reconstitute the corresponding Dyck words, since empty parenthesis pairs have no contents requiring processing.

\begin{remark}
The reader might object that finding the nonrecursive prime factorization of a Dyck natural number is a contrived problem, since the work involved with prime factorization will have already been performed ``up front'' in order to obtain the RPF representation in the first place.  But that is not necessarily true;  in Theorem~\vref{thm:EquivalentDefinitionsOfRPFNMin} we will see that $\mathcal{D}_{r_{\text{min}}}$ has an alternative nonnumerical definition, with the result that we can easily determine whether a given Dyck word is a Dyck natural, even without knowing the number it represents.  Furthermore, such representations may still have meaning, especially since related Catalan patterns appear naturally in structured or nested forms \cite{stanley2015catalan}.
\end{remark}

\subsection{Generalization to All Minimal Natural Interpretations} \label{loc:GeneralizationOfStdMinToAllPermutations}

So far I have confined my treatment of minimal interpretations of $\mathbb{N}$ to the ``standard'' or ``rightwise'' one; now I will provide a generalization to include all minimal prime-permuted natural interpretations.  Using this generalization, we will be able to convert objects in one permuted interpretation into corresponding objects in another.

 We will define a function $\upgamma'_{\mathbb{N}_{\upsigma}}$ mapping a natural number $n$ to a unique string of parentheses, and then we will define the language underlying $\text{RPF}_{\mathbb{N}_{\upsigma_{\text{min}}}}$ as the set of all possible strings produced by the function.  Since  $\upgamma'_{\mathbb{N}_{\upsigma}}$ takes a natural number input and outputs an $\text{RPF}_{\mathbb{N}_{\upsigma_{\text{min}}}}$ spelling of the number, we could choose to regard $\upgamma'_{\mathbb{N}_{\upsigma}}$ as our spelling function; however, because we wish the spelling function to have an inverse, we will instead define a spelling function $\upgamma_{\mathbb{N}_{\upsigma}}$ identical to $\upgamma'_{\mathbb{N}_{\upsigma}}$ except with its codomain restricted to $\text{RPF}_{\mathbb{N}_{\upsigma_{\text{min}}}}$.   Thus $\upgamma_{\mathbb{N}_{\upsigma}}$ will be a bijection, enabling us to speak of its inverse.

\begin{definition}\label{def:GeneralNonsurjectiveNaturalTranscriptionFunction}  Let $\Sigma^{*}$ be the Kleene closure of the set $\{(,)\}$, and let $\upsigma$ be a permutation of $\bm{P}$, where $\bm{P}$ is the sequence of prime numbers $(2,3,5,7, \ldots)$.  Then the \emph{$\upsigma$-permuted nonsurjective RPF natural transcription function}, denoted by   $\upgamma'_{\mathbb{N}_{\upsigma}}$, is given by $\upgamma'_{\mathbb{N}_{\upsigma}} : \mathbb{N} \rightarrow \Sigma^{*}$, such that
   \begin{itemize}
      \item For $ n = 0$, $\upgamma'_{\mathbb{N}_{\upsigma}}(n)$ is the empty string $\upepsilon$.
      \item For $n = 1$, $\upgamma'_{\mathbb{N}_{\upsigma}}(n)$ is the string ${()}$.
      \item For $n > 1$, let $(s_{k})_{k = 1}^{\infty}$ be the sequence $(\upsigma(p_{1}), \upsigma(p_{2}), \upsigma(p_{3}), \ldots)$, and let $m$ be the smallest number for which an integer sequence $(a_{j})_{j = 1}^{m}$ exists satisfying the equation
$n = \prod_{i = 1}^{m}s_{i}^{a_{i}}$.
Then
\begin{equation} \nonumber
\upgamma'_{\mathbb{N}_{\upsigma}}(n) = \bigoplus_{i = 1}^{m}('(' \myfrown \upgamma'_{\mathbb{N}_{\upsigma}}(a_{i}) \myfrown ')').
\end{equation}

   \end{itemize}
\end{definition}
\begin{equation} \nonumber  
{}
\end{equation}

At this point we can define our bijective spelling function by specifying its graph.

\begin{definition} \label{def:GeneralNaturalSpellingFunction}
\emph{The $\upsigma$-permuted RPF natural spelling function},
denoted by $\upgamma_{\mathbb{N}_{\upsigma}}$, is given by
\begin{equation}\nonumber
\upgamma_{\mathbb{N}_{\upsigma}} = \{(n,w) \in \mathbb{N} \times
\upgamma'_{\mathbb{N}_{\upsigma}}(\mathbb{N})
 \mid w = \upgamma'_{\mathbb{N}_{\upsigma}}(n)\},
\end{equation}
where $\upgamma'_{\mathbb{N}_{\upsigma}}(\mathbb{N})$ is the image of $\upgamma'_{\mathbb{N}_{\upsigma}}$.
\end{definition}

Before we define the $\upsigma$-permuted minimal RPF natural interpretation $\text{RPF}_{\mathbb{N}_{\upsigma_{\text{min}}}}$, we will assign a name to the language underlying the interpretation.

\begin{definition} \label{def:GenLDMin}
The \emph{$\upsigma$-permuted minimal RPF language}, denoted by $\mathcal{D}_{\upsigma_{\text{min}}}$, is the codomain of $\upgamma_{\mathbb{N}_{\upsigma}}$.
\end{definition}

Now we are ready to state the definition of $\text{RPF}_{\mathbb{N}_{\upsigma_{\text{min}}}}$.

\begin{definition}\label{def:GenMinimalRPFNaturalInterpretation}
   The \emph{$\upsigma$-permuted minimal RPF natural interpretation}, denoted by  $\text{RPF}_{\mathbb{N}_{\upsigma_{\text{min}}}}$, is the interpretation $(\mathcal{D}_{\upsigma_{\text{min}}}, \mathbb{N}, \upgamma^{-1}_{\mathbb{N}_{\upsigma}})$.
\end{definition}

An evaluation function mapping $\mathcal{D}_{\upsigma_{\text{min}}}$ into $\mathbb{N}$ will be useful.  However, I will define the function so that its domain is $\mathcal{D}$ rather than $\mathcal{D}_{\upsigma_{\text{min}}}$, as I intend for the definition to apply equally well to the general natural RPF language as to its subset $\mathcal{D}_{\upsigma_{\text{min}}}$.

\begin{definition} \label{def:GenEvalDyck}
Let  let $\upsigma$ be a permutation of the sequence $(2,3,5,7, \ldots)$ of prime numbers.  Then the \emph{$\upsigma$-permuted RPF natural evaluation} of a Dyck word $w$ is the function $\upalpha_{\mathbb{N}_{\upsigma}} : \mathcal{D} \rightarrow \mathbb{N}$ such that:
\begin{itemize}
\item For $w = \upepsilon$,  $\upalpha_{\mathbb{N}_{\upsigma}}(w) = 0$.
\item For $w \neq \upepsilon$, let $(s_{k})_{k = 1}^{\infty}$ be the sequence $(\upsigma(p_{1}), \upsigma(p_{2}), \upsigma(p_{3}), \ldots)$.  Then
\begin{equation}
\upalpha_{\mathbb{N}_{\upsigma}}(w) = \prod_{i = 1}^{\text{dim}(w)} {
\upsigma(p_{i})
^{\upalpha_{\mathbb{N}_{\upsigma}}(d_{i})}},  \nonumber
\end{equation}
where $d_{i}$ is the content of the $i$th chunk in $w$.
\end{itemize}
\end{definition}

Note that $\upalpha_{\mathbb{N}_{\upsigma}}(w)$ exists for every $w \in \mathcal{D}_{\upsigma_{\text{min}}}$, since $\mathcal{D}_{\upsigma_{\text{min}}} \in \mathcal{D}$.   However, the inverse of bijective function $\upgamma_{\mathbb{N}_\upsigma}$ is \emph{not} $\upalpha_{\mathbb{N}_{\upsigma}}$, the domain of the latter being a superset of $\mathcal{D}_{\upsigma_{\text{min}}}$.  The following definition gives us a $\upsigma$-permuted evaluation function that is the inverse of the $\upsigma$-permuted spelling function:
\begin{definition} \label{def:SigmaPermutedMinimalNaturalEvaluationFunction}
The \emph{$\upsigma$-permuted minimal RPF natural evaluation function}, denoted by $\upalpha_{\mathbb{N}_{\upsigma_{\text{min}}}}$, is the restriction of $\upalpha_{\mathbb{N}_{\upsigma}}$ to $\mathcal{D}_{\upsigma_{\text{min}}}$.
\end{definition}

Now we have the ability to convert spellings in one miminal natural interpretation to equivalent spellings in another, according to the following straightforward procedure.  Suppose we have a word $w_{\upsigma}$ evaluating to some natural number $n$ in the $\upsigma$-permuted minimal natural interpretation and wish to find the word $w_{\uptau}$ evaluating to $n$ in the $\uptau$-permuted minimal natural interpretation.  We first find $n$ by applying $\upalpha_{\mathbb{N}_{\upsigma_{\text{min}}}}$ to $w_{\upsigma}$; we then apply $\upgamma_{\mathbb{N}_{\uptau}}$ to $n$, giving us $w_{\uptau}$.  In other words, we use the following equation:
\begin{equation} \label{eqn:ConvertSpellingsBetweenPermutations}
w_{\uptau} = \upgamma_{\mathbb{N}_{\uptau}}(\upalpha_{\mathbb{N}_{\upsigma_{\text{min}}}}(w_{\upsigma})).
\end{equation}

We can also convert the evaluations in one minimal natural interpretation to equivalent evaluations in another, as follows.  Suppose we have a number $n_{\upsigma} \in \mathbb{N}$, the spelling of which is $w$ in  the $\upsigma$-permuted minimal natural interpretation, and we wish to find the number $n_{\uptau} \in \mathbb{N}$ with the spelling $w$ in the $\uptau$-permuted minimal natural interpretation.  The following equation gives us $n_{\uptau}$:
\begin{equation} \label{eqn:ConvertEvaluationsBetweenPermutations}
n_{\uptau} = \upalpha_{\mathbb{N}_{\uptau_{\text{min}}}}(
\upgamma_{\mathbb{N}_{\upsigma}}(n_{\upsigma})
).
\end{equation}

For the rest of this paper, I will focus primarily upon interpretations arising from the standard permutation. Generalizations of the associated mathematical objects to all prime-permuted interpretations are somewhat tedious but straightforward.

\subsection{The Standard Minimal RPF Language Reconsidered} \label{loc:StdMinRPFLangReconsidered}

We defined $\mathcal{D}_{r_{\text{min}}}$ to be the codomain of $\upgamma_{\mathbb{N}_{r}}$  (Definition~\ref{def:LDMin}).  However, there is an interesting alternative definition---I say interesting, 
because it is a non-numerical definition of $\mathcal{D}_{r_{\text{min}}}$, and because it can be used to prove that $\mathcal{D}_{r_{\text{min}}}$ is a context-free language (see Theorem~\vref{thm:RPFChomskySch}).

\begin{theorem} \label{thm:EquivalentDefinitionsOfRPFNMin}
The following are equivalent:
\begin{enumerate}
   \item   The standard minimal RPF language $\mathcal{D}_{r_{\text{min}}}$ is the domain of the standard minimal RPF natural evaluation function $\upalpha_{\mathbb{N}_{r_{\text{min}}}}$.
   \item   The standard minimal RPF language $\mathcal{D}_{r_{\text{min}}}$ is the codomain of the standard RPF natural spelling function $\upgamma_{\mathbb{N}_{r}}$. \label{prt:1:thm:EquivalentDefinitionsOfRPFNMin}
\item  The standard minimal RPF language $\mathcal{D}_{r_{\text{min}}}$ is the set
\begin{equation} \nonumber
\{ d \in \mathcal{D} \mid (\;')())' \; \text{is not a substring of}\;d \; ) {\;\wedge\;} (\; ')()' \; \text{is not a suffix of} \; d\;)\}.
\end{equation} \label{prt:2:thm:EquivalentDefinitionsOfRPFNMin}
\end{enumerate}
\end{theorem}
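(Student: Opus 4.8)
The plan is to establish the set equality by double inclusion. Write $E$ for the set displayed in part~\ref{prt:2:thm:EquivalentDefinitionsOfRPFNMin}, and recall that by Definitions~\ref{def:NonsurjectiveTranscriptionFunction} and~\ref{def:NaturalSpellingFunction} the codomain of $\upgamma_{\mathbb{N}_{r}}$ is exactly the image of $\upgamma'_{\mathbb{N}_{r}}$; call that image $\Gamma$, so we must show $\Gamma = E$. The organizing idea is that, for every $n \ge 2$, the recursion in Definition~\ref{def:NonsurjectiveTranscriptionFunction} forces the \emph{last} chunk of $\upgamma'_{\mathbb{N}_{r}}(n)$ to have nonempty content, because that chunk records the exponent $a_{m}$ of the greatest prime factor $p_{m}$ of $n$ and $a_{m}\ge 1$. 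The two forbidden patterns in $E$ are precisely the syntactic footprint of this ``nonempty last chunk at every level of the recursion,'' and the proof consists of making that precise in both directions. Throughout I use that $\upgamma'_{\mathbb{N}_{r}}(k)=\upepsilon$ exactly when $k=0$, and that $\upgamma'_{\mathbb{N}_{r}}(k)$ ends in $')'$ whenever $k\ge 1$.

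For the inclusion $\Gamma\subseteq E$ I would run a strong induction on $n$, proving that $w:=\upgamma'_{\mathbb{N}_{r}}(n)$ (i)~contains no occurrence of $')())'$ and (ii)~does not end in $')()'$; for the inductive step I strengthen (ii), for $n\ge 2$, to the statement that $w$ ends in $'))'$. The cases $n=0,1$ give $\upepsilon$ and $()$ and are immediate. For $n\ge 2$, write $w=\bigoplus_{i=1}^{m}({'('}\myfrown \upgamma'_{\mathbb{N}_{r}}(a_{i})\myfrown{')'})$ with $p_{m}$ the greatest prime factor of $n$; the strengthened suffix claim holds because $a_{m}\ge 1$ makes the content of the last chunk either $()$ (when $a_{m}=1$) or a nonempty concatenation of chunks ending in $')'$ (when $a_{m}\ge 2$), so $w$ ends in $'))'$. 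For (i), a depth count shows that in any Dyck word the first three characters of an occurrence of $')())'$ lie strictly inside the content of a single outermost chunk $c_{i}$, while the fourth character is either still inside $c_{i}$ or is the closing parenthesis of $c_{i}$: in the first case $')())'$ is a factor of $\upgamma'_{\mathbb{N}_{r}}(a_{i})$, contradicting (i) for $a_{i}<n$; in the second case $\upgamma'_{\mathbb{N}_{r}}(a_{i})$ ends in $')()'$, contradicting (ii)---equivalently, the $'))'$-strengthening---for $a_{i}$.

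For the reverse inclusion $E\subseteq\Gamma$ I would induct on $|d|$. If $d\in\{\upepsilon,()\}$ then $d=\upgamma_{\mathbb{N}_{r}}(0)$ or $d=\upgamma_{\mathbb{N}_{r}}(1)$. Otherwise decompose $d=c_{1}\myfrown\cdots\myfrown c_{m}$ into its chunks, with contents $e_{i}$; Lemma~\ref{lem:ChunkContainsDyckWordLenTwoLess} gives $|e_{i}|<|d|$. Each $e_{i}$ again lies in $E$: it is a factor of $d$, so it avoids $')())'$, and it cannot end in $')()'$, since that would make $c_{i}={'('}\myfrown e_{i}\myfrown{')'}$ contain $')())'$ and hence so would $d$. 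By the induction hypothesis each $e_{i}=\upgamma'_{\mathbb{N}_{r}}(a_{i})$ for some $a_{i}\in\mathbb{N}$. Moreover $c_{m}\ne()$: if $m\ge 2$ that would force the suffix $')()'$ on $d$, and if $m=1$ it would force $d=()$. Hence $e_{m}\ne\upepsilon$, so $a_{m}\ge 1$. Put $n=\prod_{i=1}^{m}p_{i}^{a_{i}}$; since $a_{m}\ge 1$, the greatest prime factor of $n$ is $p_{m}$ and, by unique factorization, $(a_{1},\dots,a_{m})$ is exactly the exponent sequence invoked in Definition~\ref{def:NonsurjectiveTranscriptionFunction}, so $\upgamma'_{\mathbb{N}_{r}}(n)=\bigoplus_{i=1}^{m}({'('}\myfrown e_{i}\myfrown{')'})=d$, i.e.\ $d\in\Gamma$.

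I expect the main obstacle to be the depth-counting step in the $\Gamma\subseteq E$ direction: one must argue carefully, from the prefix-balance condition defining $\mathcal{D}$, that an occurrence of $')())'$ can only sit inside the content of a single outermost chunk or straddle that chunk's closing parenthesis---never across a chunk boundary in any other way---so that the induction hypothesis can be applied to a genuinely smaller argument $a_{i}$. Once that localization observation is in place the remaining work is routine bookkeeping, with the recurring theme being the translation ``last chunk nonempty at every level'' $\leftrightarrow$ ``no $')()'$ suffix anywhere in the nesting'' $\leftrightarrow$ ``no $')())'$ substring.''
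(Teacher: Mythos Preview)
Your proposal is correct and, in fact, more carefully argued than the paper's own proof. Both directions share the same underlying idea---that the last chunk at every recursive level must have nonempty content because $a_{m}\ge 1$---but the two proofs organize this differently.

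For $\Gamma\subseteq E$, the paper argues by contradiction without setting up an explicit induction: it observes that a suffix $')()'$ in $\upgamma_{\mathbb{N}_{r}}(n)$ would force $a_{m}=0$, and that a substring $')())'$ would force some inner spelling to end in $')()'$. Your strong induction on $n$ with the strengthened hypothesis ``ends in $'))'$ for $n\ge 2$,'' together with the depth-count localization of $')())'$ to a single chunk or its closing parenthesis, makes the same point rigorously and handles the recursion cleanly; the depth argument you flag as the main obstacle is indeed the only nontrivial step, and your analysis of it (forcing $d\ge 2$, hence the pattern lies in one chunk's content or straddles its closer) is sound.

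For $E\subseteq\Gamma$, the paper's argument is noticeably weaker: it invokes $\upalpha_{\mathbb{N}_{r}}(w)$ and asserts that $w\notin\Gamma$ would contradict the existence of a prime factorization, but it never establishes why $\upgamma_{\mathbb{N}_{r}}(\upalpha_{\mathbb{N}_{r}}(w))$ must equal $w$ rather than some shorter word. Your induction on $|d|$ avoids this gap entirely: you show each chunk content $e_{i}$ again lies in $E$, pull back $a_{i}$'s by the induction hypothesis, verify $a_{m}\ge 1$ from the forbidden-suffix condition, and then use unique factorization to confirm that the exponent tuple in Definition~\ref{def:NonsurjectiveTranscriptionFunction} is exactly $(a_{1},\dots,a_{m})$. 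That last appeal to uniqueness is the key step the paper glosses over, and it is what makes your reconstruction of $n$ actually yield $\upgamma'_{\mathbb{N}_{r}}(n)=d$.
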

\begin{proof}
Theorem~\vref{thm:StdNatEvalAndSpellingAreInverses} establishes the equivalence of (1) and (2).   Demonstrating the equivalence of either of these with (3) would therefore prove the equivalence of all three statements; I shall proceed by demonstrating the equivalence of (2) and (3).

The spellings of $0$ and $1$ are $\upepsilon$ and ${()}$ respectively, by explict definition; neither of these contains the substring  {$)())$} or the suffix {$)()$}.

Suppose that there exists some natural number $n$ greater than 1 such that its spelling contains the suffix {$)()$}, i.e, $\upgamma_{\mathbb{N}_{r}}(n) = w = {w'} \myfrown {'()'}$, with $w'$ being a nonempty Dyck word.  This would imply that the last chunk in $w$ corresponds to the zeroeth power of the greatest prime factor of $n$, which is a contradiction, the zeroeth power of a prime number not appearing in the (nonrecursive) prime factorization of $n$.

Now suppose that there exists some natural number $m$ greater than 1 such that its spelling contains the substring {$)())$}.  The presence of the closing parenthesis immediately following the empty matched pair of parentheses would imply that $\upgamma_{\mathbb{N}_{r}}$ had been invoked with some number $n \ge 2$ as its argument as part of the recursion arising from the spelling of some other number, and that the spelling of $n$ contained the suffix {$)()$}, which we have already shown is a contradiction.

As for all other words $w$ in $\mathcal{D}$ other than those we have already considered, these must also be members of $\mathcal{D}_{r_{\text{min}}}$.  To see this, suppose that $w$ is not in the codomain of $\upgamma_{\mathbb{N}_{r}}$.  Since the domain of $\upalpha_{\mathbb{N}_{r}}$ is $\mathcal{D}$, $\upalpha_{\mathbb{N}_{r}}(w)$ is a natural number.  This would imply that a natural number exists (other than 0 or 1, these already having been considered) which cannot be represented by a product of prime powers
\begin{equation} \nonumber
n = \prod_{i = 1}^{m}p_{i}^{a_{i}},
\end{equation}
where $p_{m}$ is the greatest prime factor of $n$ and $a$ is the unique integer sequence satisfying the equation.  Note that if we form a product of only the $p^{a_{i}}$ where $a_{i} \neq 0$, we obtain the (nonrecursive) prime factorization of $n$.  Also note that all zero $a_{i}$ designate the exponents of primes not contributing to the prime factorization.  Thus $n$ is an integer greater than or equal to 2 such that $n$ has no prime factorization, which is a contradiction.
\end{proof}

We now make use of the alternative definition of $\mathcal{D}_{r_{\text{min}}}$ to prove that $\mathcal{D}_{r_{\text{min}}}$ is context-free.

\begin{theorem} \label{thm:RPFChomskySch}
Let $\Sigma^{*}$ be the Kleene closure of the set $\{(,)\}$,
and let $R$ be the language accepted by the deterministic finite automaton (DFA) represented by the following state transition diagram:

\begin{center}
\tikzset{
->,
>=stealth, 
node distance=3cm,
every state/.style={thick, fill=gray!5},
initial text=$ $,
}
\begin{tikzpicture}
   \node[state,initial,accepting](q0){q0};
   \node[state,accepting,above right of=q0](q1){q1};
   \node[state,accepting,below right of=q1](q2){q2};
   \node[state,right of=q2](q3){q3};
   \node[state,right of=q3](q4){q4};

   \draw (q0) edge[loop below]node{(}(q0)
             (q0)edge[above]node{)}(q1)
             (q1)edge[loop above]node{)}(q1)
             (q1)edge[above]node{(}(q2)
             (q2)edge[above]node{(}(q0)
             (q2)edge[above, bend left]node{\;\;\;\;)}(q3)
             (q3)edge[below, bend left]node{\;\;\;\;(}(q2)
             (q3)edge[above]node{)}(q4)
             (q4)edge[loop above]node{(}(q4)
             (q4)edge[loop below]node{)}(q4);
\end{tikzpicture}
\end{center}
Then
\begin{equation} \nonumber
R =
\{w \in \Sigma^{*} \mid (\text{${')())'}$ not a substring of $w$})\wedge(\text{${')()'}$ not a suffix of $w$})\},
\end{equation}
and
\begin{equation} \nonumber
\mathcal{D}_{r_{\text{min}}} = \mathcal{D} \cap R,
\end{equation}
implying that $\mathcal{D}_{r_{\text{min}}}$ is a context-free language.
\end{theorem}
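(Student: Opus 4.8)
The plan is to handle the theorem's three assertions in turn: that the displayed automaton accepts exactly $R$, that $\mathcal{D}_{r_{\text{min}}}=\mathcal{D}\cap R$, and that $\mathcal{D}_{r_{\text{min}}}$ is therefore context-free. For the first assertion I would pin down what each state ``remembers.'' The claim is that the five conditions
\begin{itemize}
\item $w$ has $')())'$ as a substring;
\item $w$ has no $')())'$ substring and ends in $')()'$;
\item $w$ has no $')())'$ substring and ends in $')('$;
\item $w$ has no $')())'$ substring, ends in $')'$, but does not end in $')()'$;
\item $w$ has no $')())'$ substring and is either $\upepsilon$ or ends in $'('$ but not in $')('$
\end{itemize}
partition $\Sigma^{*}$, and that these correspond respectively to the states $q_{4},q_{3},q_{2},q_{1},q_{0}$ at which the automaton arrives after reading $w$. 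The partition property is a short case check on the last one or two symbols of $w$ together with the presence or absence of the forbidden substring.

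To verify the correspondence, define $f\colon\Sigma^{*}\to\{q_{0},\dots,q_{4}\}$ by sending $w$ to the state named by the condition it satisfies (well defined, by the partition property). Then check $f(\upepsilon)=q_{0}$ and, for each of the ten edges of the diagram, that $f(w\myfrown c)=\delta(f(w),c)$ whenever $c$ is the edge's label and $f(w)$ its source state; a one-line induction on $|w|$ then gives $f(w)=\delta^{*}(q_{0},w)$. The points needing care are the interaction of the ``$\ldots)()$'' and ``$\ldots)($'' conditions when a symbol is appended — for instance appending $'('$ to a word ending in $')()'$ gives one ending in $')('$ (edge $q_{3}\to q_{2}$), whereas appending $')'$ creates the forbidden substring (edge $q_{3}\to q_{4}$) — and the fact that $q_{4}$ is a sink, which is exactly ``once $')())'$ appears it stays.'' Since the accepting states are precisely $q_{0},q_{1},q_{2}$, the accepted language is exactly the set of words with no $')())'$ substring and no $')()'$ suffix, i.e.\ $R$.

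The second assertion is then immediate from Theorem~\ref{thm:EquivalentDefinitionsOfRPFNMin}: part~\ref{prt:2:thm:EquivalentDefinitionsOfRPFNMin} of that theorem characterizes $\mathcal{D}_{r_{\text{min}}}$ as the set of Dyck words having neither $')())'$ as a substring nor $')()'$ as a suffix, while the previous paragraph identifies $R$ as the set of \emph{all} strings over $\{(,)\}$ with those two properties, so $\mathcal{D}\cap R=\mathcal{D}_{r_{\text{min}}}$. For the third assertion, $R$ is regular (it is accepted by the displayed DFA) and $\mathcal{D}$ is context-free, so $\mathcal{D}_{r_{\text{min}}}$ is the image under the identity homomorphism of the intersection of the Dyck language on a single bracket type with a regular language; the Chomsky--Sch\"{u}tzenberger representation theorem then yields that $\mathcal{D}_{r_{\text{min}}}$ is context-free. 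The main obstacle is the first step: recognizing that the automaton is pattern-matching for $')())'$ and $')()'$ simultaneously, guessing the correct state invariants, and pushing the ten-edge verification through without mishandling the overlap of the two suffix conditions; everything afterward is bookkeeping on top of Theorem~\ref{thm:EquivalentDefinitionsOfRPFNMin} and standard closure facts.
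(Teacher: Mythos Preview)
Your proposal is correct and reaches the same conclusions via the same three-part decomposition as the paper, but your treatment of the first part (identifying the language of the DFA) is organized differently and is, if anything, tighter. The paper argues by \emph{simulation from an arbitrary state}: it checks that from each $q_i$ the input $')())'$ drives the machine into the sink $q_4$, and that from each $q_i$ the input $')()'$ leaves it in $q_3$ or $q_4$; for the converse it simply asserts that ``the only way to reach $q_4$ is for $w$ to contain $')())'$'' and ``words ending in $q_3$ all contain the suffix $')()'$,'' without a separate verification. Your approach instead assigns an explicit \emph{state invariant} to each $q_i$ and checks that all ten transitions preserve the invariant, which proves both directions simultaneously and makes the converse step fully explicit. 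The paper's method is quicker to write down when one is confident the automaton is correct; your method is more robust and would catch an error in the automaton that the paper's forward-only check might miss. Parts two and three of your proposal (invoking Theorem~\ref{thm:EquivalentDefinitionsOfRPFNMin} and then Chomsky--Sch\"{u}tzenberger with the identity homomorphism) match the paper exactly.
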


\begin{proof}
We first verify that no word $w \in \Sigma^{*}$ containing the substring ${)())}$ is accepted by the DFA.  Suppose the automaton has read some arbitrary number of symbols and is currently at state $q_{i}$, with the remaining input starting with the string ${)())}$.
For each $i \in \{0,1,2,3,4\}$, the input sequence ${)())}$ places the automaton in state $q_{4}$, from which no transition to another state is possible, so that $w$ ends with the automaton in state $q_{4}$.  Since $q_{4}$ is not an acceptor state, $w$ is rejected.  Thus the DFA rejects all words containing the substring ${)())}$.  

Next we verify that no word $w$ in $\Sigma^{*}$ containing the suffix ${)()}$ is accepted by the DFA.  Suppose the automaton has read some arbitrary number of symbols and is currently at state $q_{i}$, the remaining input being ${)()}$.  Now let us consider each of the possibilities.

\begin{itemize}
\item If $i \in \{0,1,2\}$, then the input ends with the automaton in state $q_{3}$.  Since $q_{3}$ is not an acceptor state, $w$ is rejected.
\item If $i \in \{3,4\}$, then the input ends with the automaton in state $q_{4}$.  Since $q_{4}$ is not an acceptor state, $w$ is rejected.
\end{itemize}
Thus the automaton rejects all words in $\Sigma^{*}$ containing the suffix ${)()}$.

\vspace{\baselineskip} 
We must still verify that all words $w$ in $\Sigma^{*}$ other than those rejected above are recognized by the DFA.  In our verification, we can ignore any words ending with the DFA in state $q_{4}$, since the only way to reach that state is for $w$ to contain the substring ${)())}$, and all of these words have already been considered.  We may likewise ignore words ending with the DFA in state $q_{3}$, these all containing the suffix ${)()}$.  The only other possibility is that $w$ ends in either $q_{0}$, $q_{1}$ or $q_{2}$, each of which is an acceptor state.  Thus the automaton accepts all words in $\Sigma^{*}$ except for those containing the substring ${)())}$ or the suffix $)()$.

Because $\mathcal{D} \subset \Sigma^{*}$, the DFA accepts every $d \in \mathcal{D}$ such that ${)())}$ is not a substring of $d$ and ${)()}$ is not a suffix of $d$.  Therefore, from  Theorem~\ref{thm:EquivalentDefinitionsOfRPFNMin}
,
\begin{equation} \nonumber
\mathcal{D}_{r_{\text{min}}} = \mathcal{D} \cap R.
\end{equation}
Thus $\mathcal{D}_{r_{\text{min}}}$ is a regular subset of a context-free language.  Since all regular languages are also context-free, we conclude that $\mathcal{D}_{r_\text{min}}$ is a context-free language.

\end{proof}

The non-numeric interpretation of $\mathcal{D}_{r_\text{min}}$ implies that the number of words of semilength $k$ in that set is given by OEIS sequence A082582  \cite{oeisA082582}  by the following theorem:

\begin{theorem} \label{thm:A082582}
Let $k$ be a nonnegative integer.  Then the number of words in $\mathcal{D}_{r_{\text{min}}}$ of length
$2k$ is given by term a($k$) in OEIS sequence A082582.
\end{theorem}

\begin{proof}
The recurrence and associated generating function establishing this result were contributed by Max Alexeyev in his answer to a MathOverflow question \cite{mo384586}.
\end{proof}

\section{The Standard Minimal RPF Superrational Interpretation $\text{RPF}_{\mathbb{S}_{r_\text{min}}}$} \label{loc:StdMinRpfRationalInterpretation}

I extend the Dyck natural numeral system to yield a system capable of representing all members of $\mathbb{Q}$, as well as some algebraic irrationals and even some transcendental numbers.

\subsubsection{An Algebraic Definition of the Superrational Numbers} \label{loc:AlgebraicDefOfS}
In order to introduce the set of superrational numbers, which will be the target set in the Dyck superrational numeral system, I first begin with an alternative construction of the natural numbers inspired by the fundamental theorem of arithmetic; I will then go on to extend the construction, thereby defining the set of superrational numbers.

\begin{definition} \label{def:Natural}
The set $\mathbb{N}$ of nonnegative integers, also called the set of \emph{natural numbers}, is the smallest subset  of the set $\mathbb{R}$ of real numbers satisfying:
\begin{enumerate}
  \item $0 \in \mathbb{N}$. (Existence of a zero element)
  \item If $p$ is prime and $n \in \mathbb{N}$, then $p^{n} \in \mathbb{N}$. (Prime-base exponentiation)
  \item If $m,n \in \mathbb{N}$, then $mn \in \mathbb{N}$.  (Multiplication)
\end{enumerate}

\end{definition}

 For convenience in the discussion which follows, I use the following terminology:
\begin{definition} \label{def:PowerTowerOfPrimes}
A \emph{power tower of primes} is a power tower in which the base and all exponents are prime numbers.
\end{definition}
\begin{definition}  \label{def:PowerTowerOfPrimesWithNegatableExponents}
A \emph{power tower of primes with negatable exponents} is a power tower in which the base is a prime number and the absolute values of all exponents are prime numbers.
\end{definition}

It is straightforward to show that  Definition~\ref{def:Natural} yields exactly the set of
nonnegative integers:

\begin{itemize}
  \item By closure rules (1) and (2), the zeroth powers $p^{0}$ of all prime numbers $p$ are in
  $\mathbb{N}$. Since $p^{0} = 1$ for all primes $p$, we know that $1 \in \mathbb{N}$.

  \item With $1 \in \mathbb{N}$, closure rule (2) implies that the first powers
  $p^{1}$ of all prime numbers $p$ are also in $\mathbb{N}$. Thus all prime numbers
  are in $\mathbb{N}$.

  \item Again applying closure rule (2), we see that all integers expressible as finite power towers of primes lie in $\mathbb{N}$.

  \item Applying closure rule (3), it follows that all integers expressible as products of finite power towers of primes are in $\mathbb{N}$. By the fundamental theorem of arithmetic, every integer $n \ge 2$ admits such a representation and therefore is a member of $\mathbb{N}$.  
\end{itemize}
And so we see that the set $\mathbb{N}$ of nonnegative integers as constructed in Definition~\ref{def:Natural} is the smallest subset of $\mathbb{R}$
  containing $0$, $1$, and all integers greater than $1$.

\begin{remark}
An analogous alternative definition of the set $\mathbb{N}_+$ of positive integers is the closure of $\{1\}$ under prime-base exponentiation and multiplication.
\end{remark}

The inclusion of just one more operation yields a vastly larger set.

\begin{definition} \label{def:Superrational}
The set $\mathbb{S}$, also called the set of \emph{superrational numbers}, is the smallest subset of $\mathbb{R}$ satisfying:
\begin{enumerate}
  \item $0 \in \mathbb{S}$. (Existence of a zero element)
  \item If $p$ is prime and $s \in \mathbb{S}$, then $p^{s} \in \mathbb{S}$. (Prime-base exponentiation)
  \item If $s_1,s_2 \in \mathbb{S}$, then $s_1 s_2 \in \mathbb{S}$.  (Multiplication)
  \item If $s \in \mathbb{S}$, then $-s \in \mathbb{S}$.  (Negation)
\end{enumerate}

\end{definition}

The properties already present from our treatment of $\mathbb{N}$ imply that
$\mathbb{N} \subseteq \mathbb{S}$. Let us now consider some consequences arising
from including the operation of negation in the closure:

\begin{itemize}
  \item Since all natural numbers are in $\mathbb{S}$, all of their negatives
  are in $\mathbb{S}$ as well. Hence all integers are in $\mathbb{S}$. Thus,
  for example, $-12 \in \mathbb{S}$.

  \item Since all negative integers are in $\mathbb{S}$, (2) and (3) taken
  together imply that all rational numbers are in $\mathbb{S}$. Thus, for
  example, $2^{-1} = \frac{1}{2}$ and
  $2^{5}13^{-2} = \frac{32}{169}$ are in $\mathbb{S}$.

  \item Since all rational numbers are in $\mathbb{S}$, (2) and (3) further
  imply that some algebraic irrational numbers, such as
  $2^{2^{-1}} = 2^{\frac{1}{2}} = \sqrt{2}$ and
  $2^{2^{-1}}3^{2^{-1}} = 2^{\frac{1}{2}}3^{\frac{1}{2}} = \sqrt{6}$, are in
  $\mathbb{S}$.

  \item Since $\sqrt{2}$ is in $\mathbb{S}$, (2) implies that the
  Gelfond--Schneider constant $2^{\sqrt{2}}$, a transcendental number, is in
  $\mathbb{S}$.
  
  \item Taken together, the four closure rules imply that $\mathbb{S}$ equals the smallest subset of $\mathbb{R}$ containing 0, 1, -1, and all products of finite power towers of primes with negatable exponents.
\end{itemize}

And so we see that the recursive nature of these set definitions implies that merely
extending the closure of $\{0\}$ under multiplication and prime-base
exponentiation to include an additional operation of negation expands the set
to contain all rational numbers, a subset of the
algebraic irrationals, and even a subset of the transcendental numbers.  Furthermore, we identify $\mathbb{S}$ as the smallest subset of $\mathbb{R}$ containing $0, \pm 1$, and all numbers expressible as products of finite power towers of primes with negatable exponents.

\begin{remark}
The algebraic development in this section, to include the alternative definition of the natural numbers and then its extension to define the superrational numbers, would never have occurred to me had I not been looking at topics in number theory from the perspective of formal language theory.  What I am doing might arguably be called \emph{grammatical number theory}.
\end{remark}

\subsection{Extension of Prime Factorization to Nonzero Rational Numbers}  \label{loc:Rational_prime_factorizations}

This section may be safely skipped, being only intended to facilitate an intuitive understanding of superrational recursive prime factorizations by approaching them from a different perspective.

I describe an extension of nonrecursive integer prime factorization such that every rational number other than 0, 1 and -1 may be viewed as if it has a prime factorization.  

The prime factorization of of each natural number $n \ge 2$ is a product of its prime factors such that the product evaluates to $n$ (if we understand prime numbers to be their own prime factorizations); we are guaranteed by the fundamental theorem of arithmetic that the product is unique up to the order of the factors.  Thus we may write $40$ as
$2 \times 2 \times 2 \times 5$, or more briefly in exponential form as $2^3 \cdot 5^1$.  Let us consider the general exponential form of the prime factorization of $n$
\begin{equation} \nonumber
p_{a_1}^{b_1} \cdots p_{a_k}^{b_k},
\end{equation}
where $p_{a_{i}}$ is the $i$th prime factor of $n$, $p_{a_{k}}$ is the greatest prime factor of $n$, and $(b_{1}, \ldots b_{k})$ is the unique sequence of positive integers such that the expression evaluates to $n$.

But let us now relax our requirement that the integers $b_{i}$ in $(b_{i})_{i=1}^{k}$ be positive, allowing negative integers to be included as well. Then we could write $5.6$, for example, as
\begin{equation} \nonumber
p_{1}^{2} \cdot p_{3}^{-1} \cdot p_{4}^{1}.
\end{equation}
Observe that the above expression bears a striking resemblance to the exponential form of prime factorization. Indeed, just as with prime factorization, the expression is unique up to the order of its factors, since $5.6$ can be uniquely expressed as the reduced fraction $\frac{28}{5}$, the numerator and denominator each corresponding to unique prime factorizations:
\begin{equation} \nonumber
5.6 = \frac{28}{5} = \frac{2\times 2 \times 7}{5}.
\end{equation}
If we are willing to include -1 as a factor (though freely admitting that -1 is not  prime), we may represent negative rationals as well.  Hence we may extend prime factorization to rational numbers other than -1, 0, and 1.
\begin{definition} \label{def:PrimeFactorizationOfRationalNumber}
Let $q$ be a rational number other than -1, 0 or 1, and let $(b_{1}, \ldots, b_{k})$ be the integer sequence of shortest length such that ${\mid}q{\mid}$ is equal to $p_{a_{1}}^{b_{1}} \cdots p_{a_{k}}^{b_{k}}$, where $(p_{a_{i}})_{i=1}^{k}$ is a subsequence of the sequence $\mathbf{P}$ of prime numbers.  Then the \emph{rational prime factorization} of $q$ is 
$p_{a_{1}}^{b_{1}} \cdots p_{a_{k}}^{b_{k}}$ if $q > 0$; otherwise, the rational prime factorization of $q$ is
$-1 \cdot p_{a_{1}}^{b_{1}} \cdots p_{a_{k}}^{b_{k}}$.
\end{definition}
\begin{remark}
I used the name ``\emph{rational} prime factorization'' rather simply ``prime factorization'' in order to emphasize that the extension is just that, an extension.   The concept of prime factorization is indeed distinct from that of \emph{rational} prime factorization, as the fundamental theorem of arithmetic only concerns integers, specifically those greater than 1.  Also, note that while I was able to write the prime factorization of decimal 40 without using exponentiation, there is no integer $k$ satisfying
\begin{equation} \nonumber
5.6 = 2 \times 2 \times \underbrace{5 \times \cdots \times 5}_{k \text{ $5$s}} \times 7.
\end{equation}
\end{remark}

\subsection{The Standard Superrational Spelling Function $\upgamma_{\mathbb{S}_{r}}$}
\begin{definition} \label{def:MostSignificantPrimeBase}
Let $s$ be a nonzero superrational number.  The \emph{greatest prime base} of $s$ is given by the function $
\text{gpb}: {\mathbb{S} \setminus \{0\}} \rightarrow \{p_{1}, p_{2}, p_{3}, \ldots\}$ such that
\begin{itemize}
\item
If $|s| = 1$, then $\text{gpb}(s) = p_{1}$.
\item
Otherwise, let the integer sequence $(a_{1}, \ldots, a_{m})$ satisfy the equation  $s = \prod_{i = 1}^{m} p_{i}^{a_{i}}$ such that $m$ is the greatest number for which $a_{m} \neq 0$.  Then  $\text{gpb}(s) = p_{m}$.
\end{itemize}
\end{definition}
\begin{remark}
Informally, the greatest prime base may be thought of as the largest prime number that must appear in the product in order for the product to evaluate to $s$.
\end{remark}

As before, but with respect to $\mathbb{S}$, we will first define a nonsurjective precursor to our bijective spelling function and then define the spelling function by specifying its graph.
\begin{definition} \label{def:Nonsurjective_Superrational_Spelling}  Let $\Sigma^{*}$ be the Kleene closure of the set $\{(,)\}$.  Then the \emph{standard nonsurjective RPF superrational transcription function}, denoted by   $\upgamma'_{\mathbb{S}_{r}}$, is given by  $\upgamma'_{\mathbb{S}_{r}} : \mathbb{S} \rightarrow \Sigma^{*}$, where
\begin{itemize}
\item
For $s = 0$, $\gSr[s][\prime] = \upepsilon$.
\item
For $s = 1$, $\gSr[s][\prime] = {'()'}$.
\item
For $s < 0$,  $\gSr[s][\prime] = \gSr[|s|][\prime]\myfrown {'()'}$.
\item
Otherwise, let $p_{m}$ be the greatest prime base of $s$, and let integer sequence $(a_{1}, \ldots, a_{m})$ satisfy the equation  $s = \prod_{i = 1}^{m} p_{i}^{a_{i}}$.  Then
\begin{equation}
            \gSr[s][\prime] =
            { } \underset{i = 1}{\stackrel{m}{\bigoplus}} {(\;{'('}\myfrown
            \upgamma_{\mathbb{S}\textsubscript{r}}^{\prime}(a_{i}})       
            \myfrown{')'\;)}. \nonumber
\end{equation}
\end{itemize}
\end{definition}

\begin{definition} \label{def:SuperrationalSpellingFunction}
\emph{The standard RPF superrational spelling function},
also called  ``the standard minimal RPF superrational spelling function'' and
denoted by $\upgamma_{\mathbb{S}_{r}}$, is given by
\begin{equation}\nonumber
\upgamma_{\mathbb{S}_{r}} = \{(s,w) \in \mathbb{S} \times
\upgamma'_{\mathbb{S}_{r}}(\mathbb{S})
 \mid w = \upgamma'_{\mathbb{S}_{r}}(s)\},
\end{equation}
where $\upgamma'_{\mathbb{S}_{r}}(\mathbb{S})$ is the image of $\upgamma'_{\mathbb{S}_{r}}$.
\end{definition}

\begin{remark}
A partial reference implementation of Definition~\ref{def:SuperrationalSpellingFunction} may be found among the ancillary files accompanying this paper.
\end{remark}

\begin{example} \label{ex:SpellingOfNegativeTwoNinths}
Let us spell the Dyck superrational corresponding to decimal $-0.\overline{2}$.

First we note that  $-0.\overline{2} = -\frac{2}{9} = -1 \cdot 2^{1} \cdot 3^{-2} = -p_{1}^{1}p_{2}^{-2}$.
Thus the standard RPF superrational spelling corresponding to  decimal $-0.\overline{2}$ is

\begin{equation} \nonumber
\begin{aligned}
&\upgamma_{\mathbb{S}_{r}}(-0.\overline{2})  = \upgamma_{\mathbb{S}_{r}}(0.\overline{2}) \myfrown {'()'}    \\
 & = {'('} \myfrown 
 \upgamma_{\mathbb{S}_{r}}(1) \myfrown
 {')'} \myfrown
 {'('} \myfrown
 \upgamma_{\mathbb{S}_{r}}(-2) \myfrown
 {')'} \myfrown {'()'} 
 \\
 & = {'(())'} \myfrown
 {'('} \myfrown
 \upgamma_{\mathbb{S}_{r}}(2) \myfrown {'()'}
 \myfrown {')()'} 
 \\
  & = {'(())('} \myfrown
 {'('} \myfrown 
 \upgamma_{\mathbb{S}_{r}}(1) \myfrown
 {')'} \myfrown {'())()'} 
 \\
  & = {(())((())())()}.
\end{aligned}
\end{equation}
\end{example}

\begin{example} \label{ex:SpellingOfSqrtTwo}
Having spelled a Dyck superrational corresponding to a rational number, let us now spell the Dyck superrational  corresponding to $\sqrt{2}$, an algebraic irrational number.

Observe that  $\sqrt{2} = 2^{\frac{1}{2}} = 2^{2^{-1}}$.
Thus the standard RPF superrational spelling corresponding to $\sqrt{2}$ is

\begin{equation} \nonumber
\begin{aligned}
&\upgamma_{\mathbb{S}_{r}}(\sqrt{2})  =  \upgamma_{\mathbb{S}_{r}}(2^{2^{-1}}) \\
 & = {'('} \myfrown  \upgamma_{\mathbb{S}_{r}}(2^{-1}) \myfrown   {')'} \\
 & = {'('} \myfrown      {'('} \myfrown       \upgamma_{\mathbb{S}_{r}}(-1)     \myfrown   {')'}    \myfrown   {')'}  \\
 & = {'('} \myfrown      {'('} \myfrown       \upgamma_{\mathbb{S}_{r}}(1) \myfrown   {'()'}   \myfrown   {')'}    \myfrown   {')'}  \\
 & = {'('} \myfrown      {'('} \myfrown   {'()'}    \myfrown   {'()'}   \myfrown   {')'}    \myfrown   {')'}  \\
 & = {((()()))}.
\end{aligned}
\end{equation}
\end{example}
\begin{remark} \label{ex:SpellingOfSqrt2}
The OEIS sequence entitled ``Expansion of sqrt(2) in base 2'' \cite{oeisA004539}, which is infinite, thus corresponds to the finite sequence $(1,1,1,0,1,0,0,0)$, which I show with left parentheses encoded by 1s and right parentheses encoded by 0s in order to avoid confusion with parentheses as used in sequence notation.  
\end{remark}

\begin{example} \label{ex:SpellingOfGelfondSchneider}
Let us spell the Dyck superrational corresponding to the Gelfond-Schneider constant, the transcendental number $2^{\sqrt{2}}$.

Since  $2^{\sqrt{2}} = 2^{2^{2^{-1}}}$,
the standard RPF superrational spelling corresponding to $2^{\sqrt{2}}$ is

\begin{equation} \nonumber
\begin{aligned}
&\upgamma_{\mathbb{S}_{r}}(2^{\sqrt{2}})  =  \upgamma_{\mathbb{S}_{r}}(2^{2^{2^{-1}}}) \\
 & = {'('} \myfrown  \upgamma_{\mathbb{S}_{r}}(2^{2^{-1}}) \myfrown   {')'} \\
 & = {'('} \myfrown {'('} \myfrown  \upgamma_{\mathbb{S}_{r}}(2^{-1}) \myfrown   {')'} \myfrown   {')'} \\ 
 & = {'('} \myfrown    {'('} \myfrown   {'('} \myfrown       \upgamma_{\mathbb{S}_{r}}(-1)     \myfrown   {')'}  \myfrown   {')'}   \myfrown   {')'}  \\
 & = {'('} \myfrown    {'('} \myfrown   {'('} \myfrown       \upgamma_{\mathbb{S}_{r}}(1) \myfrown   {'()'}   \myfrown   {')'}    \myfrown   {')'} \myfrown   {')'}  \\
 & = {'('} \myfrown   {'('} \myfrown    {'('} \myfrown   {'()'}    \myfrown   {'()'}   \myfrown   {')'}    \myfrown   {')'} \myfrown   {')'} \\
 & = {(((()())))}.
\end{aligned}
\end{equation}
\end{example}
\begin{remark} \label{ex:SpellingOf2ToSqrt2}
The OEIS sequence entitled ``Decimal expansion of 2\^{}sqrt(2)'' \cite{oeisA007507}, which is inifinite, thus corresponds to the finite sequence $(1,1,1,1,0,1,0,0,0,0)$, which I show using the same encoding as that for the previous sequence.  
\end{remark}

Observe how our modification of the spelling function allowed us to enlarge its domain from $\mathbb{N}$ to $\mathbb{S}$ without requiring intermediate modifications to go from $\mathbb{N}$ to $\mathbb{Z}$ to  $\mathbb{Q}$.  This is because the recursive nature of the spelling function implies that if it can spell negative numbers, it can also spell negative exponents.  Indeed, enlargement of the domain of the spelling function from $\mathbb{N}$ to merely $\mathbb{Z}$, or even from $\mathbb{N}$ to merely $\mathbb{Q}$, would require artificial restrictions imposed to limit the recursivity of the function.  In other words, with recursive prime factorizations it is easier and more straightforward to go directly from representing natural numbers to representing superrational numbers, than to go through intermediate sets to represent the superrationals.  When we modify the spelling function so we can spell negative numbers, we get the ability to spell all superrational numbers ``for free.''

Before we define the standard minimal RPF superrational interpretation, let us give a name to its underlying language.

\begin{definition} \label{def:LDQmin}
The \emph{standard quasiminimal RPF language}, denoted by $\mathcal{D}_{r_{\text{qmin}}}$, is the codomain of $\upgamma_{\mathbb{S}_{r}}$.
\end{definition}

At long last we arrive at the definition of $\text{RPF}_{\mathbb{S}_{r_{\text{min}}}}$.

\begin{definition}\label{def:StdMinimalRPFSuperrationalInterpretation}
   The \emph{standard minimal RPF superrational interpretation}, denoted by  $\text{RPF}_{\mathbb{S}_{r_{\text{min}}}}$, is the interpretation $(\mathcal{D}_{r_{\text{qmin}}}, \mathbb{S}, \upgamma^{-1}_{\mathbb{S}_{r}})$.
\end{definition}

Throughout the rest of this paper, a certain Dyck word occurs so often that I give it a special name.
\begin{definition} \label{def:EmptyPair}
An \emph{empty pair} is the string ${()}$.
\end{definition}

I now demonstrate that $\DrQmin \subsetneq \mathcal{D}$.
\begin{theorem} \label{thm:DrQmin_is_proper_subset_of_D}
The standard quasiminimal RPF language \DrQmin is a proper subset of the Dyck language $\mathcal{D}$.
\end{theorem}
\begin{proof}
I will show that $\DrQmin \subseteq \mathcal{D}$ and then provide an element of $\mathcal{D} - \DrQmin$.

Comparing the definitions of the nonsurjective precursors of the natural and superrational spelling functions, we see that they spell natural numbers identically.  From Theorem~\ref{thm:StdMinNaturalRPFIsProperSubsetOfD}, we already know that the codomain \DrMin of \gNr is a proper subset of $D$; thus the set of natural numbers in \DrQmin  is a proper subset of $\mathcal{D}$.
Now let set $T = \mathcal{D} - \DrMin$. I first consider the members of $T$ which are spellings of negative integers.  As we can see from the definition of the precursor of \gSr, these are spelled by appending exactly one empty pair to the spellings of their absolute values.  The spellings of all negative integers are therefore in $\mathcal{D}$, since the concatenation of two Dyck words is also a Dyck word .   The only remaining members of $T$ to consider are spellings of numbers $s \in \mathbb{S} - \mathbb{Z}$, where $\mathbb{Z}$ is the set of integers.  But since these spellings arise recursively from appending one empty pair to the spelling of a positive number in $\mathbb{S}$, we see by Lemma~\ref{lem:ChunkContainsDyckWordLenTwoLess} that words so produced must be in $\mathcal{D}$.

An example of a word in $\mathcal{D}$ but not in \DrQmin is  $w = ()()()$.
$w$ is clearly a Dyck word from Definition~\ref{def:DyckLanguage}.  But  $w$ cannot be in the codomain of \gSr, because additional empty pairs only occur through the recursive spellings of constituent chunks representing negative numbers.  Thus no two consecutive empty pairs will be appended to the end of a chunk.  The presence of $()()()$ in \DrQmin would imply an empty pair was appended to the spelling of a positive number, which is a contradiction.    This fact can be illustrated visually by considering the following two proposed recursion trees for the spelling of $-1$; the two nodes whose presence implies an invalid spelling are enclosed in boxes.

\begin{center}
\begin{tikzpicture}[level distance=1.0cm, sibling distance=0.4cm]
  \node (leftroot) at (1,.53) {
    \Tree [.$\upgamma_{\mathbb{S}_{-1}}$ 
             [.$\upgamma_{\mathbb{S}_{1}}$ ] 
             $()$ 
         ]
  };
  
  \node (rightroot) at (5.6,0) {
    \Tree [.$\upgamma_{\mathbb{S}_{-1}}$
             [.$\upgamma_{\mathbb{S}_{1}}$
                \edge; {\fbox{$\upgamma_{\mathbb{S}_{1}}$}}
                \edge; {\fbox{$()$}}
             ]
             $()$
         ]
  };
\end{tikzpicture}
\end{center}

In both proposed spelling trees, empty pairs arising from negation are displayed as literals in order to distinguish them. 
The tree on the left corresponds to the correct spelling of $-1$ as $()()$, while the tree on the right corresponds to an  incorrect spelling of $-1$ as $()()()$.  Specifically, the nodes enclosed in boxes would imply that the spelling of $1$ entails itself, yielding an infinite recursion chain; in reality, the spelling of $1$ \emph{terminates} a recursion chain.

\end{proof}

\bigskip

We can now be certain that $\gSr$ has an inverse such that the domain \DrQmin of that inverse is a proper subset of $\mathcal{D}$.

\subsection{The RPF Superrational Evaluation Functions and Their Relationship to Dyck Inflations}

\begin{definition}\label{def:StdDyckCompleteRationalEvaluation}
Let $w \in \mathcal{D}$.  Then the \emph{standard RPF superrational evaluation} of $w$ is the function $\upalpha_{\mathbb{S}_{r}} : \mathcal{D} \rightarrow \mathbb{S}$, where
\begin{itemize}
\item
If $w = \upepsilon$, $\upalpha_{\mathbb{S}_{r}}(w) = 0$.
\item
Otherwise, let $w'$ and $z$ be Dyck words where $w = w' \myfrown z$, with $z$ being the longest suffix of $w$ such that $w' \neq \upepsilon$ and $z$ contains only empty pairs.  Then
\begin{equation}
\upalpha_{\mathbb{S}_{r}}(w) = (-1)^{\text{dim}(z)}  \prod_{i = 1}^{\text{dim}(w')} p_{i}^{\upalpha_{\mathbb{S}_{r}}(d_{i})},\nonumber
\end{equation}
where $d_{i}$ is the content of the $i$th chunk in $w'$.
\end{itemize}
\end{definition}
\begin{example}
Let us evaluate the Dyck word $w = {(())((())())()}$ as a standard RPF superrational number.
Note that $w = w{'} \myfrown z$, where $w{'} = {(())((())())}$ and $z = {()}$.  Thus we have
\begin{equation}
\upalpha_{\mathbb{S}_{r}}(w) = (-1)^{\text{dim}(z)}  \prod_{i = 1}^{\text{dim}(w')} p_{i}^{\upalpha_{\mathbb{S}_{r}}(d_{i})},\nonumber
\end{equation}
with $d = {({'()'}, {'(())()'})}$.  Therefore
\begin{equation} \nonumber
\begin{aligned}
& \upalpha_{\mathbb{S}_{r}}(w) =
{(-1)}^{1}     p_{1} ^{   \upalpha_{\mathbb{S}_{r}}({'()'})}
p_{2}^{      \upalpha_{\mathbb{S}_{r}}({'(())()'})                  }                   \\
& =   -p_{1}^{   p_{1}^      {         \upalpha_{\mathbb{S}_{r}}(\upepsilon)          }       }     
p_{2}^{    {(-1)}^{1}        p_{1}^{          \upalpha_{\mathbb{S}_{r}}({'()'})          }                             }                 \\
& =   - p_{1}^{   p_{1}^{0}        }  p_{2}^{   - p_{1}^{     p_{1}^{    \upalpha_{\mathbb{S}_{r}}(\upepsilon)   }             }      }   
=   - p_{1} p_{2}^ {- p_{1}^{   p_{1}^{0}    }} 
= - p_{1}p_{2}^{-2}             \\
& =  -2 \cdot 3^{-2}  = -{\frac{\;2\;}{\;9\;}} = -0.2222\ldots\;.
\end{aligned}
\end{equation}
\end{example}

\begin{definition} \label{def:StdMinimalRationalEvaluation}
The \emph{standard minimal RPF superrational evaluation function}, denoted by $\upalpha_{\mathbb{S}_{r_{\text{min}}}}$, is the restriction of $\upalpha_{\mathbb{S}_{r}}$ to $\mathcal{D}_{r_{\text{qmin}}}$.
\end{definition}

We know that domain of $\aSrMin$ and the codomain of $\gSr$ are the same set, and soon I will go further to identify $ \gSr$ and $\aSr$ as mutual inverses.    But I will make my task easier by first discussing the concept of  \emph{inflations}.

\begin{remark}
Inflations are also relevant to research topics in recursive prime factorizations beyond the scope of this paper.  For example, readers may wish to explore using the Dyck-complete natural interpretation to establish a total ordering of the Dyck language by mapping words in $\mathcal{D}$ uniquely (though nonbijectively) to the set of natural numbers.  Dyck words could first be ordered numerically, then according to a function of nesting levels of chunks in words, and finally by semilength.  Alternative orderings making use of Dyck inflations are conceivable.
\end{remark}

Definition~\ref{def:MinimalInterpretation} tells us what we are to understand by the term ``minimal representation,'' and of course a nonminimal representation is a representation that is not minimal.  But let us view the concept of nonminimality from a different perspective.  In the decimal system, every member of the infinite sequence of strings $(102, 0102, 00102, \ldots)$ is understood to represent the number one hundred and two.  We have no problem, for instance, determining the quantity represented by 000102 on a car odometer. Nevertheless there is one particular string, 102, that is the minimal word representing one hundred and two; all others may be regarded as a result of ``inflating'' the minimal word with zeroes without changing the quantity it represents.  The situation is similar for words in nonminimal RPF languages, except that the inflations involve empty pairs rather than zeros and there almost always exist multiple locations in the word where inflation may occur.

Consider the three Dyck words $w = ()()(())$, $x = ()()(())()$ and $y = ()()(()())$.  Applying the standard natural evaluation function to each of these, we find that
\begin{equation} \nonumber
\upalpha_{\mathbb{N}_{r}}(w) = \upalpha_{\mathbb{N}_{r}}(x)  = \upalpha_{\mathbb{N}_{r}}(y) = 5.
\end{equation}

If we apply the standard natural spelling function to each of the above evaluations, we discover that
\begin{equation} \nonumber
\upgamma_{\mathbb{N}_{r}}({\upalpha_{\mathbb{N}_{r}}(w)}) = 
\upgamma_{\mathbb{N}_{r}}({\upalpha_{\mathbb{N}_{r}}(x)})  =
\upgamma_{\mathbb{N}_{r}}({\upalpha_{\mathbb{N}_{r}}(y)})
 = {'()()(())'} = w,
\end{equation}
so that $w$ is the only Dyck word of the three where the word equals the spelling of its evaluation.  Also notice that of $w$, $x$ and $y$, $w$ is the word of shortest length;
indeed, $()()(())$, being a member of $\mathcal{D}_{r_{\text{min}}}$, is the shortest Dyck word evaluating to 5 in the standard natural interpretation.

\begin{definition} \label{def:Inflation}
Let $d$ and $d'$ be Dyck words. $d$ is said to be an \emph{inflation} of $d'$ if $\upalpha_{\mathbb{N}_{r}}(d) = \upalpha_{\mathbb{N}_{r}}(d')$ and {${\mid} d {\mid} > {\mid} d' {\mid}$}, where ${\mid} s {\mid}$ denotes the length of string $s$.  If we do not wish to specify {$d'$}, we may simply say \emph{d is an inflation}, implying that some $w \in \mathcal{D}$ exists such that $d$ is an inflation of $w$.
\end{definition}
\begin{remark}
Dyck word $w$ is an inflation if and only if $w \neq \upgamma_{\mathbb{N}_r}(  \upalpha_{\mathbb{N}_{r}}(w))$.
\end{remark}
\begin{definition} \label{def:deflation}
Let $d'$ and $d$ be Dyck words such that $d$ is an inflation of $d'$.  Then we say $d'$ is a \emph{deflation} of $d$.
\end{definition}

Words in $\mathcal{D}_{r_{\text{min}}}$ cannot be inflations, as the only empty pairs they contain are those essential as separators to collectively designate indices of prime numbers contributing to the prime factorization (or, in the case of (), to spell the number 1).
Consider an arbitrary Dyck word $w$.  We can form a sequence with $w$ as its first term, with every subsequent term being a deflation of its predecessor; the longest possible such sequence must have a member of $\mathcal{D}_{r_{\text{min}}}$ as its last term.

\begin{definition} \label{def:Collapse}
Let $w$ and $w'$ be distinct Dyck words. We say \emph{w collapses to ${w'}$} if  $\upalpha_{\mathbb{N}_{r}}(w) = \upalpha_{\mathbb{N}_{r}}(w')$ and $w' \in \mathcal{D}_{r_{\text{min}}}$.
\end{definition}
\begin{example}
Dyck word $w = {(()()(())()())()()()}$ collapses to ${w{'} = (()()(()))}$; the standard natural evaluation of both $w$ and $w{'}$ is 32, and $w{'}$ is a member of $\mathcal{D}_{r_{\text{min}}}$.
\end{example}

The following gives us terminology to talk about strings consisting solely of empty pairs.
\begin{definition} \label{def:ContiguousEmptyPairs}
Let $w = \bigoplus_{i=1}^{k}{'()'}$ for some $k \in \mathbb{N}$.  Then we say  $w$ is \emph{the string of k contiguous empty pairs}.  If $k = 0$, we say the string is \emph{trivial}.
\end{definition}
Thus the trivial string of zero contiguous empty pairs is $\upepsilon$,  the string of one contiguous empty pair is ${()}$, the string of two contiguous empty pairs is ${()()}$, and so on.

The following definition allows us to quantify ``how inflated'' we consider a given Dyck word to be.
\begin{definition} \label{def:InflationaryDegree}
Let $w$ be a  Dyck word. The \emph{standard inflationary degree} of $w$, denoted by $\text{dinf}_{r}(w)$, is the largest integer $n$ such that a string of $n$ contiguous empty pairs can be deleted from $w$ to yield a Dyck word $w'$ satisfying the equation
\begin{equation} \nonumber
\upalpha_{\mathbb{N}_{r}}(w') = \upalpha_{\mathbb{N}_{r}}(w).
\end{equation}
\end{definition}
\begin{example}
Let $w = {(()()()())()}$. The longest substring that can be deleted from $w$ to yield a Dyck word $w'$ with the same standard natural evaluation as that of $w$ is ${()()()}$, the string of 3 contiguous empty pairs.  Thus  $\text{dinf}_{r}({w}) = 3$.
\end{example}

It is useful to have a way of depicting where inflations can occur in standard recursive prime factorizations.
\begin{definition}  \label{def:inflatability_points}
Let $w$ be a nontrivial Dyck word, and let $e$ be an empty pair. The \emph{inflatability points} of $w$ are locations where appending $e$ yields an inflation $w'$ of $w$, with the restriction that prepending $e$ to another empty pair is prohibited.  We call an inflatability point located inside a word \emph{internal}, whereas the inflatability point located immediately after the last symbol of the word is called \emph{external} (or \emph{terminal}).
\end{definition}

\begin{example}
Consider the Dyck natural word $()$, which evaluates to 1.  Depicting inflatability points according to their order of occurrence from left to right, the notation  $()1$ asserts that $()$ has only a single inflatability point, the terminal one.  This is equivalent to asserting that the only place where the Dyck natural representation of 1 can be inflated is at its very end.
\end{example}

\begin{example}
Consider the Dyck natural word $(()(()))$, which evaluates to  8.  $(()(()1)2)3$ asserts that $(()(()))$ has three inflatability points, two of which are internal and one which is terminal.  This is equivalent to asserting that the only places where the Dyck natural representation of 8 can be inflated are at the locations shown.
\end{example}

\begin{example}
Consider the \emph{nonminimal} Dyck natural word $(()(()))()()()$, which evaluates to  8.  $(()(()1)2)()()()3$ is equivalent to asserting that $(()(()))()()()$ has three inflatability points, two of which are internal and one which is terminal.  Although empty pairs could be inserted at other places, namely between any two inflationary empty pairs, we are prohibited from doing so.
\end{example}

These considerations lead to the following theorem.
\begin{theorem}  \label{thm:SufficiencyOfInfPoints}
The set of inflatability points is sufficient to show every place where inflation may take place in a nontrivial Dyck word $w$, regardless of whether $w$ is itself an inflation.
\end{theorem}
\begin{proof}
To see this, we need only recall the alternative definition of \DrMin as that subset of Dyck words $w$ both avoiding $')())'$ and not ending with $')()'$.  Internal inflations correspond to the former constraint, while terminal inflations correspond to the latter.  If we insert $k \ge 1$ contiguous empty pairs between two consecutive right parentheses in a nontrivial Dyck word, at its end, or both, the result will be an inflation of its minimal counterpart, therefore not altering its evaluation as a natural  number;  in that regard, it does not matter whether $k$ is 1 or 1000.
\end{proof}

And so we obtain an alternative definition of the standard minimal RPF superrational interpretation.
\begin{definition} \label{def:StandardQuasiminimalRationalLanguage}
The \emph{standard quasiminimal RPF language}, also called \emph{the set of Dyck superrational numbers} and denoted by $\mathcal{D}_{r_{\text{qmin}}}$, is the set of all Dyck words $d$ such that $\text{dinf}_{r}(d) \leq 1$.
\end{definition}
\begin{remark}
The designation \emph{quasiminimal} and subscript  $\text{qmin}$ are due to the fact that \DrQmin represents $\mathbb{N}$ nonminimally, whereas it represents $\mathbb{S}$ minimally. In hindsight, using $\mathcal{D}_{r_{\mathbb{N}}}$ and  $\mathcal{D}_{r_{\mathbb{S}}}$ to respectively denote what I call \DrMin and \DrQmin  in this paper would have been more intuitive.  I could even further simplify the notation by omitting the permutation subscript $r$, yielding $\mathcal{D}_{\mathbb{N}}$ and $\mathcal{D}_{\mathbb{S}}$. I may indeed switch to that notation in future revisions.
\end{remark}

The only nodes to which a single inflationary empty pair cannot be appended in the course of spelling a number are those corresponding to $\upepsilon$.

The standard quasiminimal RPF language $\mathcal{D}_{r_{qmin}}$ has an alternative, non-numerical definition equivalent to the two definitions already given. I leave the following as a proposition, as I do not need the definition to further develop the concepts and structures in this paper.  For those readers wishing to demonstrate the equivalence of the purely grammatical definition with the other two, Definition~\ref{def:inflatability_points} and Theorem~\ref{thm:SufficiencyOfInfPoints} may be helpful.

\begin{prop} \label{prp:AltStandardQuasiminimalRationalLanguage}
$\mathcal{D}_{r_{\text{qmin}}}$ is the set of all $d \in \mathcal{D}$ such that $d$ fulfills the following two criteria:
   \begin{itemize}
      \item $d$ avoids  ${)()())}$.
      \item $d$ does not end with ${)()()}$.
   \end{itemize}
\end{prop}
\begin{remark}
Essentially, the proposition captures the fact that appending an optional single terminal empty pair is sufficient to permit the inclusion of negative coefficients, which in this recursive system implies the extension of $\mathbb{N}$ to $\mathbb{S}$.  Using the nonnumerical alternative definition of $\mathcal{D}_{r_{\text{qmin}}}$, the definitions of ``inflation'' (Def.~\ref{def:Inflation}), ``collapse'' (Def.~\ref{def:Collapse}) and ``inflationary degree'' (Def.~\ref{def:InflationaryDegree}) could be restated so that they do not refer to evaluation function $\upalpha_{\mathbb{N}_{\text{r}}}$, thus eliminating the necessity to perform prime factorization in the course of their application.
\end{remark}

I have now developed sufficient conceptual scaffolding upon which to complete my argument that $\gSr$ and $\aSrMin$ are mutual inverses.  I shall now write from a higher-level point of view, favoring conversational language over a further onslaught of lemmas and theorems, with the intent to persuade rather than inundate.

I have already established that the inverse \gSr[][{-1}] of \gSr exists.  Since \gSr has an inverse, I can prove that \aSrMin  and \gSr are mutual inverses by showing that for all $s \in \mathbb{S}$
\begin{equation} \nonumber
\aSrMin[\gSr[s]] = s.
\end{equation}

From Theorem~\ref{thm:StdNatEvalAndSpellingAreInverses}, we know that $\aNrMin[\gNr[n]] = n$ for all $n \in \mathbb{N}$.  If we compare the nonsurjective precursor of the natural spelling function $\gNr[][\prime]$ with its superrational counterpart  $\gSr[][\prime]$, we see that the spelling of a natural number by $\gSr$ is identical to the spelling of the same number by $\gNr$.  Furthermore, if we compare the noninjective precursor of $\aNrMin$ with its superrational counterpart, we see that these functions evaluate Dyck words $d$ of inflationary degree 0  such that they yield the same evaluation, with the machinery of the superrational version simplifying to that of its natural counterpart.  Thus
\begin{equation} \nonumber
\forall n \in \mathbb{N} \quad   [\aSrMin[n] = \aNrMin[n]].
\end{equation}

It remains to be shown that   $\aSrMin[\gSr[s]] = s$ for all $s \in \mathbb{S} \setminus \mathbb{N}$.

For the case of $s = -1$, manual computation gives
\begin{equation} \nonumber
\aSrMin[\gSr[s]]  =     \aSrMin[{'()()'}]      = -1 = s.
\end{equation}

The only superrational numbers left to be considered are those expressible as products of finite power towers of primes with negatable exponents.

Looking at the definition of the nonsurjective precursor $\gSr[][\prime]$ of $\gSr$, we see $\gSr$ spells superrational numbers in a canonical way.  Numbers are treated as products of finite power towers of primes with negatable exponents, outputting a sequence of chunks.  Successive prime bases are encoded as chunk ordinality, sign is encoded as chunk cardinality (where the presence or absence of negation results in the respective presence or absence of a corresponding single inflationary empty pair), and prime-base exponentiation is encoded by nesting of chunks.  Furthermore, because $\gSr$ is surjective, it never produces Dyck words containing more empty pairs than those required to represent the number.

Turning to the definition of the noninjective precursor $\aSr$ of $\aSrMin$, we see $\aSrMin$ also operates in a canonical manner, evaluating members of \DrQmin  by ``reading off'' its input such that sequences of chunks  are translated into equivalent finite power towers of primes with negatable exponents.  The order of chunks is treated as an encoding of prime bases, the number of chunks is treated as an encoding of sign (with sign being determined by the presence or absence of a single inflationary empty pair), and the nesting of chunks is treated as an encoding of prime-base exponentiation.  Furthermore, because the domain of $\aSrMin$ is restricted to $\DrQmin$, only minimal words and first-degree inflations are permitted; thus the function will never produce evaluations of Dyck words in which multiple negations are applied to a number at a particular location within the product of power towers.

There is one aspect in which a comparison of $\gSr$  with the evaluation function $\aSrMin$ may give the appearance that the two are not mutual inverses: recursion chains for $\gSr$ terminate with the spelling of 1 as $()$, whereas recursion chains for $\aSrMin$ terminate by evaluating $\upepsilon$ as 0.  However, this asymmetry only reflects a choice of different base cases for the two functions; $\upepsilon$ is a trivial substring of $()$, so that  \gSr and \aSrMin still undo each other.

And so we do not have to resort to further formal definitions, lemmas and proofs to show that $\aSrMin$ and $\gSr$ are mutual inverses; examination of those functions' machinery reveals that the relationship arises naturally.  Recursion places its canonical stamp on both superrational numbers and the Dyck words minimally representing them, enforcing the correspondence between the two sets.

\subsection{Why a Second Incremental Inflation Set Is Not Particularly Useful}

Having modified $\text{RPF}_{\mathbb{N}_{r_\text{min}}}$ to yield $\text{RPF}_{\mathbb{S}_{r_\text{min}}}$ by permitting first-degree inflations, we might imagine that further modifying it to permit second-degree inflations would buy us additional representational power, so that we could use terminal empty parenthesis pairs to encode additional two-state attributes---for example, an attribute called ``spin,'' which could either be clockwise or counterclockwise.  Furthermore, we might imagine that the attributes of sign and of spin would be \emph{orthogonal}, i.e., that the value of the sign could be determined without having to know that of the spin (and vice versa). But such a happy state of affairs is not the case, as the following theorem states for standard RPF interpretations.

\begin{theorem} \label{thm:QuasiQuasiMinimalNotUseful}
Let $R_{0}$ be the standard minimal RPF language of 0-degree inflations, let $R_{1}$ be the standard quasiminimal RPF language of inflations of degree $\le 1$, and let $R_{2}$ be the standard RPF language of inflations of degree $\le 2$.  Then $R_{2}$ is no more powerful than $R_{1}$ for encoding orthogonal binary attributes.
\end{theorem}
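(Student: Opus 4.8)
The plan is to make the informal hope preceding the theorem precise and then reduce it to a counting statement; it suffices throughout to take the ``base objects'' to be the natural magnitudes, represented canonically by their standard minimal spellings $w_{0}\in\mathcal{D}_{r_{\text{min}}}$. Following that discussion, I would say a standard RPF language $R$ \emph{encodes a $k$-tuple of orthogonal binary attributes by terminal empty pairs} if, above every base object $w_{0}$, the scheme assigns to each of the $2^{k}$ attribute-assignments a distinct word of the form $w_{0}\myfrown\bigoplus_{i=1}^{k'}{'()'}\in R$, with ``orthogonal'' meaning that all $2^{k}$ assignments are realizable above every $w_{0}$. Since such a word determines its trailing count, this forces $R$ to admit at least $2^{k}$ distinct terminal states above each $w_{0}$.

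The crux is a lemma counting those states: if $w_{0}$ is the standard minimal spelling of a natural number $n\ge 1$, then $\text{dinf}_{r}\bigl(w_{0}\myfrown\bigoplus_{i=1}^{k}{'()'}\bigr)=k$ for every $k\in\mathbb{N}$, while $w_{0}=\upepsilon$ (the case $n=0$) admits no nontrivial terminal inflation, since appending any nonempty string of empty pairs changes the standard natural evaluation from $0$ to $1$. I would argue as follows. First, $w_{0}\in\mathcal{D}_{r_{\text{min}}}$ is not an inflation (as noted after Definition~\ref{def:deflation}), so $\text{dinf}_{r}(w_{0})=0$: no nonempty run of contiguous empty pairs internal to $w_{0}$ is deletable while preserving $\upalpha_{\mathbb{N}_{r}}$; and by Definition~\ref{def:StdEvalDyck} an internal deletion in $w_{0}\myfrown\bigoplus_{i=1}^{k}{'()'}$ preserves the evaluation of the whole word exactly when it preserves that of $w_{0}$ alone --- which it never does. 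Second, by Lemma~\ref{lem:DimensionalityOfGammaIsGpf} the last chunk of $w_{0}$ is nonempty when $n\ge 2$, and $w_{0}={'()'}$ when $n=1$, so the terminal run of $w_{0}\myfrown\bigoplus_{i=1}^{k}{'()'}$ is exactly the appended block of length $k$ (length $k+1$ when $n=1$); deleting all $k$ appended pairs returns $w_{0}$ and preserves the evaluation, each being a factor $p_{i}^{0}=1$ in Definition~\ref{def:StdEvalDyck}, whereas deleting one further pair --- possible only when $n=1$ --- yields $\upepsilon$, whose evaluation is $0\ne 1$. Hence the longest deletable run has length exactly $k$.

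The lemma says the terminal states above $w_{0}$ that lie in $R_{j}$ are precisely those with $k\in\{0,1,\ldots,j\}$, i.e.\ exactly $j+1$ of them. So $R_{1}$ supplies $2=2^{1}$ terminal states --- exactly enough for one orthogonal binary attribute, the sign, as $\text{RPF}_{\mathbb{Q}_{r_{\text{min}}}}$ already witnesses, and no more --- whereas $R_{2}$ supplies only $3$, and two orthogonal binary attributes would require $2^{2}=4$. Since $3<4$, $R_{2}$ can carry at most one orthogonal binary attribute by terminal empty pairs: the extra third state it provides falls short of a whole additional bit. As $R_{1}\subseteq R_{2}$ also carries one, the two are tied, so $R_{2}$ is no more powerful than $R_{1}$ for this purpose, which is the assertion.

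I expect the delicate point to be the lemma --- in particular verifying that enlarging the terminal block neither renders a longer run elsewhere in the word deletable (controlled by the minimality of $w_{0}$, which keeps the appended block disjoint from every internal deletion) nor lets the terminal run itself be over-deleted past length $k$ (controlled by $\upalpha_{\mathbb{N}_{r}}(\upepsilon)=0$ together with Lemma~\ref{lem:DimensionalityOfGammaIsGpf} pinning the final chunk of $w_{0}$ as nonempty for $n\ge 2$). Everything after that is the arithmetic that ``degree at most $j$'' yields exactly $j+1$ terminal states, a quantity that first reaches $4$ only at $j=3$, so that neither $R_{1}$ nor $R_{2}$ clears the threshold for a second orthogonal binary attribute.
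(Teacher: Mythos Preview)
Your argument is correct and reaches the same conclusion as the paper, but the route is genuinely different in style and rigor. The paper's own proof is example-driven: it posits a particular two-attribute scheme (north/south in the penultimate terminal pair, west/east in the last), observes that the intermediate word $(())()$ is then ambiguous between ``northeast'' and ``southwest,'' and concludes that disambiguating by convention would destroy orthogonality. It never states or proves your lemma about $\text{dinf}_{r}(w_{0}\myfrown\bigoplus_{i=1}^{k}{'()'})=k$, nor does it frame the obstruction as a cardinality mismatch.

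Your approach, by contrast, formalizes ``encoding $k$ orthogonal binary attributes'' as requiring $2^{k}$ distinct terminal states above each base word, proves the lemma that pins the inflationary degree exactly, and then reduces the theorem to the arithmetic $3<4$. This is cleaner and more general: it rules out \emph{any} terminal-pair encoding of two orthogonal bits in $R_{2}$, not just the specific one the paper tries, and it makes transparent why the threshold for a second bit is $j=3$. The paper's version is quicker to read but leaves the reader to extract the counting principle; yours makes that principle the centerpiece. Both implicitly restrict attention to terminal-empty-pair schemes, which is the intended scope given the surrounding discussion.
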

\begin{proof}
$R_{1}$ includes minimal RPF words and their single terminal inflations, so it can be successfully used for encoding a single binary attribute modifying the evaluation of the word.  For example, suppose the attribute is $(\text{north}, \text{south})$, such that $(())$ means ``$2$ units due north,'' whereas its terminal inflation $(())()$ evaluates to ``$2$ units due south.''  This is indeed possible using a subset of the Dyck language confined to inflations of degree $\le 1$. But now suppose we try to extend the concept by using $R_{2}$ so we can encode an additional orthogonal binary attribute, say, that of $(\text{west}, \text{east})$,with the next-to-last empty parenthesis pair signifying north versus south and the last empty parenthesis pair signifying west versus east.  We might for instance na\"ively claim that whereas $(())$ represents ``$2$ units northwest'', $(())()()$ represents ``$2$ units southeast.''  Under such an interpretation, however, the word $(())()$ is ambiguous; without further information aside from the word itself and its interpretation, we cannot tell whether it signifies 2 units northeast or 2 units southwest.  In other words, we can only be sure of the values of words that are minimal or have ${'}()(){'}$ as a proper suffix .  We might be content to impose a convention for disambiguation, so that all occurrences of a minimal word concatenated with a single empty parenthesis pair would be resolved in favor of, say, south rather than east.  But to impose such a convention would destroy the orthogonality of the two attributes.
\end{proof}
\begin{remark}
The theorem states that the set of inflations of degree $\le 2$ is no more useful for encoding orthogonal binary attributes than the set of inflations of degree $\le 1$, but says nothing about sets including inflations of even higher degree.  Obviously, though, permitting even longer strings of inflationary empty pairs does not eliminate the ambiguity.
\end{remark}

\section{Dyck-complete Interpretations} \label{loc:DyckCompleteInterpretations}
While a minimal RPF natural interpretation and its corresponding minimal RPF superrational interpretation are sufficient to represent all natural numbers and all superrational numbers respectively,  and while each of these interpretations enjoys the property that a bijection exists between its underlying language and its target set,
we can generalize our notion of RPF languages to permit all possible inflations.  The result is the Dyck language, which underlies all generalized RPF interpretations, regardless of whether they are natural or superrational, or which prime permutations determine the ordering of factors in their words.

\begin{definition}\label{def:DyckComplete}
An interpretation is \emph{Dyck-complete} if its underlying language is $\mathcal{D}$.
\end{definition}

We already have our evaluation functions for the standard Dyck-complete natural and superrational interpretations; these are $\upalpha_{\mathbb{N}_{r}}$ and $\upalpha_{\mathbb{S}_{r}}$, respectively.

\begin{definition} \label{def:StdDyckCompleteNaturalInterpretation}
The \emph{standard RPF Dyck-complete natural interpretation}, denoted by $\text{RPF}_{\mathbb{N}_{r}}$,  is the interpretation $(\mathcal{D}, \mathbb{N},  \upalpha_{\mathbb{N}_{r}})$.
\end{definition}

\begin{definition} \label{def:StdDyckCompleteRationalInterpretation}
The \emph{standard RPF Dyck-complete superrational interpretation}, denoted by $\text{RPF}_{\mathbb{S}_{r}}$,  is the interpretation $(\mathcal{D}, \mathbb{S},  \upalpha_{\mathbb{S}_{r}})$.
\end{definition}

These interpretations are nonminimal, their evaluation functions being noninjective.  There is a spelling function associated with each interpretation---$\upgamma_{\mathbb{N}_{r}}$  for $\text{RPF}_{\mathbb{N}_{r}}$, $\upgamma_{\mathbb{S}_{r}}$  for $\text{RPF}_{\mathbb{S}_{r}}$---but words do not generally equal the spellings of their evaluations.  Indeed, the evaluation functions define equivalence relations partitioning the underlying languages of their interpretations into equivalence classes, where each equivalence class contains Dyck words that evaluate to the same number.  For example, let $R_{\mathbb{N}}$ be the equivalence relation
\begin{equation} \nonumber
R_{\mathbb{N}} = \{(w_{1}, w_{2}) \in \mathcal{D}^{2} \mid \upalpha_{\mathbb{N}_{r}}(w_{1}) = \upalpha_{\mathbb{N}_{r}}(w_{2}) \}.
\end{equation}
Then we can identify each equivalence class as $T_{n}$, where every member of the class evaluates to the natural number $n$.  Thus $T_{0} = \{\upepsilon\}$, $T_{1} = \{'()', '()()', '()()()', \ldots\}$, and so on.  In each equivalence class, there is exactly one word that is in the codomain of the spelling function; it is thus the only member of its class such that it is equal to the spelling of its evaluation.

The same considerations apply for the Dyck-complete superrationals, but the inclusion of negation in the recursive system complicates the partitioning; successive inflations at any given inflation point result in a ``toggling'' or alternation of corresponding evaluations.  For example, $((()))$ evaluates to $4$, $((()()))$ evaluates to $\sqrt{2}$,  $((()()()))$ evaluates to $4$, $((()()()()))$ evaluates to $\sqrt{2}$, and so on.

Figure~\vref{fig:EulerRPF}
shows the hierarchy of the languages $L$ underlying standard RPF interpretations, with the interpretations designated by subscripts.   Also shown are the same sets as named according to their status as subsets of the Dyck language: the standard Dyck minimals, the standard Dyck quasiminimals, and the Dyck language itself.

\begin{figure}[!htb]  
\centering
{
\Large
\tikzset{
node distance=3cm, 
every state/.style={thick, fill=gray!5}, 
initial text=$ $, 
}
\begin{tikzpicture}
\draw [thick, fill=gray!0] (0,-1) arc (-90:270:5cm and 3cm);
\draw [thick, fill=gray!8] (0,-0.5) arc (-90:270:3.5cm and 2cm);
\draw [thick, fill=gray!20] (0,0) arc (-90:270:3cm and 1cm);
\node [yshift=1cm] (0,0) {$L_{\text{RPF}_{\mathbb{N}_{r_{\text{min}}}}} = \mathcal{D}_{r_{\text{min}}}$};
\node [yshift=2.5cm] (0,0) {$L_{\text{RPF}_{\mathbb{S}_{r_{\text{min}}}}} = \mathcal{D}_{r_{\text{qmin}}}$};
\node [yshift=4cm] (0,0) {$L_{\text{RPF}_{\mathbb{N}_{r}}} = 
L_{\text{RPF}_{\mathbb{S}_r}} =
\mathcal{D}
$};
\end{tikzpicture}
}
\caption{Euler diagram illustrating the hierarchy of standard RPF languages.
}
\label{fig:EulerRPF}
\end{figure}
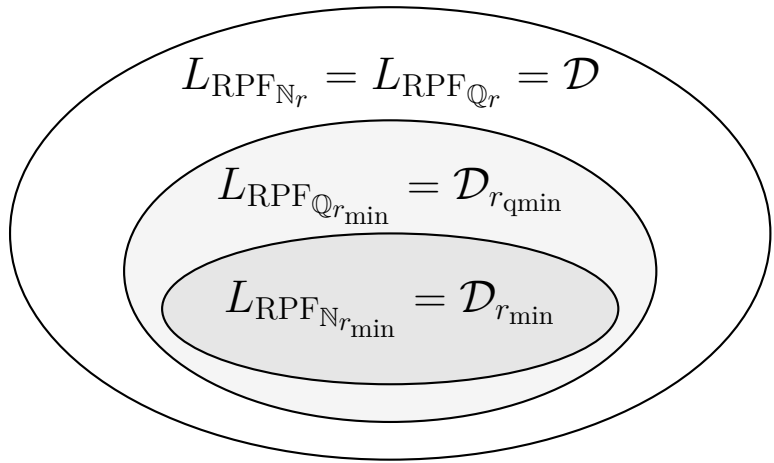

\section{Conclusion: Research Directions and Possible Applications}

I conclude with suggestions for further study of recursive prime factorizations, as well as possible applications in mathematics and computer science.

\subsection{Investigate the Properties of Stripes} \label{loc:InvestigateStripes}
\begin{definition} \label{def:Stripe}
Let $k$ be a natural number.  The \emph{stripe of semilength k}, denoted by $\uptheta(k)$, is the finite subsequence $s$ of all $n$ in $(0,1,2,3, \ldots)$ such that $n$ satisfies the equation ${\mid}\upgamma_{\mathbb{N}_{r}}(n){\mid} = 2k$, where ${\mid}w{\mid}$ is the length of string $w$.  We call $\uptheta(0)$ and $\uptheta(1)$ \emph{trivial stripes}.
\end{definition}
Let $S$ be the sequence of the first four nontrivial stripes.  Then $S$ is
\begin{equation}\nonumber
((2), (3,4), (5,6,8,9,16), (7,10,12,15,18,25,27,32,64,81,256,512,65536)).
\end{equation}
We see interesting patterns in $S$.
For example, the first member of the $i$th stripe in $S$ is the prime number $p_{i}$, and no term exists in $S_{i}$ such that its spelling entails the spelling of an earlier term.
Also,  the sequence of last terms of stripes in $S$ may be expressed as $(2, 2^{2}, 2^{2^{2}},2^{2^{2^{2}}})$, i.e., as base-2 tetration. 

Though interesting, none of the aforementioned patterns are surprising once we consider them from the perspective of recursive prime factorizations.  Base-2 tetration in nontrivial stripes arises from working with a binary symbol set in which nesting corresponds to prime-base exponentiation. No spelling of terms within stripes entails the spelling of earlier terms, as two words in a recursion chain cannot be of the same semilength.  Prime numbers are the first terms in nontrivial stripes because their representations are words in which all but one chunk are empty and therefore evaluate to 1, with the nonempty chunk only having a single level of nesting; all other terms within their stripes must therefore be greater in numerical value.

We may regard the infinite sequence of stripes of successive semilengths as a table $T$. Because the finite sequences that comprise the table have different lengths, $T$ has jagged edges.  The table displays additional patterns involving integer factorization, particularly as the row numbers of the table increase.  For example, let us denote the number residing at the $i$th row and $j$th column by $T_{i,j}$, with subscripts starting at 1.  Fixing the value of $j$, entries in higher rows appear to follow the pattern  $jp_{i-3}$.  For example, the number located at row 18, column 2 is $94 = 2p_{15}$, while the number located at row 19, column 3 is $159 = 3p_{16}$.

What other patterns characterize stripes? 
\begin{remark}
Ancillary file dyck\_stripes.csv contains a table of Dyck-natural stripes, populated by enumerating all natural numbers up to and including $16383$.
\end{remark}

\subsection{Explore Learnability Implications of Dyck Superrationals Corresponding to Lossless Factorization Trees}

Pipelines culminating in recursive factorization trees are studied in \cite{iudelevich2022tree,contucci2025statistical,breccia2025testing}, being illustrated in the third by

\begin{equation*}n=2^{64}\cdot 3=2^{2^{2\cdot 3}}\cdot 3 \mapsto  \tau(n)=\{2:\{2:\{\{2:1\},\{3:1\}\}\},3:1\}\mapsto  \begin{tikzpicture}[baseline={([yshift=-.5ex]current bounding box.center)}]
\node[inner sep=1pt,circle,draw,thick,fill=gray] (0) at (0,0) {\small $n$};
\node[inner sep=1pt,circle,draw,thick] (1) at (-0.5,0.5) {\small $2$};
\node[inner sep=1pt,circle,draw,thick] (2) at (0.5,0.5) {\small $3$};
\node[inner sep=1pt,circle,draw,thick] (12) at (-0.5,1) {\small $2$};
\node[inner sep=1pt,circle,draw,thick] (121) at (-1,1.5) {\small $2$};
\node[inner sep=1pt,circle,draw,thick] (122) at (0,1.5) {\small $3$};
\draw[thick] (122) -- (12) -- (1) -- (0) -- (2);
\draw[thick] (121) -- (12);
\end{tikzpicture}\label{eq:fig}\end{equation*}

References \cite{contucci2025statistical} and\cite{breccia2025testing} go further to encode the trees as Dyck words; in the discussion which follows, let us represent such trees in accordance with the convention shown above, the first fourteen trees therefore being 

\[\begin{tikzpicture}
\node[inner sep=1pt,circle,draw,thick,fill=gray] (0) at (0,0) {\small $1$};
\end{tikzpicture}\quad\begin{tikzpicture}
\node[inner sep=1pt,circle,draw,thick,fill=gray] (0) at (0,0) {\small $2$};
\node[inner sep=1pt,circle,draw,thick] (1) at (0,0.5) {\small $2$};
\draw[thick] (0) -- (1);
\end{tikzpicture}\quad\begin{tikzpicture}
\node[inner sep=1pt,circle,draw,thick,fill=gray] (0) at (0,0) {\small $3$};
\node[inner sep=1pt,circle,draw,thick] (1) at (0,0.5) {\small $3$};
\draw[thick] (0) -- (1);
\end{tikzpicture}\quad\begin{tikzpicture}
\node[inner sep=1pt,circle,draw,thick,fill=gray] (0) at (0,0) {\small $4$};
\node[inner sep=1pt,circle,draw,thick] (1) at (0,0.5) {\small $2$};
\node[inner sep=1pt,circle,draw,thick] (2) at (0,1) {\small $2$};
\draw[thick] (0) -- (1) -- (2);
\end{tikzpicture}\quad\begin{tikzpicture}
\node[inner sep=1pt,circle,draw,thick,fill=gray] (0) at (0,0) {\small $5$};
\node[inner sep=1pt,circle,draw,thick] (1) at (0,0.5) {\small $5$};
\draw[thick] (0) -- (1);
\end{tikzpicture}\quad\begin{tikzpicture}
\node[inner sep=1pt,circle,draw,thick,fill=gray] (0) at (0,0) {\small $6$};
\node[inner sep=1pt,circle,draw,thick] (1) at (-0.25,0.5) {\small $2$};
\node[inner sep=1pt,circle,draw,thick] (2) at (0.25,0.5) {\small $3$};
\draw[thick] (2) -- (0) -- (1);
\end{tikzpicture}\quad\begin{tikzpicture}
\node[inner sep=1pt,circle,draw,thick,fill=gray] (0) at (0,0) {\small $7$};
\node[inner sep=1pt,circle,draw,thick] (1) at (0,0.5) {\small $7$};
\draw[thick] (0) -- (1);
\end{tikzpicture}\quad\begin{tikzpicture}
\node[inner sep=1pt,circle,draw,thick,fill=gray] (0) at (0,0) {\small $8$};
\node[inner sep=1pt,circle,draw,thick] (1) at (0,0.5) {\small $2$};
\node[inner sep=1pt,circle,draw,thick] (2) at (0,1) {\small $3$};
\draw[thick] (0) -- (1) -- (2);
\end{tikzpicture}\quad\begin{tikzpicture}
\node[inner sep=1pt,circle,draw,thick,fill=gray] (0) at (0,0) {\small $9$};
\node[inner sep=1pt,circle,draw,thick] (1) at (0,0.5) {\small $3$};
\node[inner sep=1pt,circle,draw,thick] (2) at (0,1) {\small $2$};
\draw[thick] (0) -- (1) -- (2);
\end{tikzpicture}\quad\begin{tikzpicture}
\node[inner sep=0pt,circle,draw,thick,fill=gray] (0) at (0,0) {\small $10$};
\node[inner sep=1pt,circle,draw,thick] (1) at (-0.25,0.5) {\small $2$};
\node[inner sep=1pt,circle,draw,thick] (2) at (0.25,0.5) {\small $5$};
\draw[thick] (2) -- (0) -- (1);
\end{tikzpicture}\quad\begin{tikzpicture}
\node[inner sep=0pt,circle,draw,thick,fill=gray] (0) at (0,0) {\small $11$};
\node[inner sep=0pt,circle,draw,thick] (1) at (0,0.5) {\small $11$};
\draw[thick] (0) -- (1);
\end{tikzpicture}\quad\begin{tikzpicture}
\node[inner sep=0pt,circle,draw,thick,fill=gray] (0) at (0,0) {\small $12$};
\node[inner sep=1pt,circle,draw,thick] (1) at (-0.25,0.5) {\small $2$};
\node[inner sep=1pt,circle,draw,thick] (11) at (-0.25,1) {\small $2$};
\node[inner sep=1pt,circle,draw,thick] (2) at (0.25,0.5) {\small $3$};
\draw[thick] (2) -- (0) -- (1) -- (11);
\end{tikzpicture}\quad\begin{tikzpicture}
\node[inner sep=0pt,circle,draw,thick,fill=gray] (0) at (0,0) {\small $13$};
\node[inner sep=0pt,circle,draw,thick] (1) at (0,0.5) {\small $13$};
\draw[thick] (0) -- (1);
\end{tikzpicture}\quad\begin{tikzpicture}
\node[inner sep=0pt,circle,draw,thick,fill=gray] (0) at (0,0) {\small $14$};
\node[inner sep=1pt,circle,draw,thick] (1) at (-0.25,0.5) {\small $2$};
\node[inner sep=1pt,circle,draw,thick] (2) at (0.25,0.5) {\small $7$};
\draw[thick] (2) -- (0) -- (1);
\end{tikzpicture}\]

Removal of labels from the nodes of these trees forgets the numerical identities of the primes in the factorizations; this can be seen in the sequence of trees below corresponding to the one immediately above and noting that, for example, representations of primes 2, 3, 5, 7, 11 and 13 are identical, as are those of square-free semiprimes 6, 10 and 14:  

\[\begin{tikzpicture}
\node[inner sep=2pt,circle,draw,thick,fill=gray,label=-90:{\small $1$}] (0) at (0,0) {};
\end{tikzpicture}\quad\begin{tikzpicture}
\node[inner sep=2pt,circle,draw,thick,fill=gray,label=-90:{\small $2$}] (0) at (0,0) {};
\node[inner sep=2pt,circle,draw,thick] (1) at (0,0.5) {};
\draw[thick] (0) -- (1);
\end{tikzpicture}\quad\begin{tikzpicture}
\node[inner sep=2pt,circle,draw,thick,fill=gray,label=-90:{\small $3$}] (0) at (0,0) {};
\node[inner sep=2pt,circle,draw,thick] (1) at (0,0.5) {};
\draw[thick] (0) -- (1);
\end{tikzpicture}\quad\begin{tikzpicture}
\node[inner sep=2pt,circle,draw,thick,fill=gray,label=-90:{\small $4$}] (0) at (0,0) {};
\node[inner sep=2pt,circle,draw,thick] (1) at (0,0.5) {};
\node[inner sep=2pt,circle,draw,thick] (2) at (0,1) {};
\draw[thick] (0) -- (1) -- (2);
\end{tikzpicture}\quad\begin{tikzpicture}
\node[inner sep=2pt,circle,draw,thick,fill=gray,label=-90:{\small $5$}] (0) at (0,0) {};
\node[inner sep=2pt,circle,draw,thick] (1) at (0,0.5) {};
\draw[thick] (0) -- (1);
\end{tikzpicture}\quad\begin{tikzpicture}
\node[inner sep=2pt,circle,draw,thick,fill=gray,label=-90:{\small $6$}] (0) at (0,0) {};
\node[inner sep=2pt,circle,draw,thick] (1) at (-0.25,0.5) {};
\node[inner sep=2pt,circle,draw,thick] (2) at (0.25,0.5) {};
\draw[thick] (2) -- (0) -- (1);
\end{tikzpicture}\quad\begin{tikzpicture}
\node[inner sep=2pt,circle,draw,thick,fill=gray,label=-90:{\small $7$}] (0) at (0,0) {};
\node[inner sep=2pt,circle,draw,thick] (1) at (0,0.5) {};
\draw[thick] (0) -- (1);
\end{tikzpicture}\quad\begin{tikzpicture}
\node[inner sep=2pt,circle,draw,thick,fill=gray,label=-90:{\small $8$}] (0) at (0,0) {};
\node[inner sep=2pt,circle,draw,thick] (1) at (0,0.5) {};
\node[inner sep=2pt,circle,draw,thick] (2) at (0,1) {};
\draw[thick] (0) -- (1) -- (2);
\end{tikzpicture}\quad\begin{tikzpicture}
\node[inner sep=2pt,circle,draw,thick,fill=gray,label=-90:{\small $9$}] (0) at (0,0) {};
\node[inner sep=2pt,circle,draw,thick] (1) at (0,0.5) {};
\node[inner sep=2pt,circle,draw,thick] (2) at (0,1) {};
\draw[thick] (0) -- (1) -- (2);
\end{tikzpicture}\quad\begin{tikzpicture}
\node[inner sep=2pt,circle,draw,thick,fill=gray,label=-90:{\small $10$}] (0) at (0,0) {};
\node[inner sep=2pt,circle,draw,thick] (1) at (-0.25,0.5) {};
\node[inner sep=2pt,circle,draw,thick] (2) at (0.25,0.5) {};
\draw[thick] (2) -- (0) -- (1);
\end{tikzpicture}\quad\begin{tikzpicture}
\node[inner sep=2pt,circle,draw,thick,fill=gray,label=-90:{\small $11$}] (0) at (0,0) {};
\node[inner sep=2pt,circle,draw,thick] (1) at (0,0.5) {};
\draw[thick] (0) -- (1);
\end{tikzpicture}\quad\begin{tikzpicture}
\node[inner sep=2pt,circle,draw,thick,fill=gray,label=-90:{\small $12$}] (0) at (0,0) {};
\node[inner sep=2pt,circle,draw,thick] (1) at (-0.25,0.5) {};
\node[inner sep=2pt,circle,draw,thick] (11) at (-0.25,1) {};
\node[inner sep=2pt,circle,draw,thick] (2) at (0.25,0.5) {};
\draw[thick] (2) -- (0) -- (1) -- (11);
\end{tikzpicture}\quad\begin{tikzpicture}
\node[inner sep=2pt,circle,draw,thick,fill=gray,label=-90:{\small $13$}] (0) at (0,0) {};
\node[inner sep=2pt,circle,draw,thick] (1) at (0,0.5) {};
\draw[thick] (0) -- (1);
\end{tikzpicture}\quad\begin{tikzpicture}
\node[inner sep=2pt,circle,draw,thick,fill=gray,label=-90:{\small $14$}] (0) at (0,0) {};
\node[inner sep=2pt,circle,draw,thick] (1) at (-0.25,0.5) {};
\node[inner sep=2pt,circle,draw,thick] (2) at (0.25,0.5) {};
\draw[thick] (2) -- (0) -- (1);
\end{tikzpicture}\quad\begin{tikzpicture}
\end{tikzpicture}\]

Euler tours of the undecorated trees are then used to generate equivalent Dyck words corresponding to the arithmetic sequence $(1, 2, 3, \ldots)$; the sequence of the first fourteen such Dyck words is therefore
\begin{equation*}
(\epsilon, 10, 10, 1100, 10, 1010, 10, 1100, 1100, 1010, 10, 110010, 10, 1010),
\end{equation*}
with the absence of positional information of the primes in the undecorated trees being seen in their corresponding Dyck-word encodings.

These papers are clearly relevant to questions concerning learnability in artificial intelligence; \cite{contucci2025statistical} primarily characterizes the trees and corresponding Dyck words from a statistical perspective and speaks of a ``a corpus of planar rooted trees equivalently represented as Dyck words,'' while \cite{breccia2025testing} describes an experiment where ``a transformer network [$\ldots$] is trained from [the corpus] to subsequently test its predictive ability under next-word and masked-word prediction tasks.''

The results in  \cite{breccia2025testing} document partial learning of the grammar underlying the corpus, ``capturing non-trivial regularities and correlations,'' suggesting that ``learnability may extend beyond empircal data to the very structure of arithmetic.''  This leads me to wonder what results would come from similar experiments using corpora from lossless trees arising from the definition of the set of superrational numbers $\mathbb{S}$ as developed in Section~\ref{loc:AlgebraicDefOfS}.

I provide here two decorated trees side-by-side, the first from \cite{breccia2025testing} and the second arising from the superrational interpretation.  (For clarity, both depictions omit nodes with labels of 1 as children of nodes 2 and 5.) Their comparison suggests the idea underlying trees as implied by the sets defined in Section~\ref{loc:AlgebraicDefOfS} and realized in Definition~\ref{def:Nonsurjective_Superrational_Spelling} on page~\pageref{def:Nonsurjective_Superrational_Spelling}; it also illustrates why trees derived from the superrational interpretation are not lossy.

\begin{center}
\begin{tikzpicture}[thick]

\begin{scope}[xshift=-3cm]

  \node[circle,draw,fill=gray!63,inner sep=1pt] (L0) at (0,0) {\small $10$};

  \node[circle,draw,inner sep=1pt] (L1) at (-0.6,0.8) {\small $2$};
  \node[circle,draw,inner sep=1pt] (L2) at ( 0.6,0.8) {\small $5$};

  \draw (L1) -- (L0) -- (L2);
\end{scope}

\begin{scope}[xshift=3cm]
  \node[circle,draw,fill=gray!63,inner sep=1pt] (R0) at (0,0) {\small $10$};

  \node[circle,draw,inner sep=1pt] (R1) at (-1.0,0.8) {\small $2$};
  \node[circle,draw,inner sep=1pt] (R2) at ( 0.0,0.8) {\small $1$};
  \node[circle,draw,inner sep=1pt] (R3) at ( 1.0,0.8) {\small $5$};

  \draw (R1) -- (R0) -- (R2);
  \draw (R0) -- (R3);

\end{scope}

\end{tikzpicture}
\end{center}

From the two trees, which both represent the semiprime 10, we see the obvious distinction between them is that the tree on the left has no branch corresponding to $3^0 = 1$, because 3 is not a factor in the prime factorization of 10.   In other words, recursive factorization trees exemplified by the one shown on the left start from the accepted definition of prime factorization, extending that definition to yield a recursive system, whereas recursive factorization trees exemplified by the one on the right start with an \emph{extension} of the definition of prime factorization, with a further extension to yield recursivity.  Trees  from the superrational interpretation  incorporate sufficient positional information to identify which primes are being represented, at the cost of departing from the accepted definition of prime factorization.

Euler tours of trees derived from the superrational interpretation thus result in unique Dyck-word encodings of the nonnegative integers; the first nine are
\begin{equation*}
\epsilon, 10, 1100, 101100, 111000, 10101100, 11001100, 1010101100, 11011000.
\end{equation*}
\begin{remark}
Observe that 0 now has a representation, namely the empty string $\epsilon$.  This is a consequence of defining the superrational numbers as the closure of $\{0\}$ under prime-base exponentiation, multiplication and negation.
\end{remark}

I do not know what the effect of using such a Dyck word sequence to provide a corpus would be upon learnability.

On the one hand, the corpus described in \cite{contucci2025statistical,breccia2025testing} implies an equivalence relation, such that there is a partition where the primes are in one equivalence class, square-free semiprimes are in another, and sphenic numbers are in yet another, with these three equivalence classes only being examples.  Perhaps a corpus using Dyck words arising from the superrational representation would result in  \emph{reduced} learnability relative to the existing proposed benchmark due to obfuscation of the partition.  Also, the use of 10 to invest the representation with positional information comes at the cost of greatly increasing word size; the corpus would have to contain many more 1s and 0s to represent the same number of words as a lossy one.

On the other hand, I cannot rule out the possibility that even a corpus containing fewer Dyck words derived from the superrational interpretation would increase learnability due to its capture of any or all of the following under the same system of representation:
\begin{itemize}
  \item numerical identities of primes in the representation, as well as
  \item exact and unique representations of not only positve integers, but also
     \begin{itemize}
       \item zero (although zero, being represented by $\epsilon$, could not explicitly appear in the corpus),
       \item rational numbers,
       \item some algebraic irrationals, such as $\sqrt{2}$, and even
       \item some transcendental numbers, such as $2^{\sqrt{2}}$.
     \end{itemize}
\end{itemize}
 
Perhaps learnability could be further increased by training an LLM on both corpora.
\begin{remark}
Together with lossy factorization trees, the superrational interpretation implies a hierarchy of Dyck-word corpora with precisely defined arithmetic rules, with the information in the corpora increasing incrementally up the hierarchy. Seen in that light, I wonder whether such corpora might be used to supply a controlled environment for introducing incremental changes in a model’s internal statistics and comparing the results through increasing levels of arithmetic structure.
\end{remark}

\subsection{Can We Use Grammar-based Compression to Detect and Identify Patterns in RPF Word Sequences?}
Grammar-based compression algorithms such as Re-Pair \cite{bille2017practical} produce a context-free grammar for the string being compressed.  Input with lower information entropy (i.e., input that is less random) will have a higher compression ratio than input with higher information entropy. Suppose we  have two samples of input, these being of equal size but with one consisting of a concatenation of randomly-chosen Dyck superrationals, the other consisting of a concatenation of Dyck superrationals we conjecture to be consecutive terms in a sequence manifesting some pattern.  If the compression ratio obtained for the second input string is greater than that obtained for the first, we might take this as corroborating our conjecture.  If we find the difference  in compression ratios becomes more pronounced as the size of the input increases, we might take that to be an even stronger indication our conjecture is correct.

\subsection{Characterize Recursive Prime Factorizations Recognizable by Finite Automata}
Although $\mathcal{D}_{r_{\text{min}}}$ and $\mathcal{D}_{r_{\text{qmin}}}$ are not regular, many interesting subsets of them are.  For example, the set of square-free semiprimes (see below) is of particular importance in cryptography, being the modulus in the RSA public-key cryptosystem \cite[Chapter~8]{menezes2018handbook}.

We can employ regular expressions to describe such subsets.  In order to avoid confusion between grouping parentheses and symbols in the alphabet, at times we will encode nonzero RPF words in binary, using 1 for ${'('}$ and 0 for ${')'}$.  Under this scheme, of course, we will not be able to represent the empty word $\upepsilon$, for which there is no binary encoding, although we could overcome that by employing base-3 instead of base-2.

In all cases shown here, the correspondence of regular expressions to numerical sequences can readily be demonstrated by applying one or more of seven factual statements, each established earlier in the paper:

\begin{itemize}
\item Every $n \in \mathbb{N}$ is also a member of $\mathbb{S}$, with its minimal natural representation being identical to its minimal superrational representation.  For example, 8 is minimally represented by $(()(()))$ in both the natural and superrational numeral systems.
\item The $k$th chunk in a Dyck superrational word corresponds to the exponential factor with prime base $p_k$ for $k \in \{ 1,2,3, \ldots \}$.  Furthermore, the exponential factor is equal to $p_k^m$, where $m$ is the evaluation of the content of the chunk.  Thus, for example, the chunk $(())$ in  $()()(())$ corresponds to $p_3^{j}$, where ${j}$ is the number represented by ${()}$.
  \item 0 is represented by $\epsilon$; therefore, $() = (\epsilon) = 1$.
  \item Since any nonzero real number raised to the power 0 equals 1,  any noninflationary chunk equal to $()$ represents 1.
  \item Since any number raised to the power 1 equals that number, any chunk equal to $(())$ represents the prime number $p_k$, where $k$ is the position of the chunk in the word.  For example, $()()()(())$ represents $p_4 = 7$. 
  \item Since $()()$ represents $-1$, any chunk equal to $(()())$ represents the number $p_k^{-1}$, where $k$ is the position of the chunk in the word.  For example, $()()(()())$ represents $p_3^{-1} = 5^{-1} = 0.2$.
  \item Each of the six facts listed above recursively applies not only to Dyck words, but to their contents.
\end{itemize}

Here are some observations that follow.  As mentioned in the itemized list above, those which refer to sets of natural numbers could just as well be stated with reference to sets of superrational numbers, since natural numbers have identical representions in both minimal interpretations.

\begin{itemize}
\item  (10)*1100 recognizes the set of Dyck naturals representing prime numbers.  If words are placed in ascending order by semilength, the result encodes the sequence of prime numbers (OEIS A000040)  \cite{oeisA000040}.
\item  (10)*1100(10)*1100 recognizes the set of Dyck naturals representing square-free semiprime numbers.  If words are placed in ascending order by their arithmetic evaluations, the result encodes the sequence of square-free semiprimes  (OEIS A006881) \cite{oeisA006881}.
\item 10\textbar(1100)+ recognizes the set of Dyck naturals representing the set of primorial numbers $\{1, p_1,p_1p_2,p_1p_2p_3, \ldots\}$.  If words are placed in ascending order by semilength, the result encodes the sequence of primorials (OEIS A002110) \cite{oeisA002110}. 
\item (10)*11101000 recognizes the set of Dyck superrationals respresenting square roots of prime numbers.  If words are placed in ascending order by semilength, the result encodes the sequence of square roots of primes $(\sqrt{2}, \sqrt{3}, \sqrt{5}, \sqrt{7}, \ldots)$.
\end{itemize}

Let $k$ be a positive integer.  What is the characterization of the set $S_{k} \subset \mathbb{S}$ such that each member $s$ of $S_{k}$  is represented according to $\upalpha_{\mathbb{S}_{r_{\text{min}}}}$ by a word $w$ in some regular subset $D \subset \mathcal{D}_{r_{\text{qmin}}}$,  where no spelling $\upgamma_{\mathbb{S}_{r}}(s)$  has a recursion chain of length greater than $k$?

\section*{Acknowledgements}

The author thanks Dr.~Brian M.~Scott for his assistance in identifying the state transition table for the DFA used in Theorem~\ref{thm:RPFChomskySch}, as well as Kim Childress for carefully reading the Introduction and pointing out several errors in an earlier version of this paper. The author is also grateful to his wife, Runjuan, for her patience and support during the preparation of this work.
\pagebreak
\bibliographystyle{acm}
\bibliography{MathPaperReferences}
\end{document}